\documentclass[journal,twoside,web]{ieeecolor}
\usepackage{generic}
\usepackage{cite}
\usepackage{amsmath,amssymb,amsfonts}
\usepackage{multirow}
\usepackage{algorithmic}
\usepackage{graphicx}
\usepackage{diagbox}
\usepackage{algorithm,algorithmic}
\usepackage{hyperref}
\hypersetup{hidelinks=true}
\usepackage{arydshln}
\usepackage{textcomp}
\usepackage{tabu}
\usepackage{rotating}
\hypersetup{hidelinks=true}
\newtheorem{definition}{Definition}
\newtheorem{proposition}{Proposition}
\newtheorem{remark}{Remark}
\newtheorem{lemma}{Lemma}
\newtheorem{corollary}{Corollary}
\newtheorem{theorem}{\textbf{Theorem}}

\def\BibTeX{{\rm B\kern-.05em{\sc i\kern-.025em b}\kern-.08em
    T\kern-.1667em\lower.7ex\hbox{E}\kern-.125emX}}
\begin{document}
\title{Privacy-Aware State Estimation: From Coarse to Precise Privacy Protection}
\author{Zhongyao Hu, Jason J. R. Liu, \IEEEmembership{Senior Member, IEEE}, Jun Shang, \IEEEmembership{Senior~Member, IEEE},\\Zhan Shu, \IEEEmembership{Senior Member, IEEE}
\thanks{Manuscript received xxx; accepted xxx. Date of publication xx; date of current version xx.}
\thanks{This paper was recommended by Associate Editor xxx.}
\thanks{This work is supported in part by the National Natural Science Funds of China under Grant xxx, and in part by the xxx under Grant xxx. (Corresponding author: Jason J. R. Liu.)}% <-this % stops a space
\thanks{Zhongyao Hu is with the College of Information Engineering, Zhejiang University of Technology, Hangzhou 310023, China. (email: zyhu98@zjut.edu.cn).}
\thanks{Jason J. R. Liu is with the Department of Electromechanical Engineering, University of Macau, Macau, China (email: jasonliu@um.edu.mo).}
\thanks{Jun Shang is with the Department of Control Science and Engineering, Shanghai Institute of Intelligent Science and Technology, State Key Laboratory of Autonomous Intelligent Unmanned Systems, and Frontiers Science Center for Intelligent Autonomous Systems, Tongji University, Shanghai 200092, China (e-mail: shangjun@tongji.edu.cn).}
\thanks{Zhan Shu is with the Department of Electrical and Computer Engineering, University of Alberta, Edmonton, Alberta, T6G 1H9, Canada (email: zshu1@ualberta.ca).}}
\maketitle

\begin{abstract}
This paper addresses the problem of achieving both coarse and precise privacy in state estimation. Coarse privacy forces the eavesdropper's total mean-square error (MSE) to infinity, but errors along certain confidential directions may remain bounded. This motivates precise privacy, which additionally drives the MSE along prescribed directions to infinity. For coarse privacy, an analytical transformation is established, preserving the user's optimality and driving the eavesdropper's total MSE to infinity at a polynomial-exponential rate. A stochastic intermittent encryption scheme is further developed, and an explicit lower bound on the encryption probability is derived to guarantee divergence. For precise privacy, by analyzing the behavior of the Riccati equation on the unobservable subspace, we prove that the eavesdropper's directional MSE becomes unbounded if and only if the direction's unstable component lies outside the observable subspace. Finally, a systematic method is proposed to exclude target vectors from the observable subspace, forcing the directional MSE to infinity.
%This paper addresses privacy-aware state estimation with dual objectives: coarse privacy, forcing the eavesdropper's total mean-square error (MSE) to infinity, and precise privacy, driving the MSE along specific directions to infinity. A full-row-rank linear transformation is employed to compress and reorganize critical measurement information, enabling selective encryption of partial dimensions for privacy preservation. For coarse privacy, an analytic linear transformation constructed via matrix decomposition maintains the user's estimation optimality while compromising the eavesdropper's detectability. Monotonicity analysis proves the eavesdropper's total MSE grows unbound under this undetectability condition. For precise privacy, asymptotic analysis based on optimal filtering theory establishes that the eavesdropper's directional MSE becomes unbounded if and only if the direction's unstable component lies outside the observable subspace. This work establishes a theoretical foundation for hierarchical privacy preservation in state estimation with practical efficiency.
\end{abstract}

\begin{IEEEkeywords}
State estimation, Kalman filtering, Riccati equation, observability, privacy protection.
\end{IEEEkeywords}

\section{Introduction}
Cyber-Physical Systems (CPSs), the backbone of Industry 4.0, depend on accurate state estimation for critical infrastructure operation. However, the open network environment exposes CPSs' data to illegal eavesdropping during information exchange \cite{Hassan8854247}. Once sufficient information is obtained, eavesdroppers can infer confidential information and design highly threatening cyberattacks to disrupt CPSs' functionality \cite{Yan10648957}. Consequently, designing privacy-aware state estimation methods to prevent eavesdroppers from acquiring sensitive system information has become an urgent priority.

Contaminating messages with artificial noise is one of the most common privacy protection strategies \cite{Le6606817,Degue9993779,Wang9678137,Leong8550317}. While this method reduces the signal-to-noise ratio of intercepted messages, it also degrades performance for legitimate users. Although the authors in \cite{Leong8550317} claimed that the user can receive intact messages if the added artificial noise lies in the null space of the channel, practical channel randomness often invalidates this condition.  Additionally, the artificial packet dropping strategy reduces the eavesdropper's information by discarding data packets \cite{Du8899554,Zhao9601251,TSIAMIS20178385,Leong8543618}. Random dropping can force the eavesdropper's mean-square error (MSE) to infinity, but it typically requires a better legitimate channel~\cite{Du8899554,Zhao9601251,TSIAMIS20178385}. A packet scheduling strategy was designed in \cite{Leong8543618} to balance performance between the user and the eavesdropper, but introducing additional communication overhead. 
%Similar to the artificial noise strategy, the packet dropping strategy also degrades users' estimation performance.
Another common privacy-preserving strategy is dynamic encoding, which incorporates system dynamics into data encoding \cite{Tsiamis8758381,Kennedy10491308,Lücke9762536,Marelli10684094,CRIMSON2025111932,YANG2020109116,AN2022110087}. A state-secrecy encoding scheme was developed in \cite{Tsiamis8758381,Kennedy10491308,Lücke9762536,Marelli10684094}, injecting historical data into current data to drive eavesdroppers' MSE toward open-loop levels. However, these methods require feedback communication, increasing communication burden \cite{Tsiamis8758381,Kennedy10491308,Lücke9762536,Marelli10684094}. The authors of \cite{CRIMSON2025111932} and \cite{YANG2020109116} introduced deceptive encoding to mislead the eavesdropper, but statistical inconsistencies in encoded data may allow inference through statistical testing. Moreover, two encoding methods were designed in \cite{AN2022110087} for distributed estimation to conceal the system's state from eavesdroppers without compromising estimation accuracy. Unlike the above strategies, cryptographic methods prevent privacy leakage by encrypting messages with keys while allowing legitimate users to reconstruct original data \cite{Shim7172449}. However, they require tedious manipulation to prevent ciphertext deciphering \cite{MISHRA2024101037}. To reduce encryption consumption, \cite{WANG2022110145,HUANG2021109537,Tao10621055} designed encryption scheduling algorithms, proving that optimal scheduling compacts encryption moments. However, encrypting the entire message creates instantaneous burdens. The authors of \cite{Shang9882330} and \cite{Zou10782997} proposed a partial encryption scheme to alleviate this instantaneous burden by encrypting only a part of measurements per moment. 
%Following this idea, the authors of \cite{Zou10782997} designed partial encryption for systems with packet loss. 

Existing encryption-based methods predominantly aim to maximize the eavesdropper's total MSE. However, the sensitivity of state variables is not uniform; it is intrinsically linked to specific directions that correspond to critical physical or operational variables (e.g., displacement along critical axes). A critical limitation persists as even with unbounded total MSE, the eavesdropper's directional MSE may remain finite, enabling confidential information inference. Actually, this limitation stems from their coarse design, which maximizes total MSE without distinguishing critical from non-critical directions. To bridge this gap, we employ a full-row-rank linear transformation to compress measurements such that critical information concentrates in leading dimensions, enabling directional encryption. 
The key contributions are:
%To bridge this gap, this paper proposes a directional encryption method through measurement reorganization. 
\begin{itemize}
    \item Through invariance analysis of the Riccati equation, we derive an analytical linear transformation (Corollary~1) that not only achieves lossless measurement compression preserving the user's optimality (Theorem~1), but also destroys the eavesdropper's detectability through directional encryption (Proposition~1).
    
    \item By exploiting the monotonicity of the Riccati equation, we prove that the eavesdropper's total MSE grows unbounded (Theorem 2), and we characterize its growth rate in terms of the encrypted unstable modes (Proposition 2). Leveraging this rate characterization, an intermittent scheme is designed to reduce encryption frequency, together with a lower bound on the encryption probability that still guarantees unboundedness (Proposition 3).
    
    \item We establish that the eavesdropper's MSE in a specific direction becomes unbounded if and only if the unstable component of that direction lies outside the eavesdropper's observable subspace (Propositions 4 and 5, and Theorem 3). A method is developed to exclude arbitrary vectors from the observable subspace, enabling the analytical design of the linear transformation (Corollary 2).
    %\item The orders of generalized eigenvectors are shown to represent information depth within system dynamics (e.g., velocity derivable from displacement implies deeper information depth for velocity). Rigorous analysis reveals a fundamental property: deeper information requires encrypting more measurement dimensions.
\end{itemize}

\textbf{Notations:} Let $\mathbb{R}^{n\times m}$ denote the space of $n\times m$ real matrices, and $\mathbb{R}^n$ the $n$-dimensional Euclidean space. The $n\times n$ identity matrix and the $n\times m$ zero matrix are denoted by $I_n$ and $0_{n\times m}$ (or $0_n$ if $n=m$), respectively; the subscripts are omitted when the dimensions are clear from the context. The canonical basis vector $e_{n,i} \in \mathbb{R}^{n}$ is a vector with $1$ at the $i$-th entry and $0$ elsewhere. The direct sum and the Kronecker product of matrices are denoted by $\oplus$ and $\otimes$, respectively. For a matrix, let $\mathrm{Tr}(\cdot)$ denote the trace, $\|\cdot\|$ the operator norm, $\mathrm{R}(\cdot)$ the range, $\mathrm{N}(\cdot)$ the null space, and $\rho(\cdot)$ the spectral radius. For symmetric matrices $X, Y \in \mathbb{R}^{n\times n}$, $X>Y$ (resp. $X\geq Y$) means that $X-Y$ is positive definite (resp. positive semi-definite). $\lambda_{\min}(\cdot)$ denotes the minimum eigenvalue of a symmetric matrix. Horizontal concatenation of matrices is written as $[\cdots, \cdots, \cdots]$ and vertical concatenation as $[\cdots; \cdots; \cdots]$. For an index set $\mathbb{S}=\{s_1,s_2,\cdots,s_r\}$ with $1\leq s_1 < s_2 < \cdots < s_r \leq m$ and a matrix $X = [x_1, x_2, \cdots, x_m] \in \mathbb{R}^{n\times m}$, the column selection operation is defined as $[X]_{\mathbb{S}} \triangleq [x_{s_1}, x_{s_2}, \cdots, x_{s_r}]$. The singular value decomposition (SVD) of $X$ is written as $X \stackrel{\mathrm{svd}}{=} U\Sigma V^{\mathrm T}$. The cardinality of a finite set $\mathbb{S}$ is denoted by $|\mathbb{S}|$, and the mathematical expectation by $\mathrm{E}[\cdot]$. $\delta_{ij}$ denotes the Kronecker delta, i.e., $\delta_{ij}=1$ if $i=j$ and $\delta_{ij}=0$ otherwise.

For matrices $A$, $B$, $C$, $Q>0$, $S$, $R>0$, and $X\geq0$ with compatible dimensions, we define the operators 
\begin{equation*}\begin{aligned}
\mathfrak{R}(X;A,B,C,Q,S,R)&\triangleq AXA^\mathrm{T}+BQB^\mathrm{T}\\
-(AXC^\mathrm{T}+BS)&(CXC^\mathrm{T}+R)^{-1}(AXC^\mathrm{T}+BS)^\mathrm{T},\\
\end{aligned}\end{equation*}
\begin{equation*}\begin{aligned}
\mathfrak{K}(X;A,B,C,S,R)&\triangleq(AXC^\mathrm{T}+BS)(CXC^\mathrm{T}+R)^{-1},\\
\mathfrak{L}(X;A,B,Q)&\triangleq AXA^\mathrm{T}+BQB^\mathrm{T}.
\end{aligned}\end{equation*}
The $k$-fold composition of the operators $\mathfrak{R}$ and $\mathfrak{L}$ with respect to $X$ are defined as $\mathfrak{L}_k(X;A,B,Q)$ and $\mathfrak{R}_k(X;A,B,C,Q,S,R)$, respectively. By convention, $\mathfrak{L}_0(X;A,B,Q)\triangleq X$ and $\mathfrak{R}_0(X;A,B,C,Q,S,R)\triangleq X$. If $C=0$, $S=0$, and $R=0$, we define  
\begin{equation*}\begin{aligned}
\mathfrak{R}(X;A,B,0,Q,0,0)&\triangleq \mathfrak{L}(X;A,B,Q),\\
\mathfrak{K}(X;A,B,0,0,0)&\triangleq 0.
\end{aligned}\end{equation*}

Moreover, the controllability matrix, observability matrix, and Topeliz matrix are defined as 
\begin{equation*}\begin{aligned}
\mathfrak{C}_n(A,B)&\triangleq 
\left[A^{n-1}B,A^{n-2}B,\cdots,B\right],\\
\mathfrak{O}_n(C,A)&\triangleq \left[
C;CA;\cdots;CA^{n-1}
\right],\\
\mathfrak{T}_n(A,B,C,D)&\triangleq
\begin{bmatrix}
D&\cdots&\cdots&\cdots&0\\
CB&\ddots&0&0&\vdots\\
CAB&CB&\ddots&0&\vdots\\
\vdots&\ddots&\ddots&\ddots&\vdots\\
CA^{n-2}B&\cdots &CAB&CB&D
\end{bmatrix}.
\end{aligned}\end{equation*}

\section{Problem Formulation}
\subsection{System Description}
Consider the discrete-time linear system
\begin{equation}
\left\{ \begin{array}{l}
x(k+1)=Ax(k)+Bw(k),\\
z(k)=Cx(k)+v(k),
\end{array} \right.
\end{equation}
where $A\in\mathbb{R}^{n\times n}$, $B\in\mathbb{R}^{n\times l}$, $C\in\mathbb{R}^{m\times n}$, $x(k)\in\mathbb{R}^n$ and $z(k)\in\mathbb{R}^m$ are respectively the state and measurement, $\{w(k)\}_{k\geq0}$ and $\{v(k)\}_{k\geq0}$ are Gaussian white noise sequences. The initial state $x(0)$ is Gaussian with mean $0$, $\rho(A)\geq1$, $(A,B)$ is stabilizable and $(C,A)$ is detectable. The second moments of the initial state and noises are given by 
\begin{equation*}\begin{aligned}
\mathrm{E}\left[\begin{bmatrix}
{x(0)}\\
{w(i)}\\
{v(i)}
\end{bmatrix}\begin{bmatrix}
{x(0)}\\
{w(j)}\\
{v(j)}
\end{bmatrix}^{\mathrm{T}}\right]=
\begin{bmatrix}
{P(0)} & {0} & {0}\\
{0} & {\delta_{ij}Q} & {\delta_{ij}S}\\
{0} & {\delta_{ij}S^{\mathrm{T}}} & {\delta_{ij}R}
\end{bmatrix},
\end{aligned}\end{equation*}
where $P(0)\geq0$ and $\bigl[\begin{smallmatrix} Q & S \\ S^{\mathrm{T}} & R \end{smallmatrix}\bigr] > 0$.

According to linear system theory \cite{chen1984linear}, one can perform a controllable decomposition on the system (1) to obtain 
\begin{equation*}\begin{aligned}
&A_1=T_1^{-1}AT_1=
\begin{bmatrix}
{A_\mathrm{c}} & {A_{\prime}}\\
{0} & {A_{\bar{\mathrm{c}}}}
\end{bmatrix},\\ 
&B_1=T_1^{-1}B=\begin{bmatrix}
{B_\mathrm{c}}\\
{0}
\end{bmatrix},\ C_1=CT_1,
\end{aligned}\end{equation*}
where $A_\mathrm{c}\in\mathbb{R}^{n_\mathrm{c}\times n_\mathrm{c}}$, $A_{\bar{\mathrm{c}}}\in\mathbb{R}^{n_{\bar{\mathrm{c}}}\times n_{\bar{\mathrm{c}}}}$, $B_\mathrm{c}\in\mathbb{R}^{n_\mathrm{c}\times l}$, $n=n_\mathrm{c}+n_{\bar{\mathrm{c}}}$, $n_\mathrm{c}=\mathrm{rank}(\mathfrak{C}_n(A,B))$, $(A_\mathrm{c},B_\mathrm{c})$ is reachable, and $\rho(A_{\bar{\mathrm{c}}})<1$. 

Denote the real Jordan canonical form \cite{horn2012matrix} of $A_\mathrm{c}$ as $T_2^{-1}A_\mathrm{c}T_2=J=J_1\oplus J_2\oplus\cdots\oplus J_r$, where each real Jordan block $J_i$ is structured as 
\begin{equation}
J_i=
\begin{bmatrix}
\Lambda_i   &I          &0      &0          \\
0     &\Lambda_i        &\ddots &0          \\
0     &0          &\ddots &I          \\
0     &0          &0      &\Lambda_i
\end{bmatrix}\in\mathbb{R}^{d_i\times d_i},
\end{equation}
and $r$ and $d_i$ are constants determined by Weyr characteristics of $A_\mathrm{c}$. In the real Jordan canonical form, $\Lambda_i=\lambda_i$ if $J_i$ associates with a real eigenvalue $\lambda_i$, and $\Lambda_i=\left[\begin{smallmatrix}
\operatorname{Re}(\lambda_i) & \operatorname{Im}(\lambda_i) \\
-\operatorname{Im}(\lambda_i) & \operatorname{Re}(\lambda_i)
\end{smallmatrix}\right]$ if $J_i$ associates with a pair of conjugate complex eigenvalues $\lambda_i$ and $\bar{\lambda}_i$\footnote{For a complex matrix $X$, $\bar{X}$, $\mathrm{Re}(X)$, and $\mathrm{Im}(X)$ denote its conjugate matrix, real part, and imaginary part, respectively}. Since the real Jordan blocks can appear in any order utilizing a block permutation similarity, it is assumed that $\rho(J_1)\geq\cdots\geq\rho(J_{r_\mathrm{u}})\geq1>\rho(J_{r_\mathrm{u}+1})\cdots\geq\rho(J_r)$, where $r_\mathrm{u}$ is the number of unstable real Jordan blocks. 

Define $A_\mathrm{u}\triangleq J_1\oplus\cdots\oplus J_{r_\mathrm{u}}$ and $A_\mathrm{s}\triangleq J_{r_\mathrm{u}+1}\oplus\cdots\oplus J_r$, one has  
\begin{equation}\begin{aligned}
&A_2=(T_2^{-1}\oplus I)A_1(T_2\oplus I)=
\begin{bmatrix}
{A_\mathrm{u}} & {0} & {A_{\prime1}}\\
{0} & {A_\mathrm{s}} & {A_{\prime2}}\\
{0} & {0} & {A_{\bar{\mathrm{c}}}}
\end{bmatrix},\\
&B_2=(T_2^{-1}\oplus I)B=\begin{bmatrix}
{T_2^{-1}B_\mathrm{c}}\\
{0}
\end{bmatrix}=\begin{bmatrix}
{B_\mathrm{u}}\\
{B_\mathrm{s}}\\
{0}
\end{bmatrix},\\
&C_2=C_1(T_2\oplus I)=\begin{bmatrix}
{C_\mathrm{u}} & {C_\mathrm{s}} & {C_{\bar{\mathrm{c}}}}
\end{bmatrix},
\end{aligned}\end{equation}
where $A_\mathrm{u}\in\mathbb{R}^{n_\mathrm{u}\times n_\mathrm{u}}$, $A_\mathrm{s}\in\mathbb{R}^{n_\mathrm{s}\times n_\mathrm{s}}$, $B_\mathrm{u}\in\mathbb{R}^{n_\mathrm{u}\times l}$, $B_\mathrm{s}\in\mathbb{R}^{n_\mathrm{s}\times l}$,  $C_\mathrm{u}\in\mathbb{R}^{m\times n_\mathrm{u}}$, $C_\mathrm{s}\in\mathbb{R}^{m\times n_\mathrm{s}}$, $C_{\bar{\mathrm{c}}}\in\mathbb{R}^{m\times n_{\bar{\mathrm{c}}}}$, and $n_\mathrm{c}=n_\mathrm{u}+n_\mathrm{s}$. Since invertible linear transformations do not change the reachability, $\big([\begin{smallmatrix}
A_{\mathrm{u}} & 0\\
0 & A_{\mathrm{s}}
\end{smallmatrix}],
[\begin{smallmatrix}
B_{\mathrm{u}}\\
B_{\mathrm{s}}
\end{smallmatrix}]\big)$ is still reachable. 

Unless explicitly stated, it is assumed that the system (1) has been transformed into the form shown in (3), i.e., $A=A_2$, $B=B_2$, and $C=C_2$. 

%\begin{remark}
%By means of the controllable decomposition, the reachable Gramian on the controllable subspace is symmetric positive definite. Meanwhile, the unstable modes of the system can be explicitly represented through the real Jordan decomposition. These two properties enable us to derive explicit theoretical results.
%%which makes it easier to deflate Lyapunov and Riccati difference equations (Propositions 4 and 5, Theorem 4). Meanwhile, the unstable modes of the system can be explicitly represented through the real Jordan decomposition, thus showing more clearly the relationship between the measurement and unstable modes (Proposition 3, Theorems 3 and 4). These two decompositions will help us to analyze the effect of unstable modes of the system on the MMSE estimate.
%\end{remark}
\subsection{Problems of Interest}
To prevent privacy leakage, messages shall be encrypted before transmitting. However, due to the high computational burden of encryption algorithms, encrypting the whole message may be unaffordable. To solve the problem, a partial encryption scheme will be utilized. Specifically, the sensor encrypts only the first $m_\mathrm{c}$ components of $L_\mathrm{u}z(k)$, where $L_\mathrm{u}\in\mathbb{R}^{(m_\mathrm{c}+m_\mathrm{e})\times m}$ has full row rank and $m_\mathrm{c}+m_\mathrm{e}\leq m$. Partition $L_\mathrm{u}=[L_\mathrm{c};L_\mathrm{e}]$, where $L_\mathrm{c}\in\mathbb{R}^{m_\mathrm{c}\times m}$ and $L_\mathrm{e}\in\mathbb{R}^{m_\mathrm{e}\times m}$. Then, the encryption can be expressed by
\begin{equation}\begin{aligned}
\mathfrak{E}_\kappa(L_\mathrm{c}z(k)),
\end{aligned}\end{equation}
where $\mathfrak{E}_\kappa(\cdot):\mathbb{R}^{m_\mathrm{c}}\to \mathbb{R}^{m_\mathrm{c}}$ is the encryption function and $\kappa$ is the key. Based on the discussion above, the transmitted message can be represented by $[\mathfrak{E}_\kappa(L_\mathrm{c}z(k));L_\mathrm{e}z(k)]$.

In this paper, all system parameters $A$, $B$, $C$, $Q$, $S$, $R$, $P(0)$, $L_\mathrm{u}$, $m_\mathrm{c}$, and $m_\mathrm{e}$ are public knowledge, known to both the user and the eavesdropper. The only parameter not available to the eavesdropper is the key $\kappa$. Moreover, the eavesdropper is assumed to be capable of intercepting the message over the sensor-to-user channel. With the key $\kappa$, the user can decrypt the ciphertext and recover the original message $L_\mathrm{u}z(k)$. In contrast, the eavesdropper, lacking $\kappa$, cannot decrypt the ciphertext portion of the intercepted message. Thus, the eavesdropper can only access the plaintext component $L_\mathrm{e}z(k)$.

Based on the discussion above, the minimum MSE (MMSE) estimate of the user and the eavesdropper can be expressed as
\begin{equation}\begin{aligned}
\hat{x}(L_\iota,k+1)&\triangleq\mathrm{E}[x(k+1)|\mathbb{Z}(L_\iota,k)],\\
\hat{P}(L_\iota,k+1)&\triangleq\mathrm{E}[\tilde{x}(L_\iota,k+1)\tilde{x}(L_\iota,k+1)^\mathrm{T}|\mathbb{Z}(L_\iota,k)],
\end{aligned}\end{equation}
where $\iota\in\{\mathrm{u},\mathrm{e}\}$, $\tilde{x}(L,k)\triangleq x(k)-\hat{x}(L,k)$, and $\mathbb{Z}(L,k)\triangleq\{Lz(0),\cdots,Lz(k)\}$. 
Particularly, when $m_\mathrm{e}=0$, no usable information is available to the eavesdropper. For ease of presentation, we define $L_\mathrm{e}=0$, $S=0$, and $R=0$ if $m_\mathrm{e}=0$. 

This paper will design the parameters $m_\mathrm{c}$, $m_\mathrm{e}$, and $L_\mathrm{u}$ to achieve two different kinds of privacy, as shown below.

\begin{definition}
We say that \textbf{\em coarse privacy} is achieved if
\begin{itemize}
    \item the estimation performance of the user is preserved, i.e., $\hat{x}(L_\mathrm{u},k)=\hat{x}(I,k)$ and $\hat{P}(L_\mathrm{u},k)=\hat{P}(I,k)$ for $k\geq1$; 
    \item and the eavesdropper's total estimation capability is completely degraded, i.e., $\lim_{k\rightarrow\infty}\mathrm{E}\left[\|\tilde{x}(L_\mathrm{e},k)\|^2\right]=\infty$.
\end{itemize}
\end{definition}

\begin{definition}
For a given nonzero vector $\varphi\in\mathbb{R}^{n}$, we say that \textbf{\em $\varphi$-precise privacy} is achieved if 
\begin{itemize}
    \item $\hat{x}(L_\mathrm{u},k)=\hat{x}(I,k)$ and $\hat{P}(L_\mathrm{u},k)=\hat{P}(I,k)$ for $k\geq1$; 
    \item and the eavesdropper's estimation capability on $\varphi^{\mathrm{T}}x(k)$ is fully destroyed, i.e., $\lim_{k\rightarrow\infty}\mathrm{E}\left[\|\varphi^{\mathrm{T}}\tilde{x}(L_\mathrm{e},k)\|^2\right]=\infty$.
\end{itemize}
\end{definition}

\begin{remark}
For system (1), confidential information often manifests as $\varphi^{\mathrm{T}}x_k$, where $\varphi$ can be interpreted as the confidential direction of the system (e.g., a single bus voltage among all grid states, or a vehicle coordinate within its full motion state). While Definition 1 ensures $\lim_{k \to \infty}\mathrm{E}\left[\|\tilde{x}(L_\mathrm{e}, k)\|^2\right]=\infty$, this may not preclude $\lim_{k\to\infty}\mathrm{E}\left[\|\varphi^{\mathrm{T}}\tilde{x}(L_\mathrm{e}, k)\|^2\right]<\infty$ for some $\varphi$. For example, $\lim_{k\to\infty}\|[k;-k]\|=\infty$ but $\lim_{k\to\infty}\|[1,1][k;-k]\|=0$. Thus, Definition 1 is not sufficient to protect specific confidential information. Definition~2 strengthens the privacy requirement by explicitly demanding the divergence of $\varphi^{\mathrm{T}}\tilde{x}(L_\mathrm{e},k)$, and it constitutes the first formal treatment of direction-specific privacy in the literature.
\end{remark}

%\begin{remark}
%The matrix $L_\mathrm{u}$ serves two key purposes: 1) it reduces the measurement dimension, lowering communication overhead; and 2) it concentrates all critical information within the first $m_\mathrm{c}$ components, allowing partial encryption.
%\end{remark}

\section{Coarse Privacy}
In this section, we discuss how to achieve coarse privacy. Based on Kalman filtering theory \cite{anderson2005optimal}, one can represent the MMSE estimators of the user and eavesdropper as 
\begin{equation}\begin{aligned}
&\hat{x}(L_\iota,k+1)\\
&\quad\quad\quad= A\hat{x}(L_\iota,k)+K(L_\iota,k)L_\iota\big(z(k)-C\hat{x}(L_\iota,k)\big),
\end{aligned}\end{equation}
\begin{equation}\begin{aligned}
&\hat{P}(L_\iota,k+1)\\
&\quad\quad\quad=\mathfrak{R}(\hat{P}(L_\iota,k);A,B,L_\iota C,Q,SL_\iota^{\mathrm{T}},L_\iota RL_\iota^{\mathrm{T}}),
\end{aligned}\end{equation}
\begin{equation}\begin{aligned}
K(L_\iota,k)=\mathfrak{K}(\hat{P}(L_\iota,k);A,B,L_\iota C,SL_\iota^{\mathrm{T}},L_\iota RL_\iota^{\mathrm{T}}),
\end{aligned}\end{equation}
where $\hat{x}(L_\iota,0)\triangleq0$, $\hat{P}(L_\iota,0)\triangleq P(0)$, and $\iota\in\{\mathrm{u},\mathrm{e}\}$. 

\subsection{Analytic parameters design}
By observing (6)--(8), one can find that in order not to affect the user, the linear transformation $L_\mathrm{u}$ should satisfy the following two matrix equations for every $P \geq 0$: 
\begin{equation}\begin{aligned}
&\mathfrak{R}(P;A,B,L_\mathrm{u}C,Q,SL_\mathrm{u}^\mathrm{T},L_\mathrm{u}RL_\mathrm{u}^\mathrm{T})\\
&\quad\quad\quad\quad\quad\quad\quad\quad=\mathfrak{R}(P;A,B,C,Q,S,R),
\end{aligned}\end{equation}
\begin{equation}\begin{aligned}
&\mathfrak{K}(P;A,B,LC,SL_\mathrm{u}^\mathrm{T},L_\mathrm{u}RL_\mathrm{u}^\mathrm{T})L_\mathrm{u}\\
&\quad\quad\quad\quad\quad\quad\quad\quad=\mathfrak{K}(P;A,B,C,S,R).
\end{aligned}\end{equation}
This can be compactly expressed as $L_\mathrm{u}\in\mathbb{L}$, where 
\begin{equation*}\begin{aligned}
\mathbb{L}\triangleq
\left\{X=L_\mathrm{u}:
\begin{array}{ll}
L_\mathrm{u}\ \text{has full row rank}\\
\text{(9) and (10) hold for every } P\geq0
\end{array}
\right\}.
\end{aligned}\end{equation*}
The following proposition will give an analytical subset of $\mathbb{L}$.

\begin{theorem}
\textit{Let $R^{-\frac{1}{2}}D\overset{\mathrm{svd}}{=}U\Sigma V^\mathrm{T}$, where $D\triangleq[C,(BS)^{\mathrm{T}}]$, $U\in\mathbb{R}^{m\times m}$, $\Sigma\in\mathbb{R}^{m\times 2n}$, and $V\in\mathbb{R}^{2n\times 2n}$. Then, $L_\mathrm{u}\in\mathbb{L}$ if $\mathrm{rank}(D)\leq m_{\mathrm{c}}+m_{\mathrm{e}}\leq m$ and 
\begin{equation}\begin{aligned}
L_\mathrm{u}=X\begin{bmatrix}
{I_{\operatorname{rank}(D)}} & {0}\\
{0} & {Y}
\end{bmatrix}(R^{\frac{1}{2}}U)^{-1},
\end{aligned}\end{equation}
where $X\in\mathbb{R}^{(m_\mathrm{c}+m_\mathrm{e})\times(m_\mathrm{c}+m_\mathrm{e})}$ is an arbitrary invertible matrix, and $Y\in\mathbb{R}^{(m_\mathrm{c}+m_\mathrm{e}-\mathrm{rank}(D))\times (m-\mathrm{rank}(D))}$ is an arbitrary full-row-rank matrix.}
\end{theorem}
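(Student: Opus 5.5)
The plan is to reduce both (9) and (10) to a single algebraic identity: the ``information'' carried by the measurement through $D=[C,(BS)^{\mathrm T}]$ is preserved by $L_\mathrm{u}$. Both sides of (9) and (10) depend on the measurement only through $C$, $BS$ and $R$, and only in the forms
\begin{equation*}
M(P)\triangleq APC^{\mathrm T}+BS,\qquad N(P)\triangleq CPC^{\mathrm T}+R .
\end{equation*}
Replacing $C,S,R$ by $L_\mathrm{u}C, SL_\mathrm{u}^{\mathrm T}, L_\mathrm{u}RL_\mathrm{u}^{\mathrm T}$ turns these into $M(P)L_\mathrm{u}^{\mathrm T}$ and $L_\mathrm{u}N(P)L_\mathrm{u}^{\mathrm T}$. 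So (9) is equivalent to
\begin{equation*}
M L_\mathrm{u}^{\mathrm T}(L_\mathrm{u}NL_\mathrm{u}^{\mathrm T})^{-1}L_\mathrm{u}M^{\mathrm T}=MN^{-1}M^{\mathrm T},
\end{equation*}
and (10) is equivalent to
\begin{equation*}
M L_\mathrm{u}^{\mathrm T}(L_\mathrm{u}NL_\mathrm{u}^{\mathrm T})^{-1}L_\mathrm{u}=MN^{-1}.
\end{equation*}
Clearly (10) implies (9) after right-multiplying by $M^{\mathrm T}$. It therefore suffices to prove (10) for every $P\ge0$. Note also that $L_\mathrm{u}$ in (11) has full row rank, since $X$ is invertible, $I\oplus Y$ has full row rank and $R^{\frac12}U$ is invertible.

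Next I would whiten. Set $W\triangleq R^{\frac12}U$ and $G\triangleq W^{-1}=U^{\mathrm T}R^{-\frac12}$. Then $GC=\Sigma V^{\mathrm T}[I_n;0]$ and $G(BS)^{\mathrm T}=\Sigma V^{\mathrm T}[0;I_n]$. With $d=\mathrm{rank}(D)=\mathrm{rank}(R^{-\frac12}D)$, both $GC$ and $G(BS)^{\mathrm T}$ have nonzero rows only in the first $d$ positions. Hence $GNG^{\mathrm T}=GCPC^{\mathrm T}G^{\mathrm T}+I_m=\Phi\oplus I_{m-d}$ with $\Phi\triangleq \tilde C P\tilde C^{\mathrm T}+I_d>0$, where $\tilde C$ denotes the first $d$ rows of $GC$. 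Likewise $MG^{\mathrm T}=[\Psi,\,0]$ with $\Psi\in\mathbb{R}^{n\times d}$. Writing $L_\mathrm{u}=X(I_d\oplus Y)G$, the left side of (10) becomes
\begin{equation*}
MG^{\mathrm T}(I\oplus Y)^{\mathrm T}\big[(I\oplus Y)(\Phi\oplus I)(I\oplus Y)^{\mathrm T}\big]^{-1}(I\oplus Y)G .
\end{equation*}
The invertible $X$ cancels, which is why it is arbitrary. The middle matrix equals $\Phi\oplus YY^{\mathrm T}$, and it is invertible because $Y$ has full row rank. The expression then reduces to
\begin{equation*}
[\Psi,0]\,\big(\Phi^{-1}\oplus Y^{\mathrm T}(YY^{\mathrm T})^{-1}Y\big)\,G=[\Psi\Phi^{-1},\,0]\,G .
\end{equation*}
The right side is $MN^{-1}=MG^{\mathrm T}(GNG^{\mathrm T})^{-1}G=[\Psi\Phi^{-1},0]G$. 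The two sides agree, so (10) holds, and (9) follows.

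The main obstacle, and the step I would check most carefully, is the block-zero structure after whitening. It requires that the SVD factor $U$ isolates exactly the $d$ leading directions spanning the ranges of both $R^{-\frac12}C$ and $R^{-\frac12}(BS)^{\mathrm T}$, so that the off-diagonal blocks of $GNG^{\mathrm T}$ and the trailing columns of $MG^{\mathrm T}$ vanish identically in $P$. This follows from $\Sigma$ having only its first $d$ diagonal entries nonzero. One should also note that $m_\mathrm{c}+m_\mathrm{e}\ge d$ is exactly what makes the block $I_d$ fit. The degenerate case $Y$ empty, i.e.\ $m_\mathrm{c}+m_\mathrm{e}=d$, is handled by the same computation with the second block dropped.
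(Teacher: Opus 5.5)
Your proof is correct, and it takes a different route from the paper's. Both arguments start from the observation that (10) implies (9).

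\textbf{The paper's route.} It applies the Woodbury identity so that $L_\mathrm{u}$ enters the gain only through the $P$-independent quantities $\Upsilon(C^{\mathrm T},L_\mathrm{u})$ and $\Upsilon(BS,L_\mathrm{u})$. It then imposes the sufficient condition (13) that these equal their values at $L_\mathrm{u}=I$. Using the SVD, it recasts (13) as the range inclusion $\mathrm{R}(\Sigma)\subseteq\mathrm{R}((L_\mathrm{u}R^{\frac12}U)^{\mathrm T})$. Finally, it parametrizes all full-row-rank solutions of that inclusion to obtain (11).

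\textbf{Your route.} You verify (10) directly for each fixed $P$. Whitening by $G=(R^{\frac12}U)^{-1}$ turns the innovation covariance into $\Phi\oplus I_{m-d}$ and the cross term into $[\Psi,0]$ uniformly in $P$, with $d=\mathrm{rank}(D)$. After that, both gains reduce to $[\Psi\Phi^{-1},0]G$. The block-zero structure you flagged rests on the usual decreasing ordering of singular values. The paper uses the same convention when it identifies $\mathrm{R}(\Sigma)$ with $\mathrm{span}(e_{m,1},\ldots,e_{m,\mathrm{rank}(D)})$.

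\textbf{What each buys.} Your argument is shorter and avoids the Woodbury manipulation with $P^{\frac12}$. It also shows plainly why $X$ cancels and why $Y$ needs full row rank (invertibility of $YY^{\mathrm T}$). It proves exactly the ``if'' direction the theorem asserts. The paper's route additionally gives a characterization: (11) is precisely the set of full-row-rank $L_\mathrm{u}$ satisfying the $P$-independent condition (13). That is where the necessity of $\mathrm{rank}(D)\le m_\mathrm{c}+m_\mathrm{e}$ within that framework comes from. Your cancellation works by the same mechanism, since the row space of $I_d\oplus Y$ contains $\mathrm{span}(e_{m,1},\ldots,e_{m,d})$.
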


\begin{proof}
From the definitions of $\mathfrak{R}(\cdot)$ and $\mathfrak{K}(\cdot)$, a direct algebraic expansion yields
\begin{equation*}\begin{aligned}
&\mathfrak{R}(P;A,B,LC,Q,SL^\mathrm{T},LRL^\mathrm{T})\\
=&APA^\mathrm{T}+BQB^\mathrm{T}\\
&-\mathfrak{K}(P;A,B,LC,SL^\mathrm{T},LRL^\mathrm{T})L(CPC^{\mathrm{T}}+R)\\
&\times \big(\mathfrak{K}(P;A,B,LC,SL^\mathrm{T},LRL^\mathrm{T})L\big)^\mathrm{T}.
\end{aligned}\end{equation*}
Thus, if (10) is satisfied, then (9) follows immediately. Consequently, $\mathbb{L}$ can be equivalently described as 
\begin{equation*}\begin{aligned}
\mathbb{L}=
\left\{X=L_\mathrm{u}:
\begin{array}{ll}
L_\mathrm{u}\ \text{has full row rank}\\
\text{(10) holds for all } P\geq0
\end{array}
\right\}.
\end{aligned}\end{equation*}

By applying the Woodbury matrix identity \cite{horn2012matrix}, one obtains
\begin{equation}\begin{aligned}
&\mathrel{\phantom{=}}\mathfrak{K}(P;A,B,L_\mathrm{u}C,SL_\mathrm{u}^\mathrm{T},L_\mathrm{u}RL_\mathrm{u}^\mathrm{T})L_\mathrm{u}\\
%&=(APC^\mathrm{T}+BS)L^{\mathrm{T}}(LCPC^\mathrm{T}L^{\mathrm{T}}+LRL^{\mathrm{T}})^{-1}L\\
%&=(APC^\mathrm{T}+BS)L^\mathrm{T}(LRL^\mathrm{T})^{-1}L\\
%&\quad-(APC^\mathrm{T}+BS)L^\mathrm{T}(LRL^\mathrm{T})^{-1}LCP^{\frac{1}{2}}\\
%&\quad\times\big(I+P^{\frac{1}{2}}C^\mathrm{T}L^\mathrm{T}(LRL^\mathrm{T})^{-1}LCP^{\frac{1}{2}}\big)^{-1}\\
%&\quad\times P^{\frac12}C^{T}L^\mathrm{T}(LRL^\mathrm{T})^{-1}L\\
&=(AP\Upsilon(C^{\mathrm{T}},L_\mathrm{u})+\Upsilon(BS,L_\mathrm{u}))\\
&\mathrel{\phantom{=}}-(AP\Upsilon(C^{\mathrm{T}},L_\mathrm{u})+\Upsilon(BS,L_\mathrm{u}))CP^{\frac{1}{2}}\\
&\mathrel{\phantom{=}}\times\big(I+P^{\frac{1}{2}}\Upsilon(C^{\mathrm{T}},L_\mathrm{u})CP^{\frac{1}{2}}\big)^{-1}P^{\frac12}\Upsilon(C^{\mathrm{T}},L_\mathrm{u}),
\end{aligned}\end{equation}
where $\Upsilon(X,Y)\triangleq XY(YRY^\mathrm{T})^{-1}Y$. Hence (10) holds for every $P\geq 0$ provided
\begin{equation}
\Upsilon(C^\mathrm{T},L_\mathrm{u}) = \Upsilon(C^\mathrm{T},I),\ 
\Upsilon(BS,L_\mathrm{u}) = \Upsilon(BS,I).
\end{equation}

Using the SVD $R^{-\frac{1}{2}}D\overset{\mathrm{svd}}{=}U\Sigma V^\mathrm{T}$, one verifies by direct substitution that the two equalities in (13) can be compactly reformulated as the projection equation
\begin{equation*}
(L_\mathrm{u}R^{\frac12}U)^\mathrm{T}\bigl( L_\mathrm{u}R^{\frac12}U (L_\mathrm{u}R^{\frac12}U)^\mathrm{T} \bigr)^{-1}L_\mathrm{u}R^{\frac12}U\Sigma = \Sigma.
\end{equation*}
By the property of orthogonal projection, this projection equation is equivalent to 
\begin{equation}
\mathrm{R}(\Sigma)\subseteq \mathrm{R}((L_\mathrm{u}R^{\frac{1}{2}}U)^\mathrm{T}).
\end{equation}

Since $\mathrm{rank}(\Sigma)=\mathrm{rank}(D)$ and $L_\mathrm{u}$ should have full row rank, there exists $L_\mathrm{u}$ such that (14) holds only if $\mathrm{rank}(D)\leq m_{\mathrm{c}}+m_{\mathrm{e}}\leq m$. 

Note that $\operatorname{Span}(e_{m,1},\cdots,e_{m,\operatorname{rank}(D)}) = \operatorname{R}(\Sigma)$, thus (14) holds if and only if $(L_\mathrm{u}R^{\frac{1}{2}}U)^\mathrm{T}$ can be formulated as 
\[
\begin{bmatrix} e_{m,1} & \cdots & e_{m,\operatorname{rank}(D)} & \psi_{\operatorname{rank}(D)+1} & \cdots & \psi_{m_{\mathrm{c}}+m_{\mathrm{e}}} \end{bmatrix} X,
\]
where $X\in\mathbb{R}^{(m_{\mathrm{c}}+m_{\mathrm{e}})\times(m_{\mathrm{c}}+m_{\mathrm{e}})}$ is an arbitrary invertible matrix and the vectors $\psi_{\operatorname{rank}(D)+1},\cdots,\psi_{m_{\mathrm{c}}+m_{\mathrm{e}}}\in\mathbb{R}^{m}$ are chosen so that 
$\{e_{m,1},\cdots,e_{m,\operatorname{rank}(D)},\psi_{\operatorname{rank}(D)+1},\cdots,\psi_{m_{\mathrm{c}}+m_{\mathrm{e}}}\}$ is linearly independent. Subtracting from each $\psi_i$ its orthogonal projection onto $\operatorname{span}(e_{m,1},\cdots,e_{m,\operatorname{rank}(D)})$ zeroes out the first $\operatorname{rank}(D)$ entries while preserving linear independence. 
Consequently, $\operatorname{R}(\Sigma)\subseteq \operatorname{R}((L_\mathrm{u}R^{\frac{1}{2}}U)^\mathrm{T})$ is equivalent to
\[
(L_\mathrm{u}R^{\frac{1}{2}}U)^\mathrm{T} = 
\begin{bmatrix}
I_{\operatorname{rank}(D)} & 0 \\
0 & Y
\end{bmatrix} X,
\]
where $Y\in\mathbb{R}^{(m-\operatorname{rank}(D))\times (m_{\mathrm{c}}+m_{\mathrm{e}}-\operatorname{rank}(D))}$ is an arbitrary matrix has rank $m_\mathrm{c}+m_\mathrm{e}-\mathrm{rank}(D)$.
\end{proof}

It should be emphasized that the analytical solution in Theorem 1 may not encompass all lossless linear transformations, since the proof relies on the sufficient condition (13) to solve (12). This restrictive approach is unavoidable, as analytically characterizing all solutions to the underlying nonlinear matrix equation (12) is inherently difficult.

The following theorem presents a necessary and sufficient condition for the eavesdropper's MSE to approach infinity.

\begin{theorem}
\textit{Consider the system (1). The condition $\lim_{k\to\infty}\mathrm{E}[\|\tilde{x}(L_\mathrm{e},k)\|^2]=\infty$ holds
if and only if $(L_\mathrm{e}C,A)$ is not detectable.}
\end{theorem}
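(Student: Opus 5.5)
Throughout, write $C_\mathrm{e}\triangleq L_\mathrm{e}C$, $S_\mathrm{e}\triangleq SL_\mathrm{e}^\mathrm{T}$ and $R_\mathrm{e}\triangleq L_\mathrm{e}RL_\mathrm{e}^\mathrm{T}$. Since $\mathrm{E}[\|\tilde{x}(L_\mathrm{e},k)\|^2]=\mathrm{Tr}(\hat{P}(L_\mathrm{e},k))$, the task is to show that the Riccati iteration $\hat{P}(L_\mathrm{e},k+1)=\mathfrak{R}(\hat{P}(L_\mathrm{e},k);A,B,C_\mathrm{e},Q,S_\mathrm{e},R_\mathrm{e})$ has unbounded trace if and only if $(C_\mathrm{e},A)$ is not detectable.

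Before either direction, we remove the noise correlation. The standard device is $\bar{A}\triangleq A-BS_\mathrm{e}R_\mathrm{e}^{-1}C_\mathrm{e}$ and $\bar{Q}\triangleq Q-S_\mathrm{e}R_\mathrm{e}^{-1}S_\mathrm{e}^\mathrm{T}$. Here $\bar{Q}>0$, because $\bigl[\begin{smallmatrix} Q & S\\ S^\mathrm{T} & R\end{smallmatrix}\bigr]>0$ and $L_\mathrm{e}$ has full row rank. With this substitution, $\mathfrak{R}(X;A,B,C_\mathrm{e},Q,S_\mathrm{e},R_\mathrm{e})=\mathfrak{R}(X;\bar{A},B,C_\mathrm{e},\bar{Q},0,R_\mathrm{e})$, and detectability of $(C_\mathrm{e},\bar{A})$ is equivalent to detectability of $(C_\mathrm{e},A)$, since output injection preserves detectability. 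The degenerate case $m_\mathrm{e}=0$ gives $\mathfrak{R}=\mathfrak{L}$ with $C_\mathrm{e}=0$. There $(0,A)$ is not detectable because $\rho(A)\geq1$, and the claim reduces to the open-loop Lyapunov iteration, which is covered by the sufficiency argument below.

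\emph{Detectable $\Rightarrow$ bounded.} Choose $K$ with $\rho(\bar{A}-KC_\mathrm{e})<1$. The Riccati map is the minimum over gains of the affine Lyapunov-type map $X\mapsto(\bar{A}-KC_\mathrm{e})X(\bar{A}-KC_\mathrm{e})^\mathrm{T}+B\bar{Q}B^\mathrm{T}+KR_\mathrm{e}K^\mathrm{T}$. By induction, $\hat{P}(L_\mathrm{e},k)$ is therefore bounded above by the iterates of this fixed stable Lyapunov recursion started at $P(0)$. Those iterates converge, so the trace of $\hat{P}(L_\mathrm{e},k)$ stays bounded.

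\emph{Not detectable $\Rightarrow$ unbounded.} This direction is the main obstacle. Non-detectability gives an eigenvalue $\lambda$ with $|\lambda|\geq1$ and a (possibly complex) eigenvector $q$ with $q^{*}\bar{A}=\lambda q^{*}$ (left eigenvector of $\bar{A}$) and $C_\mathrm{e}^{\mathrm{T}}$-orthogonality in the dual sense. To be precise, I would use the PBH test on the dual pair. A vector $p$ with $\bar{A}p=\lambda p$ and $C_\mathrm{e}p=0$ shows the mode is unobservable, but the covariance is governed by left eigenvectors. So I would first pass to the Kalman observability decomposition of $(C_\mathrm{e},\bar{A})$. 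In those coordinates the unobservable block $\bar{A}_{\bar{o}}$ has $\rho(\bar{A}_{\bar{o}})\geq1$, and the $C_\mathrm{e}$-columns in that block vanish. Projecting the Riccati recursion onto the unobservable block, the correction term has zero component there, because the gain only acts through the observable block and $\bar{A}$ is block-triangular with no coupling from observable to unobservable. The unobservable-block covariance $P_{\bar{o}}(k)$ then obeys exactly the Lyapunov recursion
\[
P_{\bar{o}}(k+1)=\bar{A}_{\bar{o}}P_{\bar{o}}(k)\bar{A}_{\bar{o}}^\mathrm{T}+B_{\bar{o}}\bar{Q}B_{\bar{o}}^\mathrm{T}.
\]
Hence $P_{\bar{o}}(k)\geq\sum_{j<k}\bar{A}_{\bar{o}}^{j}B_{\bar{o}}\bar{Q}B_{\bar{o}}^\mathrm{T}(\bar{A}_{\bar{o}}^{j})^\mathrm{T}$.

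To conclude divergence, I will use stabilizability of $(A,B)$, which also holds for $(\bar{A},B)$ since state feedback is invariant. Every mode with $|\lambda|\geq1$ is controllable, so for the left eigenvector $q$ of $\bar{A}_{\bar{o}}$ at that eigenvalue we have $q^{*}B_{\bar{o}}\neq0$. Then $q^{*}P_{\bar{o}}(k)q\geq\sum_{j<k}|\lambda|^{2j}q^{*}B_{\bar{o}}\bar{Q}B_{\bar{o}}^{\mathrm{T}}q$, and each term is bounded below by $\lambda_{\min}(\bar{Q})\|B_{\bar{o}}^\mathrm{T}q\|^2>0$. This lower bound grows at least linearly in $k$, so the trace diverges. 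The delicate points to verify are the exact decoupling of the unobservable block in the Riccati recursion, including the orientation of the block-triangular structure, and the transfer of controllability from $(\bar{A},B)$ to $(\bar{A}_{\bar{o}},B_{\bar{o}})$ for that particular unstable left eigenvector.
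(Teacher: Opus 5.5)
\textbf{Verdict: there is a genuine gap in the direction ``not detectable $\Rightarrow$ unbounded''.} Your decorrelation step is correct. Your ``detectable $\Rightarrow$ bounded'' argument is also correct, and it is more explicit than the paper's appeal to Kalman filtering theory. The gap sits exactly at the two steps you flagged, and both of them fail.

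\textbf{First failure: the unobservable block does not decouple.} The unobservable subspace is $\bar A$-invariant. So the zero block of $\bar A$ is the one from unobservable to observable coordinates, and the observable states \emph{do} drive the unobservable ones: $\bar A=\bigl[\begin{smallmatrix}\bar A_{\bar o} & A_{12}\\ 0 & \bar A_o\end{smallmatrix}\bigr]$, $C_\mathrm{e}=[0,\ C_o]$. The unobservable rows of the gain are then
$K_{\bar o}=(\bar A_{\bar o}X_{12}+A_{12}X_{22})C_o^{\mathrm{T}}(C_oX_{22}C_o^{\mathrm{T}}+R_\mathrm{e})^{-1}$.
This is nonzero as soon as $A_{12}\neq0$ or the cross-covariance $X_{12}$ is nonzero; for the iteration started at $0$, already $X_{12}(1)=B_{\bar o}\bar QB_o^{\mathrm{T}}$. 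Measurements of the observable part carry information about the unobservable part, through correlated process noise and through the coupling. So $P_{\bar o}(k)$ does not obey the Lyapunov recursion. Even when $A_{12}=0$ you would only get $P_{\bar o}(k+1)\le\bar A_{\bar o}P_{\bar o}(k)\bar A_{\bar o}^{\mathrm{T}}+B_{\bar o}\bar QB_{\bar o}^{\mathrm{T}}$, which is the wrong direction for divergence.

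\textbf{Second failure: stabilizability does not transfer to $B_{\bar o}$.} A left eigenvector of $\bar A_{\bar o}$ is not a left eigenvector of $\bar A$ when $A_{12}\neq0$, so stabilizability of $(\bar A,B)$ says nothing about $B_{\bar o}$. Take $\bar A=\bigl[\begin{smallmatrix}1&1\\0&1/2\end{smallmatrix}\bigr]$, $B=[0;1]$, $C_\mathrm{e}=[0,\ 1]$. This pair is reachable and undetectable, yet $B_{\bar o}=0$. Your lower bound is then identically zero, even though the MSE diverges: the noise reaches the unstable unobservable mode only through the observable state. Your argument is valid only when $m_\mathrm{e}=0$, where the whole space is unobservable and the recursion really is Lyapunov.

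\textbf{How the paper avoids this.} It uses no block decoupling at all. It iterates the decorrelated Riccati map from $X(0)=0$, which gives a nondecreasing sequence with $\hat P(L_\mathrm{e},k)\ge X(k)$. If this sequence were bounded, it would converge, by continuity, to a positive semidefinite solution of the algebraic Riccati equation, and no such solution exists.

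\textbf{How to repair your proof.} Your left-eigenvector idea is exactly what proves that nonexistence, provided you apply it to the closed-loop matrix of a hypothetical fixed point rather than to a sub-block.
Suppose $X=\mathfrak{R}(X;\bar A,B,C_\mathrm{e},\bar Q,0,R_\mathrm{e})\ge0$ and set $K=\bar AXC_\mathrm{e}^{\mathrm{T}}(C_\mathrm{e}XC_\mathrm{e}^{\mathrm{T}}+R_\mathrm{e})^{-1}$. Then $X=FXF^{\mathrm{T}}+B\bar QB^{\mathrm{T}}+KR_\mathrm{e}K^{\mathrm{T}}$ with $F=\bar A-KC_\mathrm{e}$.
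The unobservable eigenvector $p$ (with $\bar Ap=\lambda p$, $C_\mathrm{e}p=0$, $|\lambda|\ge1$) satisfies $Fp=\lambda p$. Hence $F$ has a left eigenvector $q$ at $\lambda$.
The identity $(1-|\lambda|^2)q^*Xq=q^*(B\bar QB^{\mathrm{T}}+KR_\mathrm{e}K^{\mathrm{T}})q$ then forces $q^*B=0$ and $q^*K=0$.
It follows that $q^*\bar A=\lambda q^*$ with $q^*B=0$, which contradicts stabilizability.
Replacing your decoupling step with this fixed-point argument closes the proof.
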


\begin{proof}
Define
\begin{equation*}\begin{aligned}
A_\mathrm{e} &\triangleq \begin{cases}
A - BSL_\mathrm{e}^\mathrm{T}(L_\mathrm{e}RL_\mathrm{e}^\mathrm{T})^{-1}L_\mathrm{e}C, & \text{if } m_\mathrm{e} \neq 0, \\
A, & \text{if } m_\mathrm{e} = 0,
\end{cases}\\
Q_\mathrm{e} &\triangleq \begin{cases}
Q-SL_\mathrm{e}^\mathrm{T}(L_\mathrm{e}RL_\mathrm{e}^\mathrm{T})^{-1}L_\mathrm{e}S^\mathrm{T}, & \text{if } m_\mathrm{e} \neq 0, \\
Q, & \text{if } m_\mathrm{e} = 0.
\end{cases}
\end{aligned}\end{equation*}
By the Schur complement lemma, $Q_\mathrm{e}>0$. Then, the error covariance recursion can be rewritten as
\begin{equation}\begin{aligned}
&\hat{P}(L_\mathrm{e},k+1)\\
&\quad\quad\quad=\mathfrak{R}\big(\hat{P}(L_\mathrm{e},k);A_\mathrm{e},B,L_\mathrm{e}C,Q_\mathrm{e},0,L_\mathrm{e}RL_\mathrm{e}^\mathrm{T}\big).
\end{aligned}\end{equation}

If $(L_\mathrm{e}C,A)$ is detectable, then by the invariance of detectability under feedback~\cite{Anderson0319002}, $(L_\mathrm{e}C,A_\mathrm{e})$ is also detectable. From Kalman filtering theory, it follows that
\begin{equation*}\begin{aligned}
\lim_{k\to\infty}\mathrm{E}\left[\|\tilde{x}(L_\mathrm{e},k)\|^2\right]=\lim_{k\to\infty}\mathrm{Tr}\big(\hat{P}(L_\mathrm{e},k)\big)<\infty.
\end{aligned}\end{equation*}

If $(L_\mathrm{e}C,A)$ is not detectable, then $(L_\mathrm{e}C,A_\mathrm{e})$ is also not detectable. Consider the auxiliary recursion
\begin{equation*}\begin{aligned}
X(k)=\mathfrak{R}(X(k-1);A_\mathrm{e},B,L_\mathrm{e}C,Q_\mathrm{e},0,L_\mathrm{e}RL_\mathrm{e}^\mathrm{T}),
\end{aligned}\end{equation*}
with the initial condition $X(0)=0$. Clearly, $X(1)=BQ_\mathrm{e}B^\mathrm{T}\geq X(0)=0$. By the monotonicity of the Riccati equation~\cite{kailath2000linear}, we have
\begin{equation*}\begin{aligned}
X(2)\geq\mathfrak{R}(X(0);A_\mathrm{e},B,L_\mathrm{e}C,Q_\mathrm{e},0,L_\mathrm{e}RL_\mathrm{e}^\mathrm{T})=X(1).
\end{aligned}\end{equation*}
Proceeding inductively, $X(k)\geq X(k-1)$ for all $k\geq1$. Moreover, since $(A,B)$ is stabilizable, $(A_\mathrm{e},B)$ is also stabilizable by the invariance of stabilizability under feedback. Consequently, the algebraic Riccati equation
\begin{equation*}
X=\mathfrak{R}(X;A_\mathrm{e},B,L_\mathrm{e}C,Q_\mathrm{e},0,L_\mathrm{e}RL_\mathrm{e}^\mathrm{T})
\end{equation*}
admits no symmetric positive semi-definite solution~\cite{kailath2000linear}, which implies that the sequence $\{X(k)\}_{k\geq0}$ diverges to infinity.

By the monotonicity of the Riccati equation, $\hat{P}(L_\mathrm{e},k)\geq X(k)$ for all $k\geq0$. Hence,
\begin{equation*}\begin{aligned}
\lim_{k\to\infty}\mathrm{E}\left[\|\tilde{x}(L_\mathrm{e},k)\|^2\right]=\lim_{k\to\infty}\mathrm{Tr}\big(\hat{P}(L_\mathrm{e},k)\big)=\infty.
\end{aligned}\end{equation*}
This completes the proof.
\end{proof}

We define 
\begin{equation*}\begin{aligned}
    \check{d}_i&\triangleq
    \begin{cases}
        d_i,&\text{if $J_i$ associates with a real eigenvalue},\\
        d_i/2,&\text{if $J_i$ associates with a pair of conjugate}\\
        &\text{complex eigenvalues,}
    \end{cases}\\
    \mathbb{U}_{ij}&\triangleq
    \begin{cases}
        \{\sum^{i-1}_{\iota=1}d_\iota+1,\cdots,\sum^{i-1}_{\iota=1}d_\iota+j\},\\
        \quad\quad\text{if $J_i$ associates with a real eigenvalue},\\
        \{\sum^{i-1}_{\iota=1}d_\iota+1,\cdots,\sum^{i-1}_{\iota=1}d_\iota+2j\},\\
        \quad\quad\text{if $J_i$ associates with a pair of conjugate}\\
        \quad\quad\text{complex eigenvalues,}
    \end{cases}
\end{aligned}\end{equation*}
where $d_0\triangleq0$ by convention. The following proposition establishes the condition for rendering $(L_\mathrm{e}C, A)$ undetectable. 

\begin{proposition}
\textit{If $[L_\mathrm{e}C]_{\mathbb{U}_{ij}}=0$ for some $j\in\{1,\cdots,\check{d}_i\}$ and $i\in\{1,\cdots,r_\mathrm{u}\}$, then $(L_\mathrm{e}C,A)$ is not detectable.}
\end{proposition}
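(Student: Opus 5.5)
We argue through the PBH (Popov--Belevitch--Hautus) test: $(L_\mathrm{e}C,A)$ fails to be detectable as soon as we exhibit an eigenvalue $\lambda$ with $|\lambda|\geq1$ and a nonzero (possibly complex) vector $v$ satisfying $Av=\lambda v$ and $L_\mathrm{e}Cv=0$. The plan is to take $\lambda=\lambda_i$, the eigenvalue attached to the unstable block $J_i$ (so $i\leq r_\mathrm{u}$ gives $|\lambda_i|=\rho(J_i)\geq1$), and to take $v$ to be the eigenvector of $J_i$ placed at the leading coordinates of block $i$ and padded with zeros elsewhere.

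\textbf{Step 1: build the eigenvector.} Since $A$ is in the form (3), the upper-left $n_\mathrm{u}\times n_\mathrm{u}$ block is $A_\mathrm{u}=J_1\oplus\cdots\oplus J_{r_\mathrm{u}}$, and the entries of $A$ below it are zero. Therefore $A$ maps any vector supported on the first $n_\mathrm{u}$ coordinates through $A_\mathrm{u}$ alone.

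\emph{Real $\lambda_i$.} Set $v=e_{n,s}$ with $s=\sum_{\iota<i}d_\iota+1$. The first column of $J_i$ is $\lambda_i e_1$, so $Av=\lambda_i v$.

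\emph{Complex $\lambda_i$.} Let $w\in\mathbb{C}^2$ be an eigenvector of $\Lambda_i$ for $\lambda_i$, which exists because $\Lambda_i$ has eigenvalues $\lambda_i,\bar\lambda_i$. Place $w$ in the first two coordinates of block $i$ and zeros elsewhere. The first block column of $J_i$ is $[\Lambda_i;0;\cdots;0]$, so again $Av=\lambda_i v$.

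In both cases $v$ is supported on the index set $\mathbb{U}_{i1}$.

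\textbf{Step 2: use the hypothesis.} For any $j\geq1$ we have $\mathbb{U}_{i1}\subseteq\mathbb{U}_{ij}$, so $[L_\mathrm{e}C]_{\mathbb{U}_{ij}}=0$ implies that the columns of $L_\mathrm{e}C$ indexed by $\mathbb{U}_{i1}$ vanish. Because $v$ is supported on $\mathbb{U}_{i1}$, this gives $L_\mathrm{e}Cv=0$. Consequently $\left[\begin{smallmatrix}\lambda_i I-A\\ L_\mathrm{e}C\end{smallmatrix}\right]v=0$ with $v\neq0$ and $|\lambda_i|\geq1$, so the PBH test shows $(L_\mathrm{e}C,A)$ is not detectable.

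\textbf{Degenerate and subtle cases.} If $m_\mathrm{e}=0$, then $L_\mathrm{e}=0$ by convention and the same argument applies trivially, since $\rho(A)\geq1$. The point that needs the most care is the complex-eigenvalue case: the real Jordan form must be handled over $\mathbb{C}$, and the PBH test must be stated with complex eigenvectors. We also have to check that the real-form convention places $\Lambda_i$ on the leading diagonal, so that the first $2\times2$ column block is an invariant pair; this is exactly what makes $\mathbb{U}_{i1}$ contain $2$ indices in the complex case.
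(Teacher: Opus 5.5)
Your proof is correct and rests on the same structural fact as the paper's: the coordinates in $\mathbb{U}_{i1}$ span an $A$-invariant subspace on which $A$ acts as $\Lambda_i$, and $L_\mathrm{e}C$ annihilates that subspace. The paper uses output injection instead of the PBH test: for every gain $K$ the columns of $A+KL_\mathrm{e}C$ indexed by $\mathbb{U}_{i1}$ are still $[0;\Lambda_i;0]$, so $\rho(A+KL_\mathrm{e}C)\geq\rho(\Lambda_i)\geq1$ and no stabilizing $K$ exists; your eigenvector construction is the equivalent PBH form of this argument.
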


\begin{proof}
The proof is provided in Appendix A.
\end{proof}

%{\color{red}
%\begin{proof}
%Please refer to the preprint [xx].
%\end{proof}}

Based on the results above, we can construct analytical encryption parameters, as shown in the following corollary.

\begin{corollary}
\textit{Define $\Theta\triangleq [I_{\mathrm{rank}(D)}\ 0](R^{\frac{1}{2}}U)^{-1}$, $[\Theta C]_{\mathbb{U}_{ij}}\overset{\mathrm{svd}}{=} U_{ij}\Sigma_{ij}V_{ij}^\mathrm{T}$, and $\vartheta_{ij}\triangleq\mathrm{rank}([\Theta C]_{\mathbb{U}_{ij}})$, where $U_{ij}\in\mathbb{R}^{\mathrm{ran}(D)\times \mathrm{ran}(D)}$. The coarse privacy is achieved if the parameters $m_c$, $m_e$, and $L_u$ are designed as 
\begin{equation}\begin{aligned}
&m_\mathrm{c}=\min\{\vartheta_{ij},\mathrm{rank}(D)\},\ m_\mathrm{e}=\mathrm{rank}(D)-m_\mathrm{c},\\
&L_\mathrm{u} =\begin{bmatrix}
X_{\mathrm{c}} & X_{12}\\
0 & X_{\mathrm{e}}
\end{bmatrix}U_{ij}^\mathrm{T}\Theta,
\end{aligned}\end{equation}
where $X_{12}\in\mathbb{R}^{m_\mathrm{c}\times m_\mathrm{e}}$ is arbitrary, $X_\mathrm{c}\in\mathbb{R}^{m_\mathrm{c}\times m_\mathrm{c}}$ and $X_\mathrm{e}\in\mathbb{R}^{m_\mathrm{e}\times m_\mathrm{e}}$ can be any invertible matrix, $i\in\{1,\cdots,r_\mathrm{u}\}$, and $j\in\{1,\cdots,\check{d}_i\}$.}
\end{corollary}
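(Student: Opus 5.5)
The plan is to verify the two requirements of Definition~1 separately. For the user, I will show that the proposed $L_\mathrm{u}$ is an instance of the family in Theorem~1. For the eavesdropper, I will show that the plaintext part $L_\mathrm{e}C$ has vanishing columns on $\mathbb{U}_{ij}$. Proposition~1 then gives undetectability of $(L_\mathrm{e}C,A)$, and Theorem~2 gives $\lim_{k\to\infty}\mathrm{E}[\|\tilde{x}(L_\mathrm{e},k)\|^2]=\infty$.

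\textbf{User side.} Since $m_\mathrm{c}+m_\mathrm{e}=\mathrm{rank}(D)$, the condition $\mathrm{rank}(D)\leq m_\mathrm{c}+m_\mathrm{e}\leq m$ of Theorem~1 holds with equality. The block involving $Y$ in (11) is then empty, so (11) reduces to $L_\mathrm{u}=X[I_{\mathrm{rank}(D)}\ 0](R^{\frac12}U)^{-1}=X\Theta$. Set
\begin{equation*}
X=\begin{bmatrix} X_\mathrm{c} & X_{12}\\ 0 & X_\mathrm{e}\end{bmatrix}U_{ij}^\mathrm{T}.
\end{equation*}
This matrix is invertible: $U_{ij}$ is orthogonal, and the block upper-triangular factor has invertible diagonal blocks $X_\mathrm{c}$ and $X_\mathrm{e}$. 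Hence Theorem~1 applies and $L_\mathrm{u}\in\mathbb{L}$. In particular $L_\mathrm{u}$ has full row rank, because $\Theta$ has full row rank $\mathrm{rank}(D)$. Since (9)--(10) hold for every $P\geq0$, induction on (6)--(8) starting from the common initial values $\hat{x}(\cdot,0)=0$ and $\hat{P}(\cdot,0)=P(0)$ gives $\hat{x}(L_\mathrm{u},k)=\hat{x}(I,k)$ and $\hat{P}(L_\mathrm{u},k)=\hat{P}(I,k)$ for all $k\geq1$.

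\textbf{Eavesdropper side.} Reading off the last $m_\mathrm{e}$ rows of (16) gives $L_\mathrm{e}=[0\ X_\mathrm{e}]U_{ij}^\mathrm{T}\Theta$. Column selection commutes with left multiplication, so
\begin{equation*}
[L_\mathrm{e}C]_{\mathbb{U}_{ij}}=[0\ X_\mathrm{e}]U_{ij}^\mathrm{T}[\Theta C]_{\mathbb{U}_{ij}}=[0\ X_\mathrm{e}]\Sigma_{ij}V_{ij}^\mathrm{T}.
\end{equation*}
The matrix $[\Theta C]_{\mathbb{U}_{ij}}$ has $\mathrm{rank}(D)$ rows, so $\vartheta_{ij}\leq\mathrm{rank}(D)$ and $m_\mathrm{c}=\vartheta_{ij}$. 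Only the first $\vartheta_{ij}$ rows of $\Sigma_{ij}$ can be nonzero. The selector $[0\ X_\mathrm{e}]$ picks rows $m_\mathrm{c}+1,\dots,\mathrm{rank}(D)$, all of which are zero, so $[L_\mathrm{e}C]_{\mathbb{U}_{ij}}=0$.

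The degenerate case $m_\mathrm{e}=0$ is covered by the paper's convention $L_\mathrm{e}=0$, for which the identity holds trivially. Since $i\leq r_\mathrm{u}$ and $j\leq\check{d}_i$, Proposition~1 yields that $(L_\mathrm{e}C,A)$ is not detectable. Theorem~2 then gives the required divergence of the eavesdropper's total MSE, which completes coarse privacy.

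\textbf{Main obstacle.} The only delicate point is the index bookkeeping: confirming that $m_\mathrm{c}=\vartheta_{ij}$ is exactly the number of nonzero rows of $\Sigma_{ij}$, so that $X_\mathrm{e}$ meets only zero rows. A secondary point is making sure that the empty $Y$-block case of Theorem~1 is legitimate when $m_\mathrm{c}+m_\mathrm{e}=\mathrm{rank}(D)$. I expect both to be routine once the dimensions are written out explicitly.
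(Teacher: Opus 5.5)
Your proposal is correct and follows essentially the same route as the paper's proof. Theorem~1 secures the user's optimality: since $m_\mathrm{c}+m_\mathrm{e}=\mathrm{rank}(D)$, the $Y$-block is empty and $L_\mathrm{u}=X\Theta$ with $X$ invertible. For the eavesdropper, $[L_\mathrm{e}C]_{\mathbb{U}_{ij}}=[0\ X_\mathrm{e}]\Sigma_{ij}V_{ij}^\mathrm{T}=0$, and Proposition~1 together with Theorem~2 gives divergence.

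The only cosmetic difference is in the case $m_\mathrm{e}=0$. The paper treats it separately via $\mathrm{Tr}(\mathfrak{L}_k(P(0);A,B,Q))\to\infty$, whereas you absorb it through the convention $L_\mathrm{e}=0$, which the proof of Theorem~2 explicitly accommodates. Your observation that $\vartheta_{ij}\leq\mathrm{rank}(D)$ is also correct, so $m_\mathrm{c}=\vartheta_{ij}$ in every case.
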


\begin{proof}
The proof is provided in Appendix B.
\end{proof}

\begin{remark}
When the system is undetectable, the closed-loop matrix of the estimator cannot be stable. In such cases, it may seem intuitive that the eavesdropper's MSE tends to infinity. However, to the best of the authors' knowledge, no existing literature provides a rigorous proof of this phenomenon. Current literature \cite{Shang9882330,Zou10782997,anderson2005optimal}, and \cite{kailath2000linear} typically demonstrate divergence of the MSE, but lacks a formal proof establishing its unboundedness.
\end{remark}

\begin{remark}
Proposition 1 allows $m_c + m_e$ to range from $\operatorname{rank} D$ to $m$. Corollary 1 specifically selects the minimum value $\operatorname{rank} D$ to minimize the dimension of the transmitted measurement $L_u z$ and thus reduce communication overhead. Although the construction in Corollary 1 yields a particular solution, it retains design freedom through the matrices $X_\mathrm{c}$, $X_\mathrm{e}$, and $X_{12}$, which can be adjusted to meet practical needs without affecting the privacy guarantee.
\end{remark}

\subsection{Intermittent encryption}
This subsection introduces an intermittent strategy to further reduce the encryption consumption. To this end, we first need to analyze the divergence rate of the eavesdropper's MSE. 

\begin{lemma}
\textit{Let the matrices $A\in\mathbb{R}^{n\times n}$, $B\in\mathbb{R}^{n\times l}$, and $C\in\mathbb{R}^{m\times n}$ be partitioned conformally as
\begin{equation*}\begin{aligned}
A=\begin{bmatrix}
A_{11}&A_{12}&A_{13}\\
0&A_{22}&A_{23}\\
0&0&A_{33}\\
\end{bmatrix},\ 
B=\begin{bmatrix}
B_{11}\\
B_{21}\\
0
\end{bmatrix},\ 
C=\begin{bmatrix}
0\\
C_{12}^\mathrm{T}\\
C_{13}^\mathrm{T}
\end{bmatrix}^\mathrm{T},
\end{aligned}\end{equation*}
where $A_{ij}\in\mathbb{R}^{n_i\times n_j}$. 
Let $Q>0$, $R>0$, and $S$ be given matrices of appropriate sizes. Consider the discrete-time Riccati difference equation $X(k+1) = \mathfrak{R}\bigl(X(k);\,A,B,C,Q,S,R\bigr)$ with $X(0)\geq0$. Denote by $X_{11}(k)\in\mathbb{R}^{n_1\times n_1}$ the $n_1\times n_1$ leading principal sub-matrix of $X(k)$. 
If the matrix pair
\begin{equation}
\left(
\begin{bmatrix}
A_{11} & A_{12} \\
0      & A_{22}
\end{bmatrix},\,
\begin{bmatrix}
B_{11} \\ B_{21}
\end{bmatrix}
\right)
\end{equation}
is reachable, then there exist $c>0$ and $N\geq0$ such that $X_{11}(k)\geq c\sum^{k-1}_{i=0}A_{11}^i(A_{11}^i)^{\mathrm{T}}$ for all $k\geq N$.} 
\end{lemma}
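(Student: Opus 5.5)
We would reduce to the case $S=0$ and then compare the Riccati recursion with a process whose first block is unobserved. First, as in the proof of Theorem~2, we replace $A$ by $\tilde A=A-BSR^{-1}C$ and $Q$ by $\tilde Q=Q-SR^{-1}S^{\mathrm T}>0$. This rewrites the recursion as $X(k+1)=\mathfrak{R}(X(k);\tilde A,B,C,\tilde Q,0,R)$.

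Since the first block column of $C$ is zero and the third block row of $B$ is zero, $BSR^{-1}C$ has zero first block column and zero third block row. Hence $\tilde A$ keeps the block upper-triangular structure and $\tilde A_{11}=A_{11}$. Reachability of the pair (17) is preserved under this feedback, because the $(1,2)$-block part changes only by $[B_{11};B_{21}]$ times something. By the monotonicity of the Riccati recursion in the initial condition (as used in Theorem~2), it suffices to treat $X(0)=0$.

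The key observation is a statistical one. $X(k)$ is the one-step prediction error covariance for the state $\xi(k+1)=\tilde A\xi(k)+Bw(k)$ observed through $y(k)=C\xi(k)+v(k)$, with independent noises of covariances $\tilde Q$ and $R$. The error covariance of the full state is at least the error covariance obtained when the estimator additionally knows more. We would give the estimator the extra side information $\xi_2(0)$, $\xi_3(0)$, and the noise projections $\{B_{21}w(i)\}$, or better, the entire trajectories $\xi_2(\cdot)$ and $\xi_3(\cdot)$.

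Because $y$ depends only on $\xi_2,\xi_3$ and $v$, conditioning on $\xi_2,\xi_3$ makes $y$ uninformative about $\xi_1$ beyond those trajectories. Thus $X_{11}(k)$ is bounded below by $\mathrm{Cov}(\xi_1(k)\mid \xi_2(0{:}k),\xi_3(0{:}k))$. Writing $\xi_1(k)=\sum_{i} A_{11}^{i}(A_{12}\xi_2+A_{13}\xi_3+B_{11}w)(k-1-i)$, the residual uncertainty is
\begin{equation*}
\sum_{i=0}^{k-1}A_{11}^i\,\mathrm{Cov}\big(B_{11}w(k-1-i)\mid B_{21}w(k-1-i)\big)(A_{11}^i)^{\mathrm T}
=\sum_{i=0}^{k-1}A_{11}^iB_{11}\Pi B_{11}^{\mathrm T}(A_{11}^i)^{\mathrm T},
\end{equation*}
where $\Pi$ is the conditional covariance of $w$ given $B_{21}w$. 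This holds since knowing $\xi_2(\cdot)$ and $\xi_3(\cdot)$ reveals exactly $B_{21}w(\cdot)$, given $\xi_2(0)$ and $\xi_3(0)$.

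The main obstacle is that $B_{11}\Pi B_{11}^{\mathrm T}$ need not be positive definite, so this bound is too weak. We must instead exploit reachability of (17) by grouping time into windows of length $n_1+n_2$.

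Over a window, the map from $w(t),\dots,w(t+n_1+n_2-1)$ to $(\xi_1,\xi_2)$ at the window end has full row rank. Consider the conditional covariance of $\xi_1$ at the window end given the $\xi_2$ trajectory inside the window. We claim it is positive definite. The component of the $\xi_2$ trajectory inside the window is a linear function of the window noises, and the only way $\xi_1$ could be determined is if some direction $\eta\neq0$ gave $\eta^{\mathrm T}\xi_1$ equal to a linear combination of $\xi_2$ values. This would contradict reachability; we would verify it through the PBH test on the pair (17).

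This yields some $\Gamma>0$ and $c_0>0$ with $X_{11}(k)\geq \sum_{j} A_{11}^{jN'}\Gamma (A_{11}^{jN'})^{\mathrm T}$, summing over window endpoints, where $N'=n_1+n_2$. Using $\Gamma\ge c_0\sum_{i=0}^{N'-1}A_{11}^i(A_{11}^i)^{\mathrm T}$ (valid since the right side is a fixed finite matrix) and reindexing, we obtain the claimed bound $X_{11}(k)\ge c\sum_{i=0}^{k-1}A_{11}^i(A_{11}^i)^{\mathrm T}$ for $k\ge N$. The remaining care is in handling the boundary terms from incomplete windows, which is where $N$ enters.
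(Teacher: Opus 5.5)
\textbf{There is a genuine gap, and it is at your key windowed claim, which is false.} Your preliminary reductions are fine: decorrelating $S$, $\tilde A_{11}=A_{11}$, reachability preserved under the feedback, reducing to $X(0)=0$, and ``more information lowers the covariance.'' The problem is what reachability actually gives you. Reachability of (17) says the window noises map \emph{onto} the endpoint pair $(\xi_1,\xi_2)$. It does not say that $\xi_1$ keeps any uncertainty once the $\xi_2$ values \emph{inside} the window are known.

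A counterexample satisfying all hypotheses: $A=\begin{bmatrix}1&1\\0&1\end{bmatrix}$, $B=[0;1]$, $C=[0,\ 1]$, so $n_1=n_2=1$, $n_3=0$, $B_{11}=0$, and the pair is reachable.
\begin{itemize}
\item Here $\xi_1(t+2)=\xi_1(t)+\xi_2(t)+\xi_2(t+1)$ holds identically.
\item The window-noise contribution to $\xi_1(t+2)$ is $w(t)=\xi_2(t+1)-\xi_2(t)$, which is fixed by the intra-window $\xi_2$ path.
\item So the conditional covariance you need to be positive definite is $0$, and no PBH argument can rescue it.
\end{itemize}
More generally, when $B_{11}=0$ the block $\xi_1$ is a deterministic functional of the $\xi_2$ path. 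Any comparison that hands the estimator exact intermediate values of $\xi_2$ therefore wipes out the uncertainty in $\xi_1$. This is the same defect as your first attempt, only localized to windows. The uncertainty in $\xi_1$ exists precisely because $\xi_2$ is seen only through the noisy $y$. Moreover, the intra-window measurements depend on the window noises through the Toeplitz feedthrough, and your argument never accounts for that dependence.

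\textbf{How the paper avoids this.} It uses only endpoint information.
\begin{itemize}
\item It lifts the recursion over $n_\diamond=n_1+n_2$ steps, with $\mathcal{A}=A^{n_\diamond}$ and $\mathcal{B}=\mathfrak{C}_{n_\diamond}(A,B)$. The lifted measurement noise $\mathcal{D}W+V$ is correlated with the lifted process noise $W$.
\item It decorrelates, obtaining $\check{\mathcal{Q}}=\mathcal{Q}-\mathcal{S}\mathcal{R}^{-1}\mathcal{S}^{\mathrm T}>0$. This is the covariance of $W$ that remains after the noisy intra-window measurements.
\item Reachability then gives $\mathcal{B}\check{\mathcal{Q}}\mathcal{B}^{\mathrm T}\ge c_1(I_{n_1}\oplus 0)$.
\item Monotonicity allows replacing the lifted process noise by noise living only in block $1$.
\item The lifted output matrix has zero first block column, and $\check{\mathcal{A}}$ has first block column $[A_{11}^{n_\diamond};0;0]$. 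So this block-$1$ noise is never observed.
\item The recursion collapses to $Y\mapsto A_{11}^{n_\diamond}Y(A_{11}^{n_\diamond})^{\mathrm T}+c_1I$, after which your final re-indexing step applies.
\end{itemize}
In your language, the correct $\Gamma>0$ is the conditional covariance of the block-$1$ part of the lifted noise given its block-$2$ part, i.e.\ a Schur complement of $\mathcal{B}\check{\mathcal{Q}}\mathcal{B}^{\mathrm T}$. It is obtained by conditioning on the endpoint value of $\xi_2$ and on the noisy intra-window data, not on the intra-window $\xi_2$ path.
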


\begin{proof}
The proof is provided in Appendix C.
\end{proof}

%{\color{red}
%\begin{proof}
%Please refer to the preprint [xx].
%\end{proof}}

\begin{proposition}
\textit{If $[L_\mathrm{e}C]_{\mathbb{U}_{ij}}=0$ for some $j\in\{1,\cdots,\check{d}_i\}$ and $i\in\{1,\cdots,r_\mathrm{u}\}$, then for some $c>0$,} 
\begin{equation*}
\mathrm{E}\left[\|\tilde{x}(L_\mathrm{e},k)\|^2\right]\geq 
    \begin{cases}
        c\rho(J_i)^{2k}k^{2j-2},\ &\text{if $\rho(J_i)>1$,}\\
        ck^{2j-1},\ &\text{if $\rho(J_i)=1$,}
    \end{cases}\ \forall\ k\geq 0.
\end{equation*}
\end{proposition}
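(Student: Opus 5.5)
The plan is to reduce the claim to Lemma~1 by a permutation of the state coordinates, and then evaluate the lower bound $c\sum_{i'=0}^{k-1}A_{11}^{i'}(A_{11}^{i'})^{\mathrm T}$ explicitly for a truncated real Jordan block. First I would observe that, since $A$ is in the form (3) and each $J_i$ is upper block-triangular with identity blocks on the superdiagonal, the coordinate subspace $\mathcal{V}\triangleq\operatorname{span}\{e_{n,s}:s\in\mathbb{U}_{ij}\}$ is $A$-invariant. The reason is that each column of $A$ indexed by $\mathbb{U}_{ij}$ is supported only on rows in $\mathbb{U}_{ij}$. Moreover, $A$ restricted to $\mathcal{V}$ is the leading $j\times j$ (resp.\ $2j\times 2j$) principal sub-block of $J_i$, which I denote $J_i^{(j)}$; it is again a real Jordan block with diagonal block $\Lambda_i$.

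Next I would choose a permutation matrix $\Pi$ that moves the indices $\mathbb{U}_{ij}$ to the front, keeps the remaining controllable coordinates (those of $A_\mathrm{u}\oplus A_\mathrm{s}$ outside $\mathbb{U}_{ij}$) second, and keeps the uncontrollable coordinates of $A_{\bar{\mathrm c}}$ last. Under the error-covariance recursion (7) for the eavesdropper, namely $\mathfrak{R}(\cdot;A,B,L_\mathrm{e}C,Q,SL_\mathrm{e}^{\mathrm T},L_\mathrm{e}RL_\mathrm{e}^{\mathrm T})$, I would check the hypotheses of Lemma~1 for the triple $(\Pi^{\mathrm T}A\Pi,\Pi^{\mathrm T}B,L_\mathrm{e}C\Pi)$:
\begin{itemize}
\item The invariance of $\mathcal{V}$ together with the block structure of (3) gives exactly the required three-block upper-triangular form, with $A_{11}=J_i^{(j)}$ and $A_{33}=A_{\bar{\mathrm c}}$.
\item The input matrix has a zero third block because $B_2$ has a zero last block.
\item The first block of the output matrix vanishes precisely by the hypothesis $[L_\mathrm{e}C]_{\mathbb{U}_{ij}}=0$.
\item Reachability of the pair (17) follows from reachability of $(A_\mathrm{u}\oplus A_\mathrm{s},[B_\mathrm{u};B_\mathrm{s}])$, since a coordinate permutation preserves reachability.
\end{itemize}
The case $m_\mathrm{e}=0$ fits the same argument via the paper's convention $\mathfrak{R}(X;A,B,0,Q,0,0)=\mathfrak{L}(X;A,B,Q)$.

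Lemma~1 then yields $c_0>0$ and $N$ such that, for $k\geq N$,
\[
\mathrm{E}\bigl[\|\tilde x(L_\mathrm{e},k)\|^2\bigr]=\mathrm{Tr}\,\hat P(L_\mathrm{e},k)\geq \mathrm{Tr}\,X_{11}(k)\geq c_0\sum_{i'=0}^{k-1}\bigl\|(J_i^{(j)})^{i'}\bigr\|_F^2 .
\]
To evaluate the right-hand side, I would write $J_i^{(j)}=\Lambda_i\otimes I+N$ in block form, with $N$ nilpotent and commuting with the diagonal part. The top-right block of $(J_i^{(j)})^{i'}$ is then $\binom{i'}{j-1}\Lambda_i^{i'-j+1}$. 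In the complex case $\Lambda_i$ is $\rho(J_i)$ times an orthogonal matrix, so this block has Frobenius norm comparable to $(i')^{j-1}\rho(J_i)^{i'}$ for large $i'$.
\begin{itemize}
\item If $\rho(J_i)>1$, keeping only the last summand $i'=k-1$ gives a bound $\gtrsim \rho(J_i)^{2k}k^{2j-2}$, after absorbing the factor $\rho(J_i)^{-2}$ into the constant.
\item If $\rho(J_i)=1$, the sum $\sum_{i'<k}(i')^{2j-2}$ is $\gtrsim k^{2j-1}$.
\end{itemize}

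The main obstacle I expect is extending the bound from $k\geq N$ to all $k\geq 0$, i.e.\ choosing $c$ uniformly. For $1\leq k<N$ only finitely many values are involved, so it suffices that $\mathrm{Tr}\,\hat P(L_\mathrm{e},k)>0$; shrinking $c$ then handles them. This positivity follows because $\hat P(L_\mathrm{e},k)\geq X(k)\geq X(1)=BQ_\mathrm{e}B^{\mathrm T}\neq0$, using the monotone sequence from the proof of Theorem~2 and the fact that $B\neq0$ by reachability. The case $k=0$ is delicate. If $P(0)=0$ while the right-hand side is nonzero at $k=0$, which happens when $j=1$ and $\rho(J_i)>1$, the inequality can fail. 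I would handle this by either using $\mathrm{Tr}\,P(0)>0$ or reading the claim as holding for $k\geq1$, and would state this caveat explicitly.
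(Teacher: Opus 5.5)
Your proposal is correct and follows essentially the paper's route: a permutation places the leading $j$-block truncation $J_i^{(j)}$ of $J_i$ in the $A_{11}$ position, Lemma~1 gives $\mathrm{Tr}\,\hat P(L_\mathrm{e},k)\geq c_0\sum_{i'<k}\|(J_i^{(j)})^{i'}\|_F^2$ for $k\geq N$, and the Jordan-power growth (which the paper packages as Lemma~4) gives the rate, with the paper covering small $k$ only by an unspecified ``continuity argument.'' Your positivity bound $\hat P(L_\mathrm{e},k)\geq BQ_\mathrm{e}B^{\mathrm T}\neq 0$ justifies the finitely many steps $1\leq k<N$ more carefully than the paper does, and your $k=0$ caveat is genuine: if $P(0)=0$, $j=1$ and $\rho(J_i)>1$, the stated bound fails at $k=0$, so the claim should be read for $k\geq 1$ or under $\mathrm{Tr}\,P(0)>0$.
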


\begin{proof}
The proof is provided in Appendix D.
\end{proof}

A Bernoulli random variable $\gamma(k)$ with $\gamma=\mathrm{Pr}(\gamma(k)=1)$ is introduced to indicate whether $L_\mathrm{c}z(k)$ is encrypted, that is,
\begin{equation}
 \begin{cases}
L_\mathrm{c}z(k)\ \text{is encrypted by (4)}, & \text{if } \gamma(k)=1,\\
L_\mathrm{c}z(k)\ \text{is not encrypted}, & \text{if } \gamma(k)=0.
\end{cases}
\end{equation}
If $\gamma(k)=0$, the eavesdropper has access to $L_\mathrm{u}z(k)$. If $\gamma(k)=1$, the eavesdropper has access only to $L_\mathrm{e}z(k)$. Consequently, the available information set of the eavesdropper can be formulated as $\mathbb{Z}(L_\mathrm{e},\gamma,k)=\{\gamma(0),\cdots,\gamma(k)\}\cup\{\gamma(0)L_\mathrm{e}z(0)+(1-\gamma(0))L_\mathrm{u}z(0),\cdots,\gamma(k)L_\mathrm{e}z(k)+(1-\gamma(k))L_\mathrm{u}z(k)\}$.
Due to the availability of the key $\kappa$, the user is not affected. Thus, to achieve the coarse privacy, we only need to make the eavesdropper's MSE $\mathrm{E}\big[\|x(k+1)-\mathrm{E}[x(k+1)|\mathbb{Z}(L_\mathrm{e},\gamma,k)]\|^2\big]$ approaches infinity. 

\begin{proposition}
\textit{Consider the intermittent encryption (18). If $[L_\mathrm{e}C]_{\mathbb{U}_{ij}} = 0$ for some $j\in\{1,\cdots,\check{d}_i\}$ and $i\in\{1,\cdots,r_\mathrm{u}\}$, and the encryption probability satisfies $\gamma\geq 1/\rho(J_i)^{2}$, then $\lim_{k\to\infty}\mathrm{E}\left[\|x(k+1)-\mathrm{E}[x(k+1)|\mathbb{Z}(L_\mathrm{e},\gamma,k)]\|^2\right]=\infty$.}
\end{proposition}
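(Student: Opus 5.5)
Conditioned on the realized encryption sequence $\{\gamma(k)\}$, the eavesdropper's estimator is a time-varying Kalman filter. At times with $\gamma(t)=0$ it uses the output matrix $L_\mathrm{u}C$, and at times with $\gamma(t)=1$ it uses $L_\mathrm{e}C$. Writing $P_\gamma(k)$ for the conditional error covariance, the eavesdropper's MSE equals $\mathrm{E}[\mathrm{Tr}(P_\gamma(k+1))]$, where the expectation is over the Bernoulli sequence. So it suffices to lower bound $\mathrm{Tr}(P_\gamma(k+1))$ on a suitable event and then average.

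The key event is the run of consecutive encryptions just before time $k+1$. Let $\tau(k)$ be the number of trailing ones, so that $\gamma(k)=\cdots=\gamma(k-\tau+1)=1$. Then $\Pr(\tau(k)\geq t)=\gamma^{t}$, truncated at $t=k+1$. I use this as follows.
\begin{itemize}
\item First I need a floor at the start of the run. $P_\gamma(k-\tau+1)\geq \underline{P}$, where $\underline{P}$ is the one-step covariance from a zero prior with full measurements, $\underline P=\mathfrak{R}(0;A,B,C,Q,S,R)$. By monotonicity of the Riccati map in its argument and in the information, this floor holds regardless of the history.
\item During the run, the recursion is the eavesdropper's Riccati recursion with $L_\mathrm{e}C$. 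I bring it to the block form of Lemma 1 by placing the encrypted modes $\mathbb{U}_{ij}$ first. These are the first $j$ generalized-eigenvector coordinates of the block $J_i$, which form an $A$-invariant subspace in the ordering of (3), with the rows permuted so the matrix stays block upper triangular. After this reordering the first block column of $L_\mathrm{e}C$ vanishes, the reachability hypothesis of Lemma 1 holds by the reachability of $(A_\mathrm{u}\oplus A_\mathrm{s},[B_\mathrm{u};B_\mathrm{s}])$, and $A_{11}$ is the $j$-truncated Jordan block of $\Lambda_i$.
\item Lemma 1 as stated starts from $X(0)\ge 0$ and gives a bound only for $t\ge N$. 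I need it uniformly in the initial condition, which I get by monotonicity. Starting from $0$ is worst, so $X_{11}(t)\ge c\sum_{s<t}A_{11}^s(A_{11}^s)^\mathrm{T}$ for $t\ge N$, with $c,N$ independent of history. Then $\mathrm{Tr}(P_\gamma(k+1))\geq c\,\mathrm{Tr}\big(\sum_{s<\tau}A_{11}^s(A_{11}^s)^\mathrm T\big)\geq c'\rho(J_i)^{2\tau-2}$ whenever $\tau\ge N$, since $\|A_{11}^{s}\|\ge\rho(J_i)^s$.
\end{itemize}

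Taking expectations over $\tau$ gives
\[
\mathrm{E}[\mathrm{Tr}P_\gamma(k+1)]\;\ge\; c'\sum_{t=N}^{k}\Pr(\tau=t)\,\rho(J_i)^{2t-2}\;\ge\; c''\sum_{t=N}^{k}(1-\gamma)\gamma^{t}\rho(J_i)^{2t}.
\]
With $\gamma\rho(J_i)^2\ge1$, each term is bounded below by a positive constant, so the sum grows at least linearly in $k$ and diverges. The term $t=k+1$, the event that every step was encrypted, also adds $\gamma^{k+1}\rho^{2k}$, which covers the case $\gamma=1$. The boundary case $\rho(J_i)=1$ needs $\gamma=1$, which reduces to Proposition 2.

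The main obstacle is the uniformity in the previous step: turning Lemma 1 into a bound whose constants $c$ and $N$ do not depend on the random state at the start of the run. I will argue this by monotonicity plus the time-invariance of the recursion during the run, comparing with the zero-initialized recursion shifted in time. A secondary point is justifying that the conditional covariance given $\gamma$ is the Riccati iterate, which is standard because $\gamma$ is independent of the noises.
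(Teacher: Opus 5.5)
Your proof is correct, and it differs from the paper's in how the randomness of $\gamma(\cdot)$ is averaged out.

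\textbf{Shared step.} Both arguments use the same comparison. By monotonicity, the eavesdropper's covariance dominates a recursion restarted from zero at the last unencrypted instant. The paper encodes this as $X(k+1)=\gamma(k)\mathfrak{R}(X(k);A,B,L_\mathrm{e}C,\ldots)$, which resets to $0$ whenever $\gamma(k)=0$. The Lemma~1/Proposition~2 growth bound then applies from that restart.

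\textbf{The paper's averaging.} The paper keeps only the single path $\gamma(0)=\cdots=\gamma(k)=1$, which has probability $\gamma^{k+1}$. This gives $\mathrm{Tr}(\mathrm{E}[X(k)])\ge c(\gamma\rho(J_i)^2)^k k^{2j-2}$. The argument is shorter and retains the polynomial-exponential rate of Proposition~2. However, at the boundary $\gamma\rho(J_i)^2=1$ with $j=1$ and $\rho(J_i)>1$, this bound is only a constant, so it does not establish divergence there. Already for a single scalar unstable mode, the all-encrypted path alone contributes only $O(1)$.

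\textbf{Your averaging.} You condition on the length $\tau$ of the trailing encryption run, so every path is kept. Each run length $t\ge N$ contributes at least a constant times $(1-\gamma)(\gamma\rho(J_i)^2)^t\ge 1-\gamma$. The sum therefore grows at least linearly in $k$, which is exactly what closes the boundary case. The price is that you discard the polynomial factors and obtain only a linear lower rate at the threshold.

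\textbf{Minor points.} The floor $\underline{P}$ in your first bullet is unnecessary, since you ultimately compare with the zero-initialized recursion anyway. You should also note that when $m_\mathrm{e}=0$ the in-run recursion is the Lyapunov map $\mathfrak{L}$; the same bound then follows directly from reachability.
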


\begin{proof}
The proof is provided in Appendix E.
\end{proof}

\begin{remark}
The necessity of intermittent encryption arises directly from the polynomial-exponential divergence rate established in Proposition~2. Such a fast rate implies that even if encryption is omitted in some time steps, the divergence may slow to a polynomial or linear rate but still tend to infinity. The key advantages are that it further reduces the encryption overhead compared with partial encryption alone, and it enables more flexible adjustment of the overhead. 
\end{remark}

\section{Precise Privacy}
In this section, we discuss how to protect the confidential information $\varphi^{\mathrm{T}} x_k$ precisely. Before presenting the main results of this section, it is necessary to provide the following lemma. 
\begin{lemma}
\textit{Suppose $X \in \mathbb{R}^{n \times n}$ has no eigenvalue inside the open unit disk. Then, for any nonzero $x\in\mathbb{R}^n$, it holds that $\lim_{k\to\infty}\sum^k_{i=0}x^{\mathrm{T}}X^i(X^i)^{\mathrm{T}}x=\infty$.} 
\end{lemma}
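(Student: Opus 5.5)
Write $\|(X^i)^{\mathrm T}x\|^2 = x^{\mathrm T}X^i(X^i)^{\mathrm T}x$. It then suffices to show that the terms $\|(X^{\mathrm T})^i x\|$ do not tend to zero. A series of nonnegative terms whose summands do not vanish diverges to infinity.

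Set $M\triangleq X^{\mathrm T}$. Then $M$ has the same eigenvalues as $X$, so every eigenvalue of $M$ satisfies $|\lambda|\geq 1$. In particular $M$ is invertible. The point is to bound $\|M^i x\|$ from below by a positive constant independent of $i$.

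The plan is to use the complex Jordan decomposition $M = W J W^{-1}$. Let $y\triangleq W^{-1}x\neq 0$. Since $\|M^i x\|\geq \|W^{-1}\|^{-1}\|J^i y\|$, it is enough to show that $\|J^i y\|$ stays away from zero.

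Pick a Jordan block $J_\lambda$ on which the component $y_\lambda$ of $y$ is nonzero. Let $p$ be the largest index with $(y_\lambda)_p\neq 0$ inside that block. Because $J_\lambda^i$ is upper triangular with diagonal entries $\lambda^i$, the $p$-th entry of $J_\lambda^i y_\lambda$ is exactly $\lambda^i (y_\lambda)_p$; the entries below $p$ are zero and contribute nothing. Hence
\[
\|J^i y\|\geq |\lambda|^i |(y_\lambda)_p|\geq |(y_\lambda)_p|>0 .
\]
This gives $x^{\mathrm T}X^i(X^i)^{\mathrm T}x\geq c>0$ for all $i$, with $c = |(y_\lambda)_p|^2/\|W^{-1}\|^2$. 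Summing, $\sum_{i=0}^k x^{\mathrm T}X^i(X^i)^{\mathrm T}x\geq (k+1)c\to\infty$.

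The only delicate point is working over $\mathbb{C}$ while $x$ is real. Using the complex Euclidean norm avoids any problem, since $\|M^ix\|_2$ is the same whether $x$ is viewed as a real or a complex vector. Alternatively one can use the real Jordan form (2) with $2\times2$ blocks $\Lambda_i$. In that case one notes that $\Lambda_i$ is $|\lambda_i|$ times a rotation, so it preserves norms up to the factor $|\lambda_i|^i$, and the same last-nonzero-block-entry argument applies.

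An alternative that avoids Jordan forms entirely is to use invertibility. Since $\rho(M^{-1})\leq 1$, one might hope that $\|M^{-i}\|$ is bounded. That fails for nontrivial Jordan blocks with $|\lambda|=1$, where $\|M^{-i}\|$ grows polynomially. So I would stick with the triangular-entry argument above, which handles these blocks directly.
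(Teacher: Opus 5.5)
Your proof is correct and follows essentially the same route as the paper's: a complex Jordan decomposition, a lower bound via the conditioning of the similarity (your $\|W^{-1}\|^{-2}$ plays the role of the paper's $\lambda_{\min}(T_x^{-1}T_x^{-\mathrm{H}})$), and nonvanishing of one block's contribution. Your last-nonzero-entry observation, giving $|\lambda|^i|(y_\lambda)_p|\geq|(y_\lambda)_p|$, is a cleaner and fully rigorous substitute for the paper's appeal to the distinct asymptotic growth rates of $\binom{k}{j}\lambda^{k-j}$, and it yields the explicit bound $(k+1)c$ directly rather than by contradiction.
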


\begin{proof}
The proof is provided in Appendix F.
\end{proof}

\subsection{A necessary and sufficient condition}
\begin{proposition}
\textit{Let $p\in\mathbb{R}^{n}$ be a constant vector. The following two statements are equivalent:
\begin{itemize}
\item $\lim_{k\rightarrow\infty}p^{\mathrm{T}}\mathfrak{L}_k(0,A,B,Q)p<\infty$;
\item $\lim_{k\rightarrow\infty}p^{\mathrm{T}}\mathfrak{L}_k(X,A,B,Q)p<\infty$, $\forall$ $X\geq0$.
\end{itemize}
Define $\mathbb{S}\triangleq\{p\in\mathbb{R}^{n}:\lim_{k\rightarrow\infty}p^{\mathrm{T}}\mathfrak{L}_k(0,A,B,Q)p<\infty\}$, then an analytical representation for $\mathbb{S}$ is given by $\mathbb{S}=\mathrm{Span}(e_{n,n_\mathrm{u}+1},e_{n,n_\mathrm{u}+2},\cdots,e_{n,n})$.}
\end{proposition}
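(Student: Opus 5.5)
The approach is to use the closed form of the Lyapunov recursion. Since $\mathfrak{L}$ is affine in $X$, we have
\[
\mathfrak{L}_k(X,A,B,Q)=A^kX(A^k)^{\mathrm T}+\mathfrak{L}_k(0,A,B,Q),\qquad \mathfrak{L}_k(0,A,B,Q)=\sum_{i=0}^{k-1}A^iBQB^{\mathrm T}(A^i)^{\mathrm T}.
\]
Every term is positive semi-definite, so $p^{\mathrm T}\mathfrak{L}_k(0)p$ is nondecreasing in $k$. Its limit therefore exists in $[0,\infty]$, and the quantity being studied is $p^{\mathrm T}\mathfrak{L}_k(X)p=p^{\mathrm T}\mathfrak{L}_k(0)p+\|(A^k)^{\mathrm T}X^{\frac12}p\|^2$. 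The implication from the second statement to the first is immediate by taking $X=0$. All remaining work goes into describing $\mathbb{S}$. Once we have $\mathbb{S}=\mathrm{Span}(e_{n,n_\mathrm{u}+1},\cdots,e_{n,n})$, the implication from the first statement to the second follows: for $p\in\mathbb{S}$ write $p=[0;p_\mathrm{s};p_{\bar{\mathrm c}}]$. By the block upper-triangular structure of $A$ in (3), $(A^k)^{\mathrm T}p=[0;\ast;\ast]$ and only the stable blocks $A_\mathrm{s},A_{\bar{\mathrm c}}$ are involved. Since $\rho(A_\mathrm{s}),\rho(A_{\bar{\mathrm c}})<1$, the lower-right $2\times2$ block of $A$ is Schur stable, so $\|(A^k)^{\mathrm T}p\|\to0$ and the $X$-term stays bounded.

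\textbf{Inclusion $\mathrm{Span}(e_{n,n_\mathrm{u}+1},\cdots)\subseteq\mathbb{S}$.} For such $p$ the same calculation gives $B^{\mathrm T}(A^i)^{\mathrm T}p=[B_\mathrm{s}^{\mathrm T},0]\,(\tilde A^i)^{\mathrm T}[p_\mathrm{s};p_{\bar{\mathrm c}}]$. Here $\tilde A=\bigl[\begin{smallmatrix}A_\mathrm{s}&A_{\prime2}\\0&A_{\bar{\mathrm c}}\end{smallmatrix}\bigr]$ is Schur, so the terms decay geometrically and the sum converges.

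\textbf{Inclusion $\mathbb{S}\subseteq\mathrm{Span}(\cdots)$.} Take $p=[p_\mathrm{u};p_\mathrm{s};p_{\bar{\mathrm c}}]$ with $p_\mathrm{u}\neq0$; we must show the sum diverges. I would reduce this to the controllable unstable part by a lower bound. The controllable subsystem $\bigl(\bigl[\begin{smallmatrix}A_\mathrm{u}&0\\0&A_\mathrm{s}\end{smallmatrix}\bigr],[B_\mathrm{u};B_\mathrm{s}]\bigr)$ is reachable, and $A^iB=[A_\mathrm{u}^iB_\mathrm{u};A_\mathrm{s}^iB_\mathrm{s};0]$. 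Hence
\[
p^{\mathrm T}\mathfrak{L}_k(0)p=\sum_{i<k}\bigl\|Q^{\frac12}\bigl(B_\mathrm{u}^{\mathrm T}(A_\mathrm{u}^i)^{\mathrm T}p_\mathrm{u}+B_\mathrm{s}^{\mathrm T}(A_\mathrm{s}^i)^{\mathrm T}p_\mathrm{s}\bigr)\bigr\|^2 .
\]
The cleanest route is to invoke Lemma 1 with $C=0$, $S=0$, $R=0$, so that $\mathfrak{R}=\mathfrak{L}$ by the convention in the Notations. Use the partition in which block 1 is $A_\mathrm{u}$, block 2 is $A_\mathrm{s}$ and block 3 is $A_{\bar{\mathrm c}}$; the pair (17) is then exactly the reachable controllable part. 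This gives $c>0$ and $N$ such that the upper-left block satisfies $X_{11}(k)\geq c\sum_{i=0}^{k-1}A_\mathrm{u}^i(A_\mathrm{u}^i)^{\mathrm T}$. This only controls the leading block, however, and $p$ also has $p_\mathrm{s}$ and $p_{\bar{\mathrm c}}$ components.

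\textbf{Handling the cross terms.} I expect this to be the main obstacle. My plan is the elementary inequality $\|a+b\|^2\geq\frac12\|a\|^2-\|b\|^2$ applied inside the displayed sum, with $a$ the $A_\mathrm{u}$ term and $b$ the $A_\mathrm{s}$ term. The $b$-part sums to a finite constant because $A_\mathrm{s}$ is Schur. The $a$-part is $\frac12 p_\mathrm{u}^{\mathrm T}\mathfrak{L}_k(0;A_\mathrm{u},B_\mathrm{u},Q)p_\mathrm{u}$. This quantity dominates $c'\sum_{i<k}p_\mathrm{u}^{\mathrm T}A_\mathrm{u}^i(A_\mathrm{u}^i)^{\mathrm T}p_\mathrm{u}$ up to a bounded number of initial terms. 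That bound comes either from Lemma 1 applied with block 2 empty, or directly from reachability of $(A_\mathrm{u},B_\mathrm{u})$: grouping the sum in windows of length $n_\mathrm{u}$ gives $\sum_{i}A_\mathrm{u}^{i}\,W\,(A_\mathrm{u}^{i})^{\mathrm T}$, where $W=\mathfrak{C}_{n_\mathrm{u}}(A_\mathrm{u},B_\mathrm{u})(I\otimes Q)\mathfrak{C}_{n_\mathrm{u}}^{\mathrm T}\geq\lambda I$ with $\lambda>0$. Since every eigenvalue of $A_\mathrm{u}$ has modulus at least $1$, Lemma 2 applied with $X=A_\mathrm{u}$ and $x=p_\mathrm{u}\neq0$ shows that this lower bound tends to infinity. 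So the sum diverges and $p\notin\mathbb{S}$, which completes the characterization of $\mathbb{S}$ and, with the first paragraph, the equivalence.
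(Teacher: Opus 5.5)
Your argument is correct, and its skeleton matches the paper's. The second statement gives the first with $X=0$, and the characterization of $\mathbb{S}$ carries the weight. The converse then follows because $\{p:p_\mathrm{u}=0\}$ is $A^{\mathrm{T}}$-invariant, with $A^{\mathrm{T}}$ acting there through the Schur-stable $\tilde A^{\mathrm{T}}$; this is the content of (32)--(33).

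The genuine difference is in the inclusion $\mathbb{S}\subseteq\mathrm{Span}(e_{n,n_\mathrm{u}+1},\cdots,e_{n,n})$, where the paper never meets your cross-term obstacle. It groups the whole sum into windows of length $n_\mathrm{c}$. Each window $\sum_{j=0}^{n_\mathrm{c}-1}A^jBQB^{\mathrm{T}}(A^j)^{\mathrm{T}}$ is positive definite on the reachable coordinates and zero elsewhere, so it dominates $c(I_{n_\mathrm{u}}\oplus 0)$. The span of the first $n_\mathrm{u}$ coordinate vectors is $A$-invariant with $A$ acting as $A_\mathrm{u}$, so this propagates to the matrix inequality (31), $\mathfrak{L}_{kn_\mathrm{c}}(0;A,B,Q)\geq\bigl(c\sum_{i=1}^{k}(A_\mathrm{u}^{n_\mathrm{c}})^{i-1}((A_\mathrm{u}^{n_\mathrm{c}})^{i-1})^{\mathrm{T}}\bigr)\oplus 0$. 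Its quadratic form sees only $p_\mathrm{u}$, so monotonicity in $k$ and Lemma~2 applied to $A_\mathrm{u}^{n_\mathrm{c}}$ finish the argument.

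You instead expand the scalar $p^{\mathrm{T}}\mathfrak{L}_k(0;A,B,Q)p$. You discard the $A_\mathrm{s}$-contribution via $\|a+b\|^2\geq\frac12\|a\|^2-\|b\|^2$ and a convergent geometric series, then lower-bound what remains by the Gramian of $(A_\mathrm{u},B_\mathrm{u})$ alone. Your route makes the reduction conceptually explicit: the directional variance diverges because of the unstable reachable subsystem, and the stable part is a bounded perturbation. It costs two extra facts, however: $\rho(A_\mathrm{s})<1$ and reachability of $(A_\mathrm{u},B_\mathrm{u})$ by itself. The paper's matrix inequality needs neither at this step, because positivity of the full reachable Gramian absorbs the cross terms automatically.

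A few small repairs are needed.
\begin{itemize}
\item Reachability of $(A_\mathrm{u},B_\mathrm{u})$ should be derived, e.g.\ by PBH: a left null vector $w$ of $[\lambda I-A_\mathrm{u},\,B_\mathrm{u}]$ yields the left null vector $[w;0]$ for the reachable block-diagonal pair.
\item The $X$-term is $p^{\mathrm{T}}A^kX(A^k)^{\mathrm{T}}p=\|X^{\frac12}(A^k)^{\mathrm{T}}p\|^2$, not $\|(A^k)^{\mathrm{T}}X^{\frac12}p\|^2$. As you wrote it, the term need not decay, since $X^{\frac12}p$ can have a nonzero leading block.
\item Windows of length $n_\mathrm{u}$ give $\sum_m A_\mathrm{u}^{mn_\mathrm{u}}W(A_\mathrm{u}^{mn_\mathrm{u}})^{\mathrm{T}}$. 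Either apply Lemma~2 to $A_\mathrm{u}^{n_\mathrm{u}}$, or first use $\lambda I\geq\lambda'\sum_{j=0}^{n_\mathrm{u}-1}A_\mathrm{u}^j(A_\mathrm{u}^j)^{\mathrm{T}}$ for some $\lambda'>0$ before applying it to $A_\mathrm{u}$.
\item If you go through Lemma~1, take $C=0$, $S=0$, $R=I$, since the lemma assumes $R>0$.
\end{itemize}
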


\begin{proof}
The proof is provided in Appendix G.
\end{proof}

With the help of Proposition 4, the following proposition will discuss the existence of $\varphi$-precise privacy. 

\begin{proposition}
\textit{There exist $m_\mathrm{c}$, $m_\mathrm{e}$, and $L_\mathrm{u}$ such that $\lim_{k\rightarrow\infty}\mathrm{E}[\|\varphi^{\mathrm{T}}\tilde{x}(L_\mathrm{e},k)\|^2]=\infty$ if and only if $\varphi\notin\mathbb{S}$.}
\end{proposition}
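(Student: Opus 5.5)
The plan is to prove the two directions separately. Both reduce to comparing the eavesdropper's error covariance $\hat{P}(L_\mathrm{e},k)$ with the open-loop (Lyapunov) recursion $\mathfrak{L}_k(\cdot;A,B,Q)$, and then applying Proposition~4. Throughout I use $\mathrm{E}[\|\varphi^{\mathrm{T}}\tilde{x}(L_\mathrm{e},k)\|^2]=\varphi^{\mathrm{T}}\hat{P}(L_\mathrm{e},k)\varphi$, which holds because the filter (6)--(8) is the Kalman filter, so its error covariance is deterministic.

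\emph{Necessity ($\varphi\in\mathbb{S}$ rules out divergence for every design).} Fix arbitrary admissible $m_\mathrm{c}$, $m_\mathrm{e}$, $L_\mathrm{u}$. The trivial estimator $\mathrm{E}[x(k)]=0$ uses no measurements, and $x(0)$ has zero mean and the noises are white. Hence its error covariance is the state covariance $\mathrm{E}[x(k)x(k)^{\mathrm{T}}]=\mathfrak{L}_k(P(0);A,B,Q)$. The MMSE estimator conditioned on $\mathbb{Z}(L_\mathrm{e},k-1)$ does at least as well in the matrix sense, since conditioning reduces covariance. Therefore
\[
\varphi^{\mathrm{T}}\hat{P}(L_\mathrm{e},k)\varphi\le\varphi^{\mathrm{T}}\mathfrak{L}_k(P(0);A,B,Q)\varphi .
\]
This is the step I expect to need the most care, because of the cross-correlation $S$ between $w$ and $v$. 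My first approach is the pure conditioning argument above, which is insensitive to $S$. As a backup, I would note that $\mathfrak{R}(X;\cdot)\le\mathfrak{L}(X;A,B,Q)$ for every $X$, since the subtracted term is positive semi-definite, and then induct using the monotonicity of $\mathfrak{L}$. Since $\varphi\in\mathbb{S}$, the second equivalent statement of Proposition~4 with $X=P(0)\ge0$ shows that the right-hand side stays bounded. Thus $\lim_{k\to\infty}\mathrm{E}[\|\varphi^{\mathrm{T}}\tilde{x}(L_\mathrm{e},k)\|^2]<\infty$ for every choice, which proves necessity.

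\emph{Sufficiency ($\varphi\notin\mathbb{S}$ gives a design that achieves divergence).} I take the extreme design in which everything is encrypted: $m_\mathrm{e}=0$, $m_\mathrm{c}=\mathrm{rank}(D)$, and $L_\mathrm{u}=L_\mathrm{c}$ given by (11) with, say, $X=I$. Theorem~1 then guarantees $L_\mathrm{u}\in\mathbb{L}$, so the user's estimate and covariance coincide with $\hat{x}(I,k)$ and $\hat{P}(I,k)$. By the paper's convention $L_\mathrm{e}=0$, $S=0$, $R=0$, the eavesdropper's recursion (7) becomes the Lyapunov recursion, so $\hat{P}(L_\mathrm{e},k)=\mathfrak{L}_k(P(0);A,B,Q)$. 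Since $P(0)\ge0$, the monotonicity of $\mathfrak{L}$ gives $\mathfrak{L}_k(P(0);A,B,Q)\ge\mathfrak{L}_k(0;A,B,Q)$.

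Moreover, $\mathfrak{L}_k(0;A,B,Q)=\sum_{i=0}^{k-1}A^iBQB^{\mathrm{T}}(A^i)^{\mathrm{T}}$ is nondecreasing in $k$. So $\varphi^{\mathrm{T}}\mathfrak{L}_k(0;A,B,Q)\varphi$ has a limit in $[0,\infty]$, and by the definition of $\mathbb{S}$ that limit is $+\infty$ when $\varphi\notin\mathbb{S}$. Comparing the two covariances then yields $\lim_{k\to\infty}\mathrm{E}[\|\varphi^{\mathrm{T}}\tilde{x}(L_\mathrm{e},k)\|^2]=\infty$.

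I could optionally add a remark that Proposition~4's explicit description $\mathbb{S}=\mathrm{Span}(e_{n,n_\mathrm{u}+1},\dots,e_{n,n})$ means $\varphi\notin\mathbb{S}$ exactly when $\varphi$ has a nonzero unstable component. This links the result to the subsequent design of nontrivial $L_\mathrm{e}$, but it is not needed for the equivalence itself.
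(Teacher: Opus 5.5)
Your proposal is correct and follows the paper's proof essentially step for step. For necessity, you bound the eavesdropper's directional MSE by the open-loop covariance $\varphi^{\mathrm{T}}\mathfrak{L}_k(P(0);A,B,Q)\varphi$ using optimality of the MMSE estimator and then invoke Proposition~4; for sufficiency, you take $m_\mathrm{e}=0$ so that the eavesdropper's covariance is exactly the Lyapunov recursion. Two small additions go beyond the paper: your inequality $\mathfrak{L}_k(P(0);A,B,Q)\ge\mathfrak{L}_k(0;A,B,Q)$ makes explicit the step from the definition of $\mathbb{S}$ (zero initial condition) to the actual initial covariance $P(0)$, which the paper leaves implicit, and your choice of $L_\mathrm{u}$ from Theorem~1 additionally keeps the user optimal.
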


\begin{proof}
The proof is provided in Appendix H.
\end{proof}

For ease of presentation, partition $\varphi=[\varphi_\mathrm{u};\varphi_\mathrm{s};\varphi_{\bar{\mathrm{c}}}]$, where $\varphi_\mathrm{u}\in\mathbb{R}^{n_\mathrm{u}}$, $\varphi_\mathrm{s}\in\mathbb{R}^{n_\mathrm{s}}$, and $\varphi_{\bar{\mathrm{c}}}\in\mathbb{R}^{n_{\bar{\mathrm{c}}}}$. Then, the following theorem will present a necessary and sufficient condition for $\lim_{k\rightarrow\infty}\mathrm{E}[\|\varphi^{\mathrm{T}}\tilde{x}(L_\mathrm{e},k)\|^2]=\infty$. 

\begin{theorem}
\textit{Consider the system (1). Then, the condition $$\lim_{k\rightarrow\infty}\mathrm{E}[\|\varphi^{\mathrm{T}}\tilde{x}(L_\mathrm{e},k)\|^2]=\infty$$ holds if and only if $\varphi_\mathrm{u}\notin\mathrm{R}(\mathfrak{O}_{n_\mathrm{u}}(L_\mathrm{e}C_\mathrm{u},A_\mathrm{u})^{\mathrm{T}})$.} 
\end{theorem}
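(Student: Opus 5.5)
We prove the equivalence by reducing the eavesdropper's Riccati recursion to a Kalman-type observable/unobservable decomposition of the unstable block, and then applying Proposition~4 and Lemma~2 on the appropriate pieces.

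\emph{Reduction.} As in the proof of Theorem~2, rewrite the eavesdropper's recursion in the decorrelated form (15) with $(A_\mathrm{e},Q_\mathrm{e})$ and $Q_\mathrm{e}>0$. Then $\mathrm{E}[\|\varphi^{\mathrm T}\tilde x(L_\mathrm{e},k)\|^2]=\varphi^{\mathrm T}\hat P(L_\mathrm{e},k)\varphi$. Output-injection feedback does not change the unobservable subspace of $(L_\mathrm{e}C,A)$. We will also use that $\hat P(L_\mathrm{e},k)$ is sandwiched between Riccati solutions of systems with structured measurements, by monotonicity with respect to the initial condition and to the information set.

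Next, choose an orthogonal change of coordinates on the $\mathrm{u}$-block, $W=[W_\mathrm{o},W_{\bar{\mathrm o}}]$. Here $W_\mathrm{o}$ spans $\mathcal{O}\triangleq\mathrm{R}(\mathfrak{O}_{n_\mathrm u}(L_\mathrm eC_\mathrm u,A_\mathrm u)^{\mathrm T})$, and $W_{\bar{\mathrm o}}$ spans its orthogonal complement $\mathrm{N}(\mathfrak{O}_{n_\mathrm u}(L_\mathrm eC_\mathrm u,A_\mathrm u))$. This complement is $A_\mathrm u$-invariant, so in the new coordinates $A_\mathrm u$ becomes block triangular, and the unobservable part $A_{\bar{\mathrm o}}$ has all eigenvalues on or outside the unit circle. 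Since $\varphi_\mathrm u\notin\mathcal O$ is equivalent to $W_{\bar{\mathrm o}}^{\mathrm T}\varphi_\mathrm u\neq 0$, the theorem reduces to showing that the directional covariance diverges iff $\varphi$ has a nonzero component along the unobservable unstable coordinates.

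\emph{Sufficiency ($\varphi_\mathrm u\notin\mathcal O$ implies divergence).} Reorder the states as [unobservable-unstable; observable-unstable; stable and uncontrollable]. The system then has the triangular form of Lemma~1, with $A_{11}=A_{\bar{\mathrm o}}$ and with the measurement vanishing on the first block. Reachability of the pair (17) follows from reachability of $(A_\mathrm u\oplus A_\mathrm s,[B_\mathrm u;B_\mathrm s])$. One caveat is that the stable block $A_\mathrm s$ may couple to the unstable blocks in the reordered form; to handle this, we either place $A_\mathrm s$ inside the second group or note that the $\mathrm u$ and $\mathrm s$ blocks are decoupled in (3). Lemma~1 then gives $X_{11}(k)\ge c\sum_{i<k}A_{\bar{\mathrm o}}^i(A_{\bar{\mathrm o}}^i)^{\mathrm T}$. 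Lemma~2 applied to $A_{\bar{\mathrm o}}$ shows that this quadratic form diverges along any nonzero vector. It remains to pass from $X_{11}$ to $\varphi^{\mathrm T}X\varphi$ when $\varphi$ also has components in other blocks. We handle this with a Schur-complement argument: the conditional error variance of $\varphi^{\mathrm T}x$ is at least the error variance of its projection onto the unobservable coordinates, conditional on the remaining states as well. Equivalently, we give the eavesdropper extra side information (the full observable and stable states), which only lowers $\hat P$; the remaining error then evolves as an open-loop Lyapunov recursion driven by the innovation noise and diverges by Lemma~2. This step is the main obstacle: it requires a careful choice of the augmented information set so that the Lemma~1 lower bound survives the cross terms.

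\emph{Necessity ($\varphi_\mathrm u\in\mathcal O$ implies boundedness).} Here we build a suboptimal estimator for $\varphi^{\mathrm T}x$ with bounded error, since the MMSE is no larger. Because $\varphi_\mathrm u\in\mathcal O$, the quantity $\varphi_\mathrm u^{\mathrm T}x_\mathrm u$ is a function of the observable quotient state. The quotient pair obtained by restricting to the observable part of $(L_\mathrm eC_\mathrm u,A_\mathrm u)$ is observable. Its coupling to the unobservable unstable coordinates goes only in the direction unobservable $\to$ observable, which is absent in this ordering because the unobservable subspace is invariant. Hence a deadbeat or Luenberger observer on that quotient, using the measurements $L_\mathrm e z$ and treating the $\mathrm s$ and $\bar{\mathrm c}$ contributions to the output as bounded-variance disturbances (since $\rho<1$), yields an estimate of $\varphi_\mathrm u^{\mathrm T}x_\mathrm u$ with bounded MSE. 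The stable and uncontrollable components $\varphi_\mathrm s^{\mathrm T}x_\mathrm s+\varphi_{\bar{\mathrm c}}^{\mathrm T}x_{\bar{\mathrm c}}$ have bounded prior variance by Proposition~4, so they can be estimated by zero. Adding the two errors and using $(a+b)^2\le 2a^2+2b^2$ gives $\sup_k\varphi^{\mathrm T}\hat P(L_\mathrm e,k)\varphi<\infty$. One point to check is that the output contribution of the unobservable unstable part, $L_\mathrm eC_\mathrm u W_{\bar{\mathrm o}}x_{\bar{\mathrm o}}$, is identically zero; this holds by the definition of the unobservable subspace.
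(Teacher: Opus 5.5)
Your necessity direction is sound and differs from the paper's. You bound the MMSE by a suboptimal observer on the observable quotient, treating the $\mathrm{s}$ and $\bar{\mathrm{c}}$ output contributions as bounded-variance disturbances. The paper instead notes that the lower-right block $\hat\Xi_{22}$ of the transformed covariance obeys its own Riccati recursion with the detectable pair $(C_\mathrm{o},A_\mathrm{o})$. In sufficiency, the reduction to Lemma~1 (with $A_\mathrm{s}$ in the second group) and Lemma~2 on $A_{\bar{\mathrm o}}$ is also fine.

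The gap is exactly the step you call the main obstacle: passing from $X_{11}(k)\ge c\sum_{i<k}A_{\bar{\mathrm o}}^i(A_{\bar{\mathrm o}}^i)^{\mathrm T}$ to divergence of $\varphi^{\mathrm T}\hat P(L_\mathrm e,k)\varphi$. The side-information device, as you describe it, fails. Handing the eavesdropper the trajectories of the observable and stable coordinates also reveals the increments $B_{\mathrm o}w(k)$ and the coupling term feeding the unobservable block. The residual Lyapunov recursion is then driven only by the part of $B_{\bar{\mathrm o}}w(k)$ not explained by $B_{\mathrm o}w(k)$. That part can vanish even though the pair (17) is reachable.

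Concretely, take $A=\mathrm{diag}(1,\tfrac12)$, $B=[1;1]$, $C=I_2$, $L_\mathrm eC=[0,1]$, $\varphi=[1;1]$, and $x(k)=[\chi_1(k);\chi_2(k)]$. Knowing $\chi_2(0),\dots,\chi_2(k)$ gives $w(i)=\chi_2(i+1)-\tfrac12\chi_2(i)$ for $i<k$. The only remaining unknowns in $\varphi^{\mathrm T}x(k+1)$ are then $\chi_1(0)$ and $w(k)$, so your lower bound stays bounded. Yet the true directional MSE diverges, since $\varphi_\mathrm u=1\notin\mathcal O=\{0\}$. The Schur-complement version has the same defect: $\varphi^{\mathrm T}X\varphi\ge\varphi_1^{\mathrm T}(X_{11}-X_{12}X_{22}^{-1}X_{21})\varphi_1$ is true, but Lemma~1 controls $X_{11}$, not the Schur complement.

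Your necessity argument already supplies the missing step, and it is elementary. Write $\varphi=a+b$ with $a\triangleq[W_{\bar{\mathrm o}}W_{\bar{\mathrm o}}^{\mathrm T}\varphi_\mathrm u;0;0]$ and $b\triangleq[W_{\mathrm o}W_{\mathrm o}^{\mathrm T}\varphi_\mathrm u;\varphi_\mathrm s;\varphi_{\bar{\mathrm c}}]$.

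1. Since the $\mathrm{u}$-component of $b$ lies in $\mathcal O$, your necessity argument gives $\sup_k b^{\mathrm T}\hat P(L_\mathrm e,k)b<\infty$.
2. By your coordinate change, $a^{\mathrm T}\hat P(L_\mathrm e,k)a=(W_{\bar{\mathrm o}}^{\mathrm T}\varphi_\mathrm u)^{\mathrm T}X_{11}(k)\,W_{\bar{\mathrm o}}^{\mathrm T}\varphi_\mathrm u$. This diverges by Lemma~1 and Lemma~2 whenever $W_{\bar{\mathrm o}}^{\mathrm T}\varphi_\mathrm u\neq0$.
3. Since $\hat P(L_\mathrm e,k)\ge0$, we have $a^{\mathrm T}\hat P a=(\varphi-b)^{\mathrm T}\hat P(\varphi-b)\le 2\varphi^{\mathrm T}\hat P\varphi+2b^{\mathrm T}\hat P b$. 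This forces $\varphi^{\mathrm T}\hat P(L_\mathrm e,k)\varphi\to\infty$.

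This is exactly how the paper closes sufficiency, using the boundedness of $\hat\Xi_{22}$ from its necessity part. No augmented information set or Schur-complement bound is needed.
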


\begin{proof}
We first note that if $m_\mathrm{e}=0$, then $\mathrm{R}(\mathfrak{O}_{n_\mathrm{u}}(L_\mathrm{e}C_\mathrm{u},A_\mathrm{u})^{\mathrm{T}})=\{0\}$, and Proposition~4 gives $\lim_{k\to\infty}\mathrm{E}[\|\varphi^{\mathrm{T}}\tilde{x}(L_\mathrm{e},k)\|^2]=\infty$ if and only if $\varphi_\mathrm{u}\neq0$, which is exactly the claimed equivalence. Hence it suffices to consider $m_\mathrm{e}\neq0$ in what follows.

\textit{Necessity.} Suppose $\lim_{k\to\infty}\mathrm{E}[\|\varphi^{\mathrm{T}}\tilde{x}(L_\mathrm{e},k)\|^2]=\infty$. 

Suppose for contradiction that $(L_\mathrm{e}C_\mathrm{u},A_\mathrm{u})$ is observable. In this case, $(L_\mathrm{e}C,A)$ will be detectable (since $\rho(A_\mathrm{s})<1$ and $\rho(A_{\bar{\mathrm{c}}})<1$). Thus, one knows from Kalman filtering theory that $\lim_{k\rightarrow\infty}\mathrm{E}[\|\varphi^{\mathrm{T}}\tilde{x}(L_\mathrm{e},k)\|^2]<\infty$, which is a contradiction. Consequently, $(L_\mathrm{e}C_\mathrm{u},A_\mathrm{u})$ is not observable. 

Perform the observability decomposition via an invertible $T_\mathrm{uo}=[H_{\mathrm{u}\bar{\mathrm{o}}};H_\mathrm{uo}]^{-1}$ such that
\begin{equation*}\begin{aligned}
T_\mathrm{uo}^{-1}A_\mathrm{u}T_\mathrm{uo}=\begin{bmatrix}
A_{\mathrm{u}\bar{\mathrm{o}}}& A_{\mathrm{u}\prime}\\
0&A_\mathrm{uo}
\end{bmatrix},\quad
L_\mathrm{e}C_\mathrm{u}T_\mathrm{uo}=
\begin{bmatrix}
0&C_\mathrm{uo}
\end{bmatrix},
\end{aligned}\end{equation*}
where $(C_\mathrm{uo},A_\mathrm{uo})$ is observable, $A_{\mathrm{u}\bar{\mathrm{o}}}\in\mathbb{R}^{n_{\mathrm{u}\bar{\mathrm{o}}}\times n_{\mathrm{u}\bar{\mathrm{o}}}}$, $A_\mathrm{uo}\in\mathbb{R}^{n_\mathrm{uo}\times n_\mathrm{uo}}$, $C_\mathrm{uo}\in\mathbb{R}^{m_\mathrm{e}\times n_\mathrm{uo}}$, $n_{\mathrm{u}\bar{\mathrm{o}}}+n_\mathrm{uo}=n_\mathrm{u}$, $n_\mathrm{uo}=\mathrm{rank}(\mathfrak{O}_{n_\mathrm{u}}(L_\mathrm{e}C_\mathrm{u},A_\mathrm{u}))$, the rows of $H_\mathrm{uo}\in\mathrm{R}^{n_\mathrm{uo}\times n_\mathrm{u}}$ are a basis of the row space of $\mathfrak{O}_{n_\mathrm{u}}(L_\mathrm{e}C_\mathrm{u},A_\mathrm{u})$ and the rows of $H_{\mathrm{u}\bar{\mathrm{o}}}\in\mathrm{R}^{n_{\mathrm{u}\bar{\mathrm{o}}}\times n_\mathrm{u}}$ are linearly independent of each other and also linearly independent of the rows of $H_\mathrm{uo}$. Then, one can use the linear transformation $T=T_\mathrm{uo}\oplus I$ to obtain 
\begin{align}
\check{A}&=T^{-1}AT=\begin{bmatrix}\begin{array}{c:ccc}
A_{\mathrm{u}\bar{\mathrm{o}}}&A_{\mathrm{u}\prime}&0&A_{\prime11}\\
\hdashline
0&A_\mathrm{uo}&0&A_{\prime12}\\
0&0&A_\mathrm{s}&A_{\prime2}\\
0&0&0&A_{\bar{\mathrm{c}}}
\end{array}\end{bmatrix}\nonumber\\
&\quad\quad\quad\quad\ \ =\begin{bmatrix}
A_{\mathrm{u}\bar{\mathrm{o}}}&A_{\mathrm{o}12}\\
0&A_\mathrm{o}
\end{bmatrix},\\
\check{B}&=T^{-1}B=\begin{bmatrix}\begin{array}{c}
B_{\mathrm{u}\bar{\mathrm{o}}}\\
\hdashline
B_\mathrm{uo}\\
B_\mathrm{s}\\
0\\
\end{array}\end{bmatrix}=\begin{bmatrix}
B_{\mathrm{u}\bar{\mathrm{o}}}\\
B_\mathrm{o}
\end{bmatrix},\nonumber\\
\check{C}_\mathrm{e}&=L_\mathrm{e}CT=\begin{bmatrix}
\begin{array}{c:ccc}
0 & C_\mathrm{uo} & L_\mathrm{e}C_\mathrm{s} & L_\mathrm{e}C_{\bar{\mathrm{c}}}
\end{array}\end{bmatrix}
=\begin{bmatrix}
0 & C_\mathrm{o}
\end{bmatrix}.\nonumber
\end{align}

Set $\check{\varphi}=[\check{\varphi}_1;\check{\varphi}_2;\check{\varphi}_3]\triangleq T^{\mathrm{T}}\varphi$ with $\check{\varphi}_1\in\mathbb{R}^{n_{\mathrm{u}\bar{\mathrm{o}}}}$, $\check{\varphi}_2\in\mathbb{R}^{n_{\mathrm{uo}}}$, $\check{\varphi}_3\in\mathbb{R}^{n-n_\mathrm{u}}$. Then, one has 
\begin{equation*}\begin{aligned}
\varphi=T^{-\mathrm{T}}
\begin{bmatrix}
\check{\varphi}_1\\
\check{\varphi}_{2}\\
\check{\varphi}_{3}
\end{bmatrix}=\begin{bmatrix}
T_\mathrm{uo}^{-\mathrm{T}}&0\\
0&I
\end{bmatrix}\begin{bmatrix}
{\begin{bmatrix}
\check{\varphi}_1\\
\check{\varphi}_2
\end{bmatrix}}\\
\check{\varphi}_3
\end{bmatrix}.
\end{aligned}\end{equation*}
This implies that 
\begin{equation*}\begin{aligned}
\varphi_\mathrm{u}=\begin{bmatrix}
H^{\mathrm{T}}_{\mathrm{u}\bar{\mathrm{o}}}&H_\mathrm{uo}^{\mathrm{T}}
\end{bmatrix}\begin{bmatrix}
\check{\varphi}_1\\
\check{\varphi}_{2}
\end{bmatrix}.
\end{aligned}\end{equation*}
By the unique representation in a basis, $\varphi_\mathrm{u}\notin\mathrm{R}(\mathfrak{O}_{n_\mathrm{u}}(L_\mathrm{e}C_\mathrm{u},A_\mathrm{u})^{\mathrm{T}})$ if and only if $\check{\varphi}_1\neq0$. Hence we need to show that $\lim_{k\to\infty}\mathrm{E}[\|\varphi^{\mathrm{T}}\tilde{x}(L_\mathrm{e},k)\|^2]=\infty$ implies $\check{\varphi}_1\neq0$. We prove its contrapositive: $\check{\varphi}_1=0$ implies $\lim_{k\to\infty}\mathrm{E}[\|\varphi^{\mathrm{T}}\tilde{x}(L_\mathrm{e},k)\|^2]<\infty$.

Define $\hat{\Xi}(L_\mathrm{e},k)\triangleq T^{-1}\hat{P}(L_\mathrm{e},k)T^{-\mathrm{T}}$, which satisfies the Riccati difference equation 
\begin{equation*}\begin{aligned}
\hat{\Xi}(L_\mathrm{e},k+1)=\mathfrak{R}\big(\hat{\Xi}(L_\mathrm{e},k);\check{A},\check{B},\check{C}_\mathrm{e},Q,S,L_\mathrm{e}RL_\mathrm{e}^\mathrm{T}\big).
\end{aligned}\end{equation*}
Partition $\hat{\Xi}(L_\mathrm{e},k)$ as
\begin{equation}\begin{aligned}
\hat{\Xi}(L_\mathrm{e},k)=\begin{bmatrix}
\hat{\Xi}_{11}(L_\mathrm{e},k)&\hat{\Xi}_{12}(L_\mathrm{e},k)\\
\hat{\Xi}_{21}(L_\mathrm{e},k)&\hat{\Xi}_{22}(L_\mathrm{e},k)
\end{bmatrix},
\end{aligned}\end{equation}
where $\hat{\Xi}_{11}(L_\mathrm{e},k)\in\mathbb{R}^{n_{\mathrm{u}\bar{\mathrm{o}}}\times n_{\mathrm{u}\bar{\mathrm{o}}}}$ and $\hat{\Xi}_{22}(L_\mathrm{e},k)\in\mathbb{R}^{(n-n_{\mathrm{u}\bar{\mathrm{o}}})\times(n-n_{\mathrm{u}\bar{\mathrm{o}}})}$. From (19) one obtains
\begin{equation*}\begin{aligned}
\check{A}\hat{\Xi}(L_\mathrm{e},k)\check{C}_\mathrm{e}^\mathrm{T}+\check{B}S
&=\begin{bmatrix}
\star\\ A_\mathrm{o}\hat{\Xi}_{22}(L_\mathrm{e},k)C_\mathrm{o}^\mathrm{T}+B_\mathrm{o}S
\end{bmatrix},\\
\check{C}_\mathrm{e}\hat{\Xi}(L_\mathrm{e},k)\check{C}_\mathrm{e}^\mathrm{T}+L_\mathrm{e}RL_\mathrm{e}^\mathrm{T}
&=C_\mathrm{o}\hat{\Xi}_{22}(L_\mathrm{e},k)C_\mathrm{o}^\mathrm{T}+L_\mathrm{e}RL_\mathrm{e}^\mathrm{T},
\end{aligned}\end{equation*}
where the symbol $\star$ denotes sub-matrices that are irrelevant to the proof. Consequently, one can verify that 
\begin{equation*}\begin{aligned}
\hat{\Xi}_{22}(L_\mathrm{e},k+1)=\mathfrak{R}\big(\hat{\Xi}_{22}(L_\mathrm{e},k);&A_\mathrm{o},B_\mathrm{o},C_\mathrm{o},Q,S,L_\mathrm{e}RL_\mathrm{e}^\mathrm{T}\big).
\end{aligned}\end{equation*}

Since $(C_\mathrm{uo},A_\mathrm{uo})$ is observable, $\rho(A_{\mathrm{s}})<1$, and $\rho(A_{\bar{\mathrm{c}}})<1$, the pair $(C_\mathrm{o},A_\mathrm{o})$ is detectable, so Kalman filtering theory yields $\lim_{k\to\infty}\hat{\Xi}_{22}(L_\mathrm{e},k)<\infty$. If $\check{\varphi}_1=0$, then
\begin{equation*}\begin{aligned}
\lim_{k\to\infty}\mathrm{E}[\|\varphi^{\mathrm{T}}\tilde{x}(L_\mathrm{e},k)\|^2]&=\lim_{k\to\infty}\check{\varphi}^{\mathrm{T}}\hat{\Xi}(L_\mathrm{e},k+1)\check{\varphi}\\
&=\lim_{k\to\infty}
\begin{bmatrix}
\check{\varphi}_2\\ \check{\varphi}_3
\end{bmatrix}^{\mathrm{T}}
\hat{\Xi}_{22}(L_\mathrm{e},k+1)
\begin{bmatrix}
\check{\varphi}_2\\ \check{\varphi}_3
\end{bmatrix}\\
&<\infty,
\end{aligned}\end{equation*}
which completes the proof of necessity.

\textit{Sufficiency.} Suppose $\varphi_\mathrm{u}\notin\mathrm{R}(\mathfrak{O}_{n_\mathrm{u}}(L_\mathrm{e}C_\mathrm{u},A_\mathrm{u})^{\mathrm{T}})$.

In this case, $(L_\mathrm{e}C_\mathrm{u},A_\mathrm{u})$ cannot be observable. Then, similar to the proof of necessity, we again have $\varphi_u\notin\mathrm{R}(\mathcal{O}_{n_\mathrm{u}}(L_\mathrm{e}C_\mathrm{u},A_\mathrm{u})^{\mathrm{T}})$ if and only if $\check{\varphi}_1\neq0$. Thus, it remains to show that $\lim_{k\to\infty}\check{\varphi}^{\mathrm{T}}\hat{\Xi}(L_\mathrm{e},k+1)\check{\varphi}<\infty$ implies $\check{\varphi}_1=0$, i.e., the contrapositive. 

Suppose $\lim_{k\to\infty}\check{\varphi}^{\mathrm{T}}\hat{\Xi}(L_\mathrm{e},k)\check{\varphi}<\infty$. By the invariance of reachability under similarity transformations, the pair
\begin{equation*}\begin{aligned}
\left(\begin{bmatrix}
A_{\mathrm{u}\bar{\mathrm{o}}}&A_{\mathrm{u}\prime}&0\\
0&A_\mathrm{uo}&0\\
0&0&A_\mathrm{s}
\end{bmatrix},\,
\begin{bmatrix}
B_{\mathrm{u}\bar{\mathrm{o}}}\\ B_\mathrm{uo}\\ B_\mathrm{s}
\end{bmatrix}\right)
\end{aligned}\end{equation*}
is reachable. Then Lemma 1 guarantees the existence of $c>0$ and $N\ge 0$ such that
\begin{equation}\begin{aligned}
\hat{\Xi}_{11}(L_\mathrm{e},k)\ge c\sum^{k-1}_{i=0}A_{\mathrm{u}\bar{\mathrm{o}}}^i(A_{\mathrm{u}\bar{\mathrm{o}}}^i)^{\mathrm{T}},\quad\forall\,k\ge N.
\end{aligned}\end{equation}

Additionally, it has been demonstrated in the analysis of the necessity that $\lim_{k\to\infty}\hat{\Xi}_{22}(L_\mathrm{e},k+1)<\infty$. Then, we can derive that 
\begin{equation*}\begin{aligned}
&\mathrel{\phantom{=}}\lim_{k\to\infty}\check{\varphi}_1^{\mathrm{T}}\hat{\Xi}_{11}(L_\mathrm{e},k+1)\check{\varphi}_1\\
%&=\lim_{k\to\infty}[\check{\varphi}_1;0]^{\mathrm{T}}\hat{\Xi}(L_e,k+1)[\check{\varphi}_1;0]\\
&=\lim_{k\to\infty}\left(\check{\varphi}-
\begin{bmatrix}
0\\
\check{\varphi}_2\\
\check{\varphi}_3
\end{bmatrix}\right)^{\mathrm{T}}\hat{\Xi}(L_\mathrm{e},k+1)\left(\check{\varphi}-
\begin{bmatrix}
0\\
\check{\varphi}_2\\
\check{\varphi}_3
\end{bmatrix}\right)\\
%&\leq2\lim_{k\to\infty}\bar{\varphi}^{\mathrm{T}}\hat{\Xi}(L_e,k+1)\bar{\varphi}+[0;\bar{\varphi}_2]^{\mathrm{T}}\hat{\Xi}(L_e,k+1)[0;\bar{\varphi_2}]\\
&\leq2\lim_{k\to\infty}\check{\varphi}^{\mathrm{T}}\hat{\Xi}(L_\mathrm{e},k+1)\check{\varphi}+2
\begin{bmatrix}
\check{\varphi}_2\\
\check{\varphi}_3
\end{bmatrix}^{\mathrm{T}}\hat{\Xi}_{22}(L_\mathrm{e},k+1)
\begin{bmatrix}
\check{\varphi}_2\\
\check{\varphi}_3
\end{bmatrix}\\
&<\infty.
\end{aligned}\end{equation*}
This together with the lower bound (21) implies
\begin{equation*}\begin{aligned}
\lim_{k\to\infty} \sum^{k-1}_{i=0}\check{\varphi}_1^{\mathrm{T}}A_{\mathrm{u}\bar{\mathrm{o}}}^i(A_{\mathrm{u}\bar{\mathrm{o}}}^i)^{\mathrm{T}}\check{\varphi}_1<\infty.
\end{aligned}\end{equation*}
Since all eigenvalues of $A_{\mathrm{u}\bar{\mathrm{o}}}$ are unstable, Lemma~2 forces $\check{\varphi}_1=0$. This completes the proof of sufficiency.
\end{proof}

\begin{remark} Proposition 5 and Theorem 3 demonstrate that achieving divergent MMSE estimates for $\varphi^{\mathrm{T}}x_k$ essentially requires the unstable part of $\varphi$ to lie outside the system's observable subspace. These results align with our intuition, as driving the estimation error to approach infinity inherently relies on the unstable mode of the system. 
%If the confidential direction contains no unstable parts (Theorem 3) or if its unstable parts are observed by the eavesdropper (Theorem 4), the estimation error in that direction remains bounded. 
\end{remark}

\subsection{Eliminating vectors from the observable subspace}
%Based on Theorems 1, 4, and 5, one knows that for $\varphi\notin\mathbb{S}$, then $\varphi$-precise privacy is achieved if $m_\mathrm{c}$, $m_\mathrm{e}$, and $L_\mathrm{u}$ satisfy 
%\begin{equation}
%L_\mathrm{u}=\begin{bmatrix}
%L_\mathrm{c}\\
%L_\mathrm{e}
%\end{bmatrix}\Theta,\ \varphi_\mathrm{u}\notin\mathrm{R}(\mathfrak{O}_{n_\mathrm{u}}(X_\mathrm{e}\Theta C_\mathrm{u},A_\mathrm{u})^{\mathrm{T}}),
%\end{equation}
%where $L_\mathrm{c}\in\mathbb{R}^{m_\mathrm{c}\times\mathrm{rank}(D)}$, $L_\mathrm{e}\in\mathbb{R}^{m_\mathrm{e}\times\mathrm{rank}(D)}$, $m_\mathrm{c}+m_\mathrm{e}=\mathrm{rank}(D)$, and $X_\mathrm{c}$ can be any full-row-rank matrix whose rows are linearly independent of each other and also linearly independent of the rows of $L_\mathrm{e}$. 

In this subsection, we will present a method for eliminating arbitrary vectors from a given observable subspace. 

\begin{lemma}
\textit{Let $ A\in\mathbb{R}^{n \times n} $ and $C\in\mathbb{R}^{m \times n} $ be matrices without any partition structure. Given a nonzero vector $\varphi\in\mathbb{R}^n $. Then, $\varphi\notin\mathrm{R}(\mathfrak{O}_n(C,A)^{\mathrm{T}})$ if and only if there exists a vector $\beta\in\mathbb{R}^n$ such that}
\begin{enumerate}
    \item $ CA^i \beta = 0 $ for $i=0,1,\cdots,n-1$,
    \item $ \varphi^{\mathrm{T}}\beta\neq 0 $.
\end{enumerate}
\end{lemma}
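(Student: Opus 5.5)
This is the fundamental theorem of linear algebra applied to the observability matrix: for any real matrix $M$, $\mathrm{R}(M^{\mathrm{T}})=\mathrm{N}(M)^{\perp}$. Set $M\triangleq\mathfrak{O}_n(C,A)$. Condition 1 says exactly that $M\beta=0$, i.e., $\beta\in\mathrm{N}(M)$, the unobservable subspace of $(C,A)$. The lemma therefore reads: $\varphi\notin\mathrm{N}(M)^{\perp}$ if and only if some $\beta\in\mathrm{N}(M)$ has $\varphi^{\mathrm{T}}\beta\neq0$. Read this way, it is just the definition of the orthogonal complement. I will still prove both directions explicitly, to keep the argument self-contained.

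\textit{Sufficiency.} Suppose such a $\beta$ exists and, for contradiction, $\varphi\in\mathrm{R}(M^{\mathrm{T}})$, say $\varphi=M^{\mathrm{T}}\eta$. Then
\[
\varphi^{\mathrm{T}}\beta=\eta^{\mathrm{T}}M\beta=\sum_{i=0}^{n-1}\eta_i^{\mathrm{T}}CA^i\beta=0,
\]
where $\eta$ is partitioned conformally with the block rows of $M$. This contradicts condition 2, so $\varphi\notin\mathrm{R}(M^{\mathrm{T}})$.

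\textit{Necessity.} Suppose $\varphi\notin\mathrm{R}(M^{\mathrm{T}})$. Decompose $\mathbb{R}^n=\mathrm{R}(M^{\mathrm{T}})\oplus\mathrm{R}(M^{\mathrm{T}})^{\perp}$ and write $\varphi=\varphi_1+\beta$ with $\varphi_1\in\mathrm{R}(M^{\mathrm{T}})$ and $\beta\in\mathrm{R}(M^{\mathrm{T}})^{\perp}$. Since $\varphi\notin\mathrm{R}(M^{\mathrm{T}})$, we have $\beta\neq0$.

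Orthogonality of $\beta$ to every column of $M^{\mathrm{T}}$ gives $M\beta=0$. This means $CA^i\beta=0$ for $i=0,\dots,n-1$, which is condition 1.

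For condition 2, note that $\varphi_1\perp\beta$, so $\varphi^{\mathrm{T}}\beta=\varphi_1^{\mathrm{T}}\beta+\beta^{\mathrm{T}}\beta=\|\beta\|^2>0$.

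For an explicit choice one can take $\beta=(I-M^{\mathrm{T}}(MM^{\mathrm{T}})^{\dagger}M)\varphi$, the orthogonal projection of $\varphi$ onto $\mathrm{N}(M)$. This is convenient for the constructive design that follows.

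No step is a real obstacle. The only point to verify carefully is the identification $\mathrm{R}(M^{\mathrm{T}})^{\perp}=\mathrm{N}(M)$ in the real Euclidean inner product, and that identity is exactly what the necessity computation checks.
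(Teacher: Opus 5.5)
Your proposal is correct and follows the paper's proof. The paper likewise rests on $\mathrm{R}(\mathfrak{O}_n(C,A)^{\mathrm{T}})=\mathrm{N}(\mathfrak{O}_n(C,A))^{\perp}$ and reads condition 1 as $\beta\in\mathrm{N}(\mathfrak{O}_n(C,A))$; you simply write out both directions and the explicit projection choice of $\beta$, where the paper cites the orthogonal complement principle in one line.
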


\begin{proof}
The proof is provided in Appendix I.
\end{proof}

%\begin{proposition}
%Given a nonzero vector $\varphi\notin\mathbb{S}$. Then, $\varphi_\mathrm{u}\notin\mathrm{R}(\mathfrak{O}_n(X_\mathrm{e}\Theta C_\mathrm{u},A_\mathrm{u})^{\mathrm{T}})$ if and only if there exists a vector $\beta\in\mathbb{R}^n$ such that 
%\begin{enumerate}
%    \item $ X_\mathrm{e}\Theta C_\mathrm{u}A_\mathrm{u}^i \beta = 0 $ for $i=0,1,\ldots,n-1$,
%    \item $ \varphi_\mathrm{u}^{\mathrm{T}}\beta\neq 0 $.
%\end{enumerate}
%\end{proposition}
%
%\begin{proof}
%It follows from the orthogonal complement principle that $\mathrm{R}(\mathfrak{O}_n(C,A)^\mathrm{T})=\mathrm{N}(\mathfrak{O}_n(C, A))^\perp$. Thus, $ \varphi\notin\mathrm{R}(\mathfrak{O}_n(C, A)^\mathrm{T})$ if and only if there exists $\beta \in \mathrm{N}(\mathfrak{O}_n(C, A))$ such that $ \varphi^\mathrm{T}\beta\neq0$. Meanwhile, the first condition is equivalent to $\beta \in \mathrm{N}(\mathfrak{O}_n(C, A))$. The proof is completed. 
%\end{proof}

For ease of presentation, partition $\varphi_\mathrm{u}=[\varphi_{\mathrm{u},1},\varphi_{\mathrm{u},2},\cdots,\varphi_{\mathrm{u},r_\mathrm{u}}]$ and $\varphi_{\mathrm{u},i}=[\varphi_{\mathrm{u},i,1};\varphi_{\mathrm{u},i,2};\cdots;\varphi_{\mathrm{u},i,\check{d}_i}]$, where $\varphi_{\mathrm{u},i}\in\mathbb{R}^{d_i}$ and $\varphi_{\mathrm{u},i,j}\in\mathbb{R}^{d_i/\check{d}_i}$. In the following corollary, we construct a family of analytical linear transformations that achieve $\varphi$-precise privacy. 

\begin{corollary}
\textit{If $\varphi_{\mathrm{u},i,j}\neq 0$ for some $i\in\{1,\cdots,r_\mathrm{u}\}$ and $j\in\{1,\cdots,\check{d}_i\}$, the $\varphi$-precise privacy is achieved if the parameters $m_c$, $m_e$, and $L_u$ are designed as (16).}
%\begin{align*}
%&m_\mathrm{c}=\min\{|\mathbb{U}_i(j)|,\mathrm{rank}(D)\},\ m_\mathrm{e}=\mathrm{rank}(D)-m_\mathrm{c},\nonumber\\
%&L_\mathrm{u} =\begin{bmatrix}
%X_{\mathrm{c}} & X_{12}\\
%0 & X_{\mathrm{e}}
%\end{bmatrix}U_{i(j)}^\mathrm{T}\Theta,
%\end{align*}
%where $X_{12}\in\mathbb{R}^{m_\mathrm{c}\times m_\mathrm{e}}$ is arbitrary, $X_\mathrm{c}\in\mathbb{R}^{m_\mathrm{c}\times m_\mathrm{c}}$ and $X_\mathrm{e}\in\mathbb{R}^{m_\mathrm{e}\times m_\mathrm{e}}$ can be any invertible matrix.
\end{corollary}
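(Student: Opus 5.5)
The plan is to verify the two requirements of Definition~2 separately: the user part comes from the earlier corollary, and the eavesdropper part comes from Theorem~3 via Lemma~3.

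\emph{User optimality.} The design (16) has the form (11) of Theorem~1: it uses the invertible factor $[X_\mathrm{c},X_{12};0,X_\mathrm{e}]U_{ij}^\mathrm{T}$, and $m_\mathrm{c}+m_\mathrm{e}=\mathrm{rank}(D)$ with an empty $Y$. Hence $L_\mathrm{u}\in\mathbb{L}$, exactly as in the proof of Corollary~1 (Appendix~B), so the first bullet of Definition~2 holds. It therefore remains to show $\lim_{k\to\infty}\mathrm{E}[\|\varphi^{\mathrm{T}}\tilde{x}(L_\mathrm{e},k)\|^2]=\infty$.

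\emph{Reduction via Theorem~3.} By Theorem~3 it suffices to show $\varphi_\mathrm{u}\notin\mathrm{R}(\mathfrak{O}_{n_\mathrm{u}}(L_\mathrm{e}C_\mathrm{u},A_\mathrm{u})^{\mathrm{T}})$. If $m_\mathrm{e}=0$, this range is $\{0\}$, and $\varphi_\mathrm{u}\neq0$ because $\varphi_{\mathrm{u},i,j}\neq0$, so we are done. Otherwise, apply Lemma~3 with $(C,A)=(L_\mathrm{e}C_\mathrm{u},A_\mathrm{u})$. We must then produce $\beta\in\mathbb{R}^{n_\mathrm{u}}$ with $L_\mathrm{e}C_\mathrm{u}A_\mathrm{u}^k\beta=0$ for all $k$ and $\varphi_\mathrm{u}^{\mathrm{T}}\beta\neq0$.

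\emph{Construction of $\beta$.} I would take $\beta$ supported on the coordinates $\mathbb{U}_{ij}$, namely $\beta$ equal to $\varphi_{\mathrm{u},i,j}$ placed in the $j$-th sub-block of block $i$ and zero elsewhere. Then $\varphi_\mathrm{u}^{\mathrm{T}}\beta=\|\varphi_{\mathrm{u},i,j}\|^2\neq0$. Two facts are needed.
\begin{enumerate}
\item The coordinate subspace $\mathcal{V}_{ij}=\mathrm{Span}\{e_{n_\mathrm{u},s}: s\in\mathbb{U}_{ij}\}$ is $A_\mathrm{u}$-invariant. This follows from the upper block-triangular structure (2) of $J_i$ with identity superdiagonal blocks: $J_i$ maps the first $j$ (block-)coordinates of block $i$ into themselves. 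This holds in both the real case and the $2\times2$ complex $\Lambda_i$ case.
\item $[L_\mathrm{e}C]_{\mathbb{U}_{ij}}=0$. Since $L_\mathrm{e}=[0,X_\mathrm{e}]U_{ij}^\mathrm{T}\Theta$, the rows of $[0,X_\mathrm{e}]U_{ij}^\mathrm{T}$ are combinations of the last $\mathrm{rank}(D)-\vartheta_{ij}$ left singular vectors of $[\Theta C]_{\mathbb{U}_{ij}}$. These vectors are orthogonal to its range, so $L_\mathrm{e}[C]_{\mathbb{U}_{ij}}=[0,X_\mathrm{e}]U_{ij}^\mathrm{T}[\Theta C]_{\mathbb{U}_{ij}}=0$. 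This is the same identity used for Corollary~1.
\end{enumerate}
Because $\mathbb{U}_{ij}$ lies within the first $n_\mathrm{u}$ indices, fact~2 says $L_\mathrm{e}C_\mathrm{u}$ annihilates $\mathcal{V}_{ij}$. Combining with fact~1 gives $L_\mathrm{e}C_\mathrm{u}A_\mathrm{u}^k\beta=0$ for all $k$.

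\emph{Conclusion and main obstacle.} Lemma~3 then yields $\varphi_\mathrm{u}\notin\mathrm{R}(\mathfrak{O}_{n_\mathrm{u}}(L_\mathrm{e}C_\mathrm{u},A_\mathrm{u})^{\mathrm{T}})$, and Theorem~3 gives the divergence. The main obstacle I expect is the bookkeeping in fact~2. The partition of $U_{ij}$ must match the dimensions of $m_\mathrm{c}=\min\{\vartheta_{ij},\mathrm{rank}(D)\}$; and when $\vartheta_{ij}=\mathrm{rank}(D)$ the case collapses to $m_\mathrm{e}=0$, which is the trivial case above. The invariance in fact~1 is routine.
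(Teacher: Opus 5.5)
Your proposal is correct and follows essentially the same route as the paper's proof in Appendix~J. Both arguments get user optimality from Theorem~1, dispose of the $m_\mathrm{e}=0$ case, obtain $[L_\mathrm{e}C]_{\mathbb{U}_{ij}}=0$ from the orthogonality of $U_{ij}$, use the same $\beta$ carrying $\varphi_{\mathrm{u},i,j}$ in its slot, and conclude via Lemma~3 and Theorem~3. The only cosmetic differences are that you invoke Theorem~3 rather than Proposition~4 directly when $m_\mathrm{e}=0$, and that you spell out the $A_\mathrm{u}$-invariance of the coordinate subspace indexed by $\mathbb{U}_{ij}$, which the paper compresses into ``$[L_\mathrm{e}C_\mathrm{u}A_\mathrm{u}^\ell]_{\mathbb{U}_{ij}}=0$ by the partition structure (3).''
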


\begin{proof}
The proof is provided in Appendix J.
\end{proof}

\begin{remark}
Information in dynamical systems exhibits a hierarchical dependency: protecting a specific quantity (e.g., velocity) requires concealing not only its direct measurements but also those of its causal predecessors (e.g., position), as the latter can be used to infer the former. This depth of dependency is captured by the index $j$ in the nonzero block $\varphi_{\mathrm{u},i,j}$. A larger $j$ indicates that the confidential information lies deeper in the causal chain, which in turn requires encrypting more measurement channels to sever the information flow ($\vartheta_{ij}\leq\vartheta_{i\ell}$ if $j\leq \ell$). Moreover, according to Proposition 2, a deeper information depth (larger $j$) also results in a faster divergence rate for the eavesdropper’s estimation error.
\end{remark}

\section{Simulations}
\subsection{Second-order differential system}
Coarse privacy is examined here, while its comparison with precise privacy is presented in the next subsection. 
%The simulation model follows subsection V-A of the preprint \cite{Hu9705534}.

Consider a classical damped mass-spring system governed by the second-order differential equation
\begin{equation*}
\begin{aligned}
\dot{x} = A x + B w,
\end{aligned}
\end{equation*}
where
\begin{equation*}
\begin{aligned}
A &= \begin{bmatrix}
0 & 0 & 1 & 0 \\
0 & 0 & 0 & 1 \\
-\frac{k_1}{m_1} & \frac{k_1}{m_1} & -\frac{c_1}{m_1} & \frac{c_1}{m_1} \\[2pt]
\frac{k_1}{m_2} & -\frac{k_1+k_2}{m_2} & \frac{c_1}{m_2} & -\frac{c_1+c_2}{m_2}
\end{bmatrix},\\
B &= \begin{bmatrix}
0 & 0  \\
0 & 0 \\
\frac{1}{m_1} & 0  \\
0 & \frac{1}{m_2}
\end{bmatrix}. 
\end{aligned}
\end{equation*}
The system parameters are set to $k_1=20$, $k_2=1$, $c_1=-2$, $c_2=-1$, $m_1=1$, and $m_2=2$.  Discretizing with a sampling period of $0.1$ yields the discrete-time state equation
\begin{equation*}
\begin{aligned}
x(k+1)=A_{\mathrm{dis}}x(k)+B_{\mathrm{dis}}w(k),
\end{aligned}
\end{equation*}
where 
\begin{equation*}
\begin{aligned}
&A_{\mathrm{dis}}=\begin{bmatrix}
0.8920 & 0.1082 & 0.1073 & -0.0074 \\
0.0549 & 0.9425 & -0.0037 & 0.1062 \\
-2.2191 & 2.2228 & 1.1139 & -0.1174 \\
1.1361 & -1.1892 & -0.0587 & 1.1092
\end{bmatrix}, \\
&B_{\mathrm{dis}}=\begin{bmatrix}
0.0053 & -0.0001 \\
-0.0001 & 0.0026 \\
0.1073 & -0.0037 \\
-0.0037 & 0.0531
\end{bmatrix}.
\end{aligned}
\end{equation*}
Moreover, the measurement matrix is $C = [e_{4,1}, e_{4,2}, e_{4,1}, e_{4,2}, e_{4,3}, e_{4,4}]^{\mathrm{T}}$, and the noise covariances are chosen as $S=0$, $Q=I$, and $R=0.25I$.

Fig.~1 compares the MSE of the user and the eavesdropper under the proposed coarse privacy-preserving method. The user's MSE matches the optimal Kalman filter, while the eavesdropper's total MSE diverges to infinity, confirming that coarse privacy is achieved. Fig.~1 also shows that the intermittent encryption scheme (with $\gamma=0.75$) still causes divergence, albeit at a slower growth rate.

To highlight the advantages of our method, we benchmark it against the coarse privacy-preserving method in \cite{Shang9882330}. Both methods drive the eavesdropper's total MSE to infinity, but Table~1 reveals that our method achieves reductions in computational complexity and communication overhead. The source of these improvements lies in a structural difference: the considered system possesses only complex eigenvalues, so the scheme in \cite{Shang9882330} must operate in the complex vector space, handling both real and imaginary parts, which increases both computational burden and data dimension. In contrast, our method performs lossless compression entirely in the real vector space, eliminating the need for complex arithmetic and resulting in a more efficient implementation.

\begin{figure}[b]
      \centering
      \includegraphics[scale=0.45]{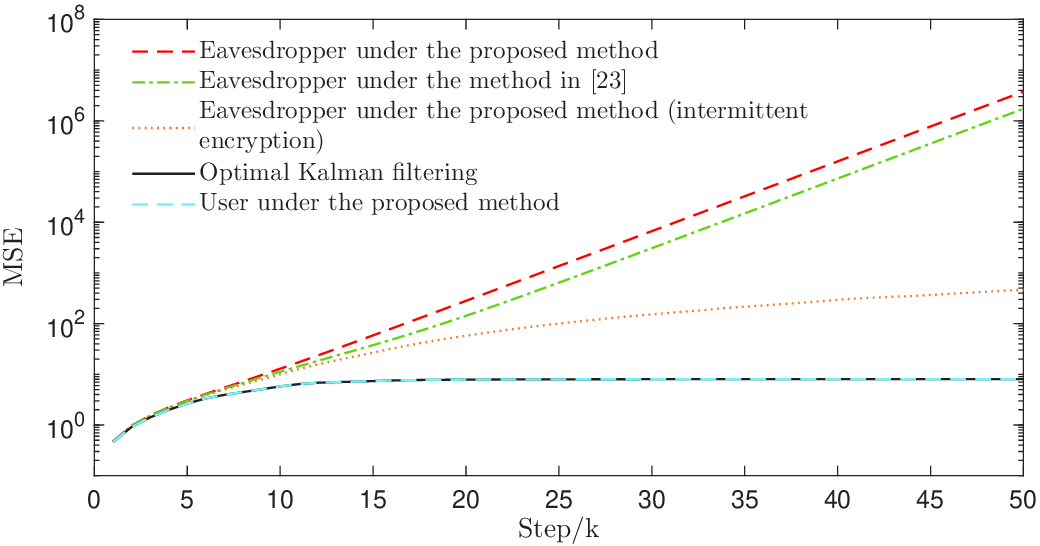}
      \caption{The MSEs of eavesdropper under the proposed coarse privacy-preserving method and the coarse privacy-preserving method in \cite{Shang9882330}.}
\end{figure}

\subsection{Target tracking system}
This subsection demonstrates precise privacy and highlights its key advantage over coarse privacy. 
%All simulation parameters follow subsection V-B of the preprint \cite{Hu9705534}.

Consider a target tracking scenario in three-dimensional space, where the system matrices are given by
\begin{equation*}
\left\{ \begin{array}{l}
A = \mathrm{diag}(A_x, A_y, A_z),\quad B = I,\\[2pt]
C = [e_{9,1}, e_{9,3}, e_{9,4}, e_{9,6}, e_{9,7}, e_{9,9}]^{\mathrm{T}},\\[2pt]
A_x = A_y = A_z = \begin{bmatrix}
1 & \Delta t & \frac{\Delta t^2}{2} \\
0 & 1 & \Delta t \\
0 & 0 & 1
\end{bmatrix},\ \Delta t = 0.1.
\end{array} \right.
\end{equation*}
The noise covariances are chosen as $Q = \mathrm{diag}(Q_x, Q_y, Q_z)$, $R = \mathrm{diag}(1, 0.04, 1, 0.04, 1, 0.04)$, and $S = 0$, where
\begin{equation*}\begin{aligned}
Q_\iota = S_\iota \begin{bmatrix}
\frac{\Delta t^5}{20} & \frac{\Delta t^4}{8} & \frac{\Delta t^3}{6} \\[2pt]
\frac{\Delta t^4}{8} & \frac{\Delta t^3}{3} & \frac{\Delta t^2}{2} \\[2pt]
\frac{\Delta t^3}{6} & \frac{\Delta t^2}{2} & \Delta t
\end{bmatrix},\quad \iota \in \{x, y, z\},
\end{aligned}\end{equation*}
with $S_x = S_y = 1$ and $S_z = 0.25$.

The confidential information is defined as the acceleration in the $z$-direction, i.e., $\varphi=e_{9,9}$. Fig.~2 compares the directional MSE of this confidential variable under three schemes: the coarse privacy-preserving method in \cite{Shang9882330}, the proposed coarse privacy-preserving method, and the proposed precise privacy-preserving method. It can be observed from Fig.~2 that under either coarse privacy-preserving method, the eavesdropper's directional MSE remains bounded, meaning that the confidential $z$-acceleration can still be accurately estimated and is therefore leaked. In sharp contrast, under the proposed precise privacy-preserving method, the directional MSE diverges to infinity. %To further illustrate this, Fig.~3 displays the eavesdropper's actual estimates of the $z$-acceleration under the three schemes. With either coarse privacy method, the eavesdropper faithfully reconstructs the confidential signal, whereas under precise privacy the estimate becomes completely unreliable. 
These results validate one of the core contributions of this work, namely that precise privacy provides directional protection that coarse privacy inherently cannot provide. 

\begin{table}
\centering
\caption{Comparison of the proposed method with the method in \cite{Shang9882330} in terms of computation and communication}
\begin{tabular}{|l|l|l|} 
\hline
                          & The proposed method & The method in \cite{Shang9882330}  \\ 
\hline
Computation time          &  $2.23\times 10^{-4}$s                   & $7.20\times 10^{-4}$s                  \\ 
\hline
Communication cost & $4$ scalars                   & $2\times 6$ scalars                  \\ 
\hline
Encryption cost    & $2(\times 0.75)$ scalars                   & $2$ scalars                  \\
\hline
\end{tabular}
\end{table}

\begin{figure}
      \centering
      \includegraphics[scale=0.45]{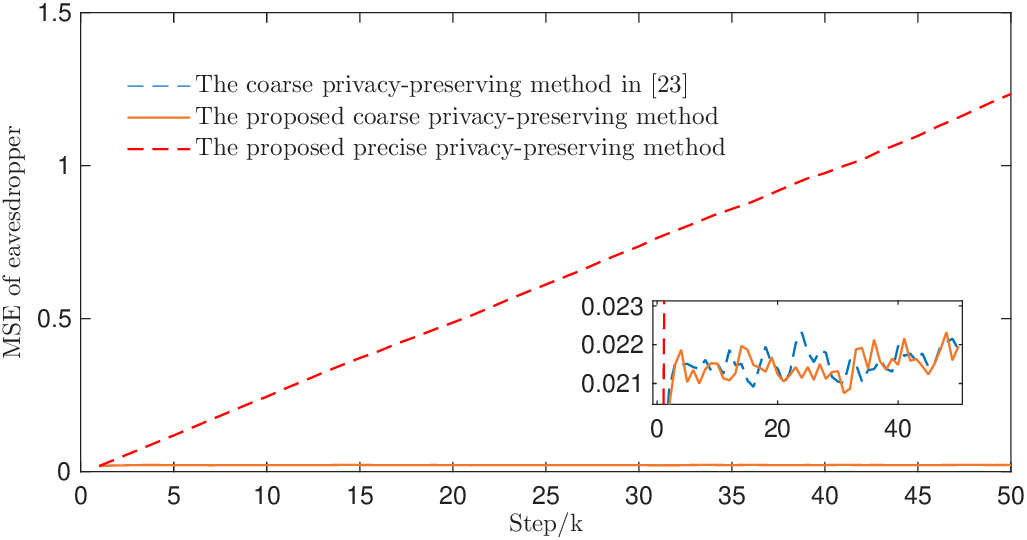}
      \caption{The MSEs of eavesdropper on the confidential direction under the coarse and precise privacy-preserving methods.}
\end{figure}

\section{Conclusion}
This paper proposed a directional encryption methodology for achieving both coarse and precise privacy in state estimation. For coarse privacy, an analytical linear transformation was derived to achieve lossless compression and destroy the eavesdropper's detectability. Furthermore, the growth rate of the eavesdropper's MSE was shown to follow a polynomial-exponential form governed by the encrypted unstable modes. For precise privacy, we proved that a confidential direction was protected if and only if its unstable component lies outside the eavesdropper's observable subspace, and we developed a systematic method to exclude arbitrary target vectors from that subspace. 

\appendix
\subsection{The proof of Proposition 1}
\begin{proof}
Suppose that the condition in the proposition holds, one can derive that 
\begin{equation*}\begin{aligned}
[KL_\mathrm{e}C]_{\mathbb{U}_{ij}}=K[L_\mathrm{e}C]_{\mathbb{U}_{ij}}=0\Rightarrow K[L_\mathrm{e}C]_{\mathbb{U}_{i1}}=0
\end{aligned}\end{equation*}
for any $K$ with appropriate dimension. Then, recall the partition structure of $A$ shown in (3), one has 
\begin{equation}\begin{aligned}
A+KL_\mathrm{e}C=
\begin{bmatrix}
\star & 0  &\star  \\
\star & \Lambda_i          &\star  \\
\star & 0          &\star
\end{bmatrix},
\end{aligned}\end{equation}
where the symbol $\star$ denotes some matrices that do not contribute to the proof. It can be derived from (22) that $\rho(A+KL_\mathrm{e}C)\geq\rho(\Lambda_i)\geq1$. Consequently, $(L_\mathrm{e}C,A)$ is not detectable. The proof is completed. 
\end{proof}

\subsection{The proof of Corollary 1}
\begin{proof}
It follows from Theorem 1 that $\hat{x}(L_\mathrm{u},k)=\hat{x}(I,k)$ and $\hat{P}(L_\mathrm{u},k)=\hat{P}(I,k)$ for all $k\geq1$ if the parameters satisfy (16). 

When $\vartheta_{ij}\geq \mathrm{rank}(D)$, the selection (16) gives $m_e=0$. Under such case, one can derive that 
\begin{equation*}\begin{aligned}
\lim_{k\to\infty}\mathrm{E}[\|\tilde{x}(L_\mathrm{e},k)\|^2]=\lim_{k\to\infty}\mathrm{Tr}(\mathfrak{L}_k(P_0,A,B,Q))=\infty.
\end{aligned}\end{equation*}

When $\vartheta_{ij}<\mathrm{rank}(D)$, the selection (16) gives $L_\mathrm{e}=X_\mathrm{e}([U_{ij}]_{\{\vartheta_{ij}+1,\cdots,\mathrm{rank}(D)\}})^{\mathrm{T}}\Theta$. Then, we have 
\begin{equation*}\begin{aligned}
[L_\mathrm{e}C]_{\mathbb{U}_{ij}}&=X_\mathrm{e}([U_{ij}]_{\{\vartheta_{ij}+1,\cdots,\mathrm{rank}(D)\}})^{\mathrm{T}}[\Theta C]_{\mathbb{U}_{ij}}\\
&=X_\mathrm{e}([U_{ij}]_{\{\vartheta_{ij}+1,\cdots,\mathrm{rank}(D)\}})^{\mathrm{T}}U_{ij}\Sigma_{ij}V_{ij}^\mathrm{T}\\
&=0,
\end{aligned}\end{equation*}
where the last equality follows from the orthogonality of $U_{ij}$. Thus, the condition of Proposition~1 is satisfied, and by Theorem~2 we have $\lim_{k\rightarrow\infty}\mathrm{E}\left[\|\tilde{x}(L_\mathrm{e},k)\|^2\right]=\infty$. The proof is completed.
\end{proof}

\subsection{The proof of Lemma 1}
\begin{proof}
Let \(n_\diamond = n_1+n_2\). Iterating the Riccati difference equation \(n_\diamond\) times gives
\begin{equation}
X(k n_\diamond) = \mathfrak{R}\bigl(X((k-1)n_\diamond);
\mathcal{A},\mathcal{B},\mathcal{C},\mathcal{Q},\mathcal{S},\mathcal{R}\bigr), 
\end{equation}
where
\begin{equation*}\begin{aligned}
\mathcal{A} &= A^{n_\diamond}, \quad
\mathcal{B} = \mathfrak{C}_{n_\diamond}(A,B), \quad
\mathcal{C} = \mathfrak{O}_{n_\diamond}(A,C),  \\
\mathcal{D} &= \mathfrak{T}_{n_\diamond}(A,B,C,0_{m\times l}), \quad\quad
\mathcal{Q} = I_{n_\diamond}\otimes Q,\\
\mathcal{R} &=
\begin{bmatrix} \mathcal{D} & I \end{bmatrix}
\begin{bmatrix}
\mathcal{Q} & I_{n_\diamond}\otimes S \\
(I_{n_\diamond}\otimes S)^{\mathrm{T}} & I_{n_\diamond}\otimes R
\end{bmatrix}
\begin{bmatrix} \mathcal{D}^{\mathrm{T}} \\ I \end{bmatrix},  \\
\mathcal{S} &=
\begin{bmatrix} \mathcal{Q} & I_{n_\diamond}\otimes S \end{bmatrix}
\begin{bmatrix} \mathcal{D} & I \end{bmatrix}^{\mathrm{T}}. 
\end{aligned}\end{equation*}
Introduce the transformed matrices
\begin{equation*}\begin{aligned}
\check{\mathcal{A}} &\triangleq \mathcal{A} - \mathcal{B}\mathcal{S}\mathcal{R}^{-1}\mathcal{C},\\ 
\check{\mathcal{Q}} &\triangleq \mathcal{Q} - \mathcal{S}\mathcal{R}^{-1}\mathcal{S}^{\mathrm{T}}.
\end{aligned}\end{equation*}
Then (23) can be rewritten as
\begin{equation}
X(k n_\diamond) = \mathfrak{R}\bigl(X((k-1)n_\diamond);\,
\check{\mathcal{A}},\mathcal{B},\mathcal{C},\check{\mathcal{Q}},0,\mathcal{R}\bigr).
\end{equation}

Notice that \(\check{\mathcal{Q}}\) is the Schur complement of
\begin{equation*}
\begin{bmatrix} I & 0 \\ \mathcal{D} & I \end{bmatrix}
\begin{bmatrix}
\mathcal{Q} & I_{n_\diamond}\otimes S \\
(I_{n_\diamond}\otimes S)^{\mathrm{T}} & I_{n_\diamond}\otimes R
\end{bmatrix}
\begin{bmatrix} I & 0 \\ \mathcal{D} & I \end{bmatrix}^{\mathrm{T}} > 0,
\end{equation*}
hence \(\check{\mathcal{Q}} > 0\). Since the pair in (17) is reachable, there exists \(c_1>0\) such that
\begin{equation*}
\mathcal{B}\mathcal{Q}\mathcal{B}^{\mathrm{T}} \ge c_1 (I_{n_1}\oplus 0_{n_2+n_3}). 
\end{equation*}
Applying the monotonicity of the Riccati equation, we obtain
\begin{equation}
X(k n_\diamond) \ge \mathfrak{R}_k\bigl(0;\,
\check{\mathcal{A}},\mathcal{B},\mathcal{C},c_1(I_{n_1}\oplus 0_{n_2+n_3}),0,\mathcal{R}\bigr).
\end{equation}

We now analyze the right-hand side of (25). First,
\begin{equation*}
\mathfrak{R}_1\bigl(0;\,
\check{\mathcal{A}},\mathcal{B},\mathcal{C},c_1(I_{n_1}\oplus 0_{n_2+n_3}),0,\mathcal{R}\bigr)
= c_1(I_{n_1}\oplus 0_{n_2+n_3}). 
\end{equation*}
From the block structure of \(C\) and \(\mathcal{C}\), it follows that
\begin{equation*}
\mathfrak{R}_1\bigl(0;\check{\mathcal{A}},I,\mathcal{C},c_1(I_{n_1}\oplus 0_{n_2+n_3}),0,\mathcal{R}\bigr)\,\mathcal{C}^{\mathrm{T}} = 0. 
\end{equation*}
Consequently,
\begin{equation*}\begin{aligned}
&\mathfrak{R}_2\bigl(0;\check{\mathcal{A}},I,\mathcal{C},c_1(I_{n_1}\oplus 0_{n_2+n_3}),0,\mathcal{R}\bigr) \notag \\
&\quad = c_1\begin{bmatrix}
A_{11}^{n_\diamond}(A_{11}^{n_\diamond})^{\mathrm{T}} + I_{n_1} & 0 \\ 0 & 0
\end{bmatrix}.
\end{aligned}\end{equation*}
Repeating this argument, we obtain 
\begin{equation}\begin{aligned}
&\mathfrak{R}_k\bigl(0;\check{\mathcal{A}},I,\mathcal{C},c_1(I_{n_1}\oplus 0_{n_2+n_3}),0,\mathcal{R}\bigr)\\
&\quad = c_1\begin{bmatrix}
\sum^{k-1}_{i=0}A_{11}^{in_\diamond}(A_{11}^{in_\diamond})^{\mathrm{T}}&0\\
0&0
\end{bmatrix},\ \forall\ k\geq1.
\end{aligned}\end{equation}

Combining (25) and (26) yields
\begin{equation*}
X_{11}(k n_\diamond)\geq \mathfrak{L}_k\bigl(0;A_{11}^{n_\diamond},I,c_1I\bigr). 
\end{equation*}
Since there exists \(c_2>0\) such that \(c_1I \ge c_2 \sum_{i=0}^{n_\diamond-1} A_{11}^{i}\bigl(A_{11}^{i}\bigr)^{\mathrm{T}}\), we have
\begin{equation*}
X_{11}(k n_\diamond) \ge c_2 \,\mathfrak{L}_{k n_\diamond}\bigl(0;A_{11},I,I\bigr).
\end{equation*}
For any integer \(k\ge0\) and \(\iota\in\{0,1,\cdots,n_\diamond-1\}\), an analogous argument gives
\begin{equation}
X_{11}(k n_\diamond + \iota) \ge c_2 \,\mathfrak{L}_{k n_\diamond}\bigl(0;A_{11},I,I\bigr).
\end{equation}

From (27) we obtain, for any \(k\ge1\) and \(\iota\in\{0,1,\cdots,n_\diamond-1\}\),
\begin{equation}\begin{aligned}
X_{11}(k n_\diamond + \iota) &\ge c_2 \,\mathfrak{L}_{k n_\diamond-1}\bigl(\mathfrak{L}\bigl(0;A_{11},I,I\bigr);A_{11},I,I\bigr)\\
&=c_2 \,\mathfrak{L}_{k n_\diamond-1}\bigl(I;A_{11},I,I\bigr).
\end{aligned}\end{equation}
Clearly, there exists a constant \(0<c_3\leq 1\) such that for all \(\iota\in\{0,1,\cdots,n_\diamond-1\}\),
\begin{equation}\begin{aligned}
I\geq c_3\mathfrak{L}_{\iota+1}\bigl(0;A_{11},I,I\bigr).
\end{aligned}\end{equation}
Substituting (29) into (28) yields, for any \(k\ge1\) and \(\iota\in\{0,1,\cdots,n_\diamond-1\}\),
\begin{equation*}\begin{aligned}
X_{11}(k n_\diamond + \iota) \ge c_2c_3 \,\mathfrak{L}_{k n_\diamond+\iota}\bigl(0;A_{11},I,I\bigr).
\end{aligned}\end{equation*}
This completes the proof.
\end{proof}

\subsection{The proof of Proposition 2}
Before proving Proposition 2, we introduce the following lemma. 

\begin{lemma}
Consider the difference equation 
\begin{equation*}
X(k) = JX(k) J^{\mathrm{T}} + I,\quad X(0) = 0,
\end{equation*}
where $J\in\mathbb{R}^{n\times n}$ is a real Jordan block. Its structure is shown in (2). Let $\lambda$ be an eigenvalue of $J$. Then, for some $c>0$, 
\begin{equation*}
    \mathrm{Tr}(X(k))\geq 
    \begin{cases}
        c\rho(J)^{2k}k^{2n-2},&\text{$\lambda$ is real and $\rho(J)>1$},\\
        ck^{2n-1},&\text{$\lambda$ is real and $\rho(J)=1$},\\
        c\rho(J)^{2k}k^{n-2},&\text{$\lambda$ is complex and $\rho(J)>1$},\\
        ck^{n-1},&\text{$\lambda$ is complex and $\rho(J)=1$}.
    \end{cases}
\end{equation*}
\end{lemma}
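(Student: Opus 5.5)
The plan is to solve the recursion explicitly and bound the trace from below by a single entry (or a single $2\times 2$ block) of the powers of $J$. I read the equation as $X(k+1)=JX(k)J^{\mathrm T}+I$, $X(0)=0$, so that
\begin{equation*}
X(k)=\mathfrak{L}_k(0;J,I,I)=\sum_{i=0}^{k-1}J^i(J^i)^{\mathrm T},\qquad \mathrm{Tr}(X(k))=\sum_{i=0}^{k-1}\|J^i\|_F^2 .
\end{equation*}
It then suffices to bound $\|J^i\|_F^2$ from below by the squared norm of the top-right entry or block of $J^i$. Finitely many small $k$ are handled by shrinking $c$: for $k\ge1$ we have $X(k)\ge I$, so $\mathrm{Tr}(X(k))\ge n>0$. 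For $k=0$ the right-hand sides vanish in the cases where the power of $k$ is positive. If $n=1$ or $n=2$ with $\rho(J)>1$, the bound is read for $k\geq1$, as in Proposition~2's use of the lemma.

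\emph{Real eigenvalue.} Write $J=\lambda I+N$ with $N$ the nilpotent shift. Since $\lambda I$ and $N$ commute and $N^n=0$, the binomial expansion gives
\begin{equation*}
J^i=\sum_{p=0}^{n-1}\binom{i}{p}\lambda^{i-p}N^p .
\end{equation*}
The $(1,n)$ entry of $J^i$ is therefore $\binom{i}{n-1}\lambda^{i-n+1}$. Hence $\|J^i\|_F^2\ge \binom{i}{n-1}^2\rho(J)^{2(i-n+1)}\ge c_0\, i^{2n-2}\rho(J)^{2i}$ for all $i\ge 2(n-1)$, using $\binom{i}{n-1}\ge (i/2)^{n-1}/(n-1)!$ there.

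If $\rho(J)>1$, keep only the last term $i=k-1$. This gives $\mathrm{Tr}(X(k))\ge c_0\rho(J)^{-2}(k-1)^{2n-2}\rho(J)^{2k}$, and $(k-1)/k\ge1/2$ for $k\ge2$ absorbs the rest into $c$. If $\rho(J)=1$, sum instead: $\sum_{i=2n-2}^{k-1}i^{2n-2}\ge c' k^{2n-1}$ for $k$ large, by comparison with $\int x^{2n-2}\,dx$ over the upper half of the range.

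\emph{Complex pair.} Here $J=I_{n/2}\otimes\Lambda+N\otimes I_2$ with $N\in\mathbb{R}^{(n/2)\times(n/2)}$ the shift. The two summands commute, so
\begin{equation*}
J^i=\sum_{p}\binom{i}{p}N^p\otimes\Lambda^{i-p} .
\end{equation*}
The top-right $2\times2$ block of $J^i$ is $\binom{i}{n/2-1}\Lambda^{i-n/2+1}$. The key observation is that $\Lambda=\rho(J)\,\Omega$ with $\Omega$ a rotation matrix, so $\|\Lambda^q\|_F^2=2\rho(J)^{2q}$ exactly, with no oscillation problem. This yields $\|J^i\|_F^2\ge c_0\, i^{n-2}\rho(J)^{2i}$, since $2(n/2-1)=n-2$. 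The same two summation arguments as in the real case then give $c\rho(J)^{2k}k^{n-2}$ when $\rho(J)>1$ and $ck^{n-1}$ when $\rho(J)=1$.

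\emph{Main obstacle.} There is no deep obstacle; the main care point is the complex case. Bounding $|(J^i)_{1,n}|$ entrywise there would fail, because the entries of $\Lambda^q$ oscillate like $\cos(q\theta)$ and can be near zero. Using the Frobenius norm of the whole $2\times2$ block together with the orthogonality of $\Lambda/\rho(J)$ avoids this. The remaining work is bookkeeping of constants, namely choosing $c$ uniformly for the small-$k$ range.
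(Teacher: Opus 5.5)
Your proof is correct. For a real eigenvalue it follows the paper's skeleton: expand $J^i$ binomially around the commuting nilpotent shift, isolate the $\binom{i}{n-1}$ term, then keep only the summand $i=k-1$ when $\rho(J)>1$ and sum over $i$ when $\rho(J)=1$.

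The paper computes the whole trace, $\mathrm{Tr}(J^i(J^i)^{\mathrm T})=\sum_{j=0}^{n-1}(n-j)\binom{i}{j}^2\lambda^{2(i-j)}$, using trace-orthogonality of distinct shift powers. It then needs a limit of ratios plus a ``continuity argument'' to get a uniform constant. You instead bound by the single $(1,n)$ entry with $\binom{i}{n-1}\geq (i/2)^{n-1}/(n-1)!$, which is more elementary and gives the constant directly.

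The complex case is where the routes genuinely differ. The paper conjugates $J$ by a complex similarity into two standard Jordan blocks for $\lambda$ and $\bar\lambda$ and asserts that ``the same argument'' applies. That tacitly requires absorbing the condition number of the similarity into $c$. You stay in real arithmetic with $J=I_{n/2}\otimes\Lambda+N\otimes I_2$ and the exact identity $\|\Lambda^q\|_F^2=2\rho(J)^{2q}$. This handles the oscillation of the entries of $\Lambda^q$ cleanly and makes the case explicit. The paper's route is shorter and reuses one argument for both cases; yours is self-contained.

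Your version also fixes some loose ends in the paper:
\begin{itemize}
\item the recursion typo ($X(k)$ on both sides instead of $X(k+1)$ on the left);
\item the case $\lambda\leq -1$, since the paper only writes out $\lambda=1$ and $\lambda>1$;
\item the fact that for $n=1$ (real) or $n=2$ (complex) with $\rho(J)>1$ the bound fails at $k=0$, so it must be read for $k\geq 1$.
\end{itemize}
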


\begin{proof}
We first consider the case where $\lambda$ is real. By the nilpotent property of $J-\lambda I$, it can be shown that 
\begin{equation*}
J^i = \sum_{j=0}^{n-1} \binom{i}{j} \lambda^{i-j} (J-\lambda I)^{j},\qquad \forall i \geq 0,
\end{equation*}
where $\binom{n}{m}\triangleq\frac{n!}{m!(n-m)!}$ denotes the binomial coefficient. Moreover, note that $\mathrm{Tr}\bigl((J-\lambda I)^i((J-\lambda I)^j)^\mathrm{T}\bigr)=0$ for $i\neq j$. Consequently, we have
\begin{equation*}\begin{aligned}
\mathrm{Tr}\bigl(J^{i} (J^i)^\mathrm{T}\bigr)&=\sum_{j=0}^{n-1} (n-j) \binom{i}{j}^{2} \lambda^{2(i-j)}.
%&=\binom{i}{n-1}^{2} \lambda^{-2(n-1)}+\sum_{j=0}^{n-2} (n-j) \binom{i}{j}^{2} \lambda^{-2j}.
\end{aligned}\end{equation*}

Observe that $\lim_{i\to\infty}\binom{i}{j}/\frac{i^j}{j!}=1$, i.e., the binomial coefficient is asymptotically equivalent to $\frac{i^j}{j!}$. Hence, it follows that
\begin{equation*}\begin{aligned}
\lim_{i\to\infty}\frac{\mathrm{Tr}\bigl(J^{i} (J^i)^\mathrm{T}\bigr)}{\lambda^{2i}i^{2(n-1)}}&=\frac{1}{((n-1)!)^2\lambda^{2(n-1)}}=\text{constant}>0.
\end{aligned}\end{equation*}
By the definition of the limit, there exists a constant $c_1>0$ such that $\mathrm{Tr}\bigl(J^{i} (J^i)^\mathrm{T}\bigr)\geq c_1\lambda^{2i}i^{2(n-1)}$ for all sufficiently large $i$. A standard continuity argument extends this inequality to all $i\ge 0$ (with a smaller $c_1$).

If $\lambda=1$, we obtain for some $c_2>0$,
\begin{equation*}\begin{aligned}
\mathrm{Tr}(X(k))&=\sum^{k-1}_{i=0}\mathrm{Tr}\bigl(J^{i} (J^i)^\mathrm{T}\bigr)\geq c_1\sum^{k-1}_{i=0}i^{2(n-1)}\\
&\geq c_1\frac{(k-1)^{2n-1}}{2n-1}\geq c_2k^{2n-1},
\end{aligned}\end{equation*}
where the last inequality uses $\lim_{k\to\infty}(k-1)^p/k^p=1$ (so that $(k-1)^{2n-1}\ge c_2' k^{2n-1}$ for some $c_2'>0$ and all $k\ge1$). If $\lambda>1$, then
\begin{equation*}\begin{aligned}
\mathrm{Tr}(X(k))&=\sum^{k-1}_{i=0}\mathrm{Tr}\bigl(J^{i} (J^i)^\mathrm{T}\bigr)\geq \mathrm{Tr}\bigl(J^{k-1} (J^{k-1})^\mathrm{T}\bigr)\\
&\geq c_1\lambda^{2(k-1)}k^{2(n-1)}.
\end{aligned}\end{equation*}

When \(\lambda\) is complex, the real Jordan block \(J\) can be transformed by an invertible linear transformation into a block-diagonal form consisting of two standard Jordan blocks with eigenvalues \(\lambda\) and \(\bar{\lambda}\) and ones on the super-diagonals \cite{horn2012matrix}. The same argument then yields the stated bounds for the complex case. This completes the proof. 
\end{proof}

Then, it is ready to prove Proposition 2. 

\begin{proof}
Without loss of generality, assume that \([L_\mathrm{e}C]_{\mathbb{U}_{1j}}=0\) for some \(j\in\{1,\dots,\check{d}_1\}\). All other cases follow analogously by applying a suitable permutation similarity transformation.

If \(m_{\mathrm{e}}=0\), we have
\begin{equation*}\begin{aligned}
\hat{P}(k,L_\mathrm{e})&=\mathfrak{L}(\hat{P}(k-1,L_\mathrm{e});A,B,Q)\\
&=\mathfrak{R}(\hat{P}(k-1,L_\mathrm{e});A,B,0,Q,0,I).
\end{aligned}\end{equation*}
Using the block structure of \(A\) and \(B\) together with Lemma~1, there exist \(c>0\) and \(N\ge 0\) such that
\begin{equation*}\begin{aligned}
\mathrm{Tr}(\hat{P}(k,L_\mathrm{e}))\geq c\,\mathrm{Tr}(\mathfrak{L}_k(0;A_\mathrm{u},I,I)),\quad \forall\, k\geq N.
\end{aligned}\end{equation*}
A standard continuity argument extends this inequality to all $k\ge 0$. The claimed growth rate then follows directly from Lemma~4.

If \(m_{\mathrm{e}}\neq0\), the same partitioning of \(A\) and \(B\) together with Lemma~1 yields, for some \(c>0\),
\begin{equation*}\begin{aligned}
\mathrm{Tr}(\hat{P}(k,L_\mathrm{e}))\geq c\mathrm{Tr}\left(\mathfrak{L}_k(0;\underbrace{\begin{bmatrix}
\Lambda_1   &I          &0\\
0          &\ddots &I     \\
0          &0      &\Lambda_1
\end{bmatrix}}_{j\text{ copies of }\Lambda_1},I,I)\right).
\end{aligned}\end{equation*}
Applying Lemma~4 to the matrix inside the trace gives the desired lower bound. This completes the proof.
\end{proof}

\subsection{The proof of Proposition 3}
\begin{proof}
Define the conditional error covariance matrix 
\begin{equation*}\begin{aligned}
&\mathrel{\phantom{\triangleq}}\hat{P}(L_\mathrm{e},\gamma,k+1)\\
&\triangleq\mathrm{E}\big[(x(k+1)-\mathrm{E}[x(k+1)|\mathbb{Z}(L_\mathrm{e},\gamma,k)])\\
&\mathrel{\phantom{\triangleq}}\times(x(k+1)-\mathrm{E}[x(k+1)|\mathbb{Z}(L_\mathrm{e},\gamma,k)])^\mathrm{T}|\mathbb{Z}(L_\mathrm{e},\gamma,k)\big].
\end{aligned}\end{equation*}
Given the information set \(\mathbb{Z}(L_\mathrm{e},\gamma,k)\), Kalman filtering theory gives
\begin{equation*}\begin{aligned}
&\hat{P}(L_\mathrm{e},\gamma,k+1)=\\
&(1-\gamma(k))\mathfrak{K}(\hat{P}(L_\mathrm{e},\gamma,k),A,B,L_\mathrm{u}C,Q,SL_\mathrm{u}^{\mathrm{T}},L_\mathrm{u}RL_\mathrm{u}^\mathrm{T})\\
&+\gamma(k)\mathfrak{K}(\hat{P}(L_\mathrm{e},\gamma,k),A,B,L_\mathrm{e}C,Q,SL_\mathrm{e}^\mathrm{T},L_\mathrm{e}RL_\mathrm{e}^\mathrm{T}),
\end{aligned}\end{equation*}
where $\hat{P}(L_\mathrm{e},\gamma,0)=P(0)$. 

Consider an intermediate difference equation
\begin{equation*}\begin{aligned}
&X(k+1)=\gamma(k)\mathfrak{K}(X(k),A,B,L_\mathrm{e}C,Q,SL_\mathrm{e}^{\mathrm{T}},L_\mathrm{e}RL_\mathrm{e}^\mathrm{T})
\end{aligned}\end{equation*}
with initial condition $X(0)=P(0)$. Then, we will prove that $X(k)\leq \hat{P}(L_\mathrm{e},\gamma,k)$ for $k\geq0$ by an induction. It is trivial that $X(0)\leq \hat{P}(L_\mathrm{e},\gamma,0)$. Suppose that $X_{k}\leq \hat{P}(L_\mathrm{e},\gamma,k)$. Then, it follows from the monotonicity of the Riccati equation that
\begin{equation*}\begin{aligned}
X(k+1)&\leq\gamma(k)\mathfrak{K}(\hat{P}(L_\mathrm{e},\gamma,k),A,B,L_\mathrm{e}C,Q,SL_\mathrm{e}^{\mathrm{T}},L_\mathrm{e}RL_\mathrm{e}^\mathrm{T})\\
&\leq \hat{P}(L_\mathrm{e},\gamma,k+1).
\end{aligned}\end{equation*}
This also implies that 
\begin{equation*}\begin{aligned}
&\mathrel{\phantom{=}}\mathrm{E}[\|x(k+1)-\mathrm{E}[x(k+1)|\mathbb{Z}(L_\mathrm{e},\gamma,k)]\|^2]\\
&=\mathrm{Tr}(\mathrm{E}[\hat{P}(L_\mathrm{e},\gamma,k+1)])\\
&\geq \mathrm{Tr}(\mathrm{E}[X(k+1)]).
\end{aligned}\end{equation*}

Then, utilizing the law of total probability yields 
\begin{equation}\begin{aligned}
&\mathrel{\phantom{=}}\mathrm{E}[X(k+1)]\\
&=\sum_{\Omega\in\{0,1\}^{k+1}}\mathrm{Pr}\big((\gamma(0),\cdots,\gamma(k))=\Omega\big)\\
&\ \ \ \ \ \ \ \ \ \ \ \ \ \times\mathrm{E}\big[X_{k+1}|(\gamma(0),\cdots,\gamma(k))=\Omega\big]\\
&\geq \gamma^{k+1}\mathrm{E}\big[X(k+1)|(\gamma(0),\cdots,\gamma(k))=(1,\cdots,1)\big].
\end{aligned}\end{equation}
By (30) and Proposition~2, there exists \(c>0\) such that
\begin{equation*}\begin{aligned}
\mathrm{Tr}(\mathrm{E}[X(k)])\geq \begin{cases}
        c(\gamma\rho(J_i)^2)^kk^{2j-2},\ &\text{if $\rho(J_i)>1$,}\\
        c\gamma^kk^{2j-1},\ &\text{if $\rho(J_i)=1$.}
    \end{cases}
\end{aligned}\end{equation*}
Consequently, \(\lim_{k\to\infty}\mathrm{Tr}(\mathrm{E}[X(k)])=\infty\) whenever \(\gamma\rho(J_i)^2\ge 1\). This completes the proof.
\end{proof}

\subsection{The proof of Lemma 2}
\begin{proof}
Define $a_k = x^{\mathrm{T}} X^k (X^k)^{\mathrm{T}} x$. Suppose, for contradiction, that $\lim_{k\to\infty}\sum_{i=0}^k a_i < \infty$ for some $x \neq 0$. Since the sequence of partial sums is monotonically non-decreasing and bounded, it follows that $\lim_{k\to\infty} a_k = 0$. 

Denote the Jordan canonical form of $X$ as 
\begin{equation*}\begin{aligned}
T_x^{-1}XT_x=J_x=J_{x,1}\oplus J_{x,2}\oplus\cdots\oplus J_{x,r_x}\in\mathbb{C}^{n\times n},
\end{aligned}\end{equation*}
where $J_{x,i}\in\mathbb{C}^{d_{x,i}\times d_{x,i}}$ is the Jordan block associated with eigenvalue $\lambda_{x,i}$. Define $\check{x} = T^{\mathrm{H}}x$ with partitioned components $\check{x} = [\check{x}_1;\cdots; \check{x}_{r_x}]$ corresponding to the Jordan blocks. Since $T$ is invertible, one has $\check{x}\neq 0$, i.e., $\check{x}_i\neq0$ for some $i\in\{1,2,\cdots,r_x\}$. Then, if we can prove that $\check{x}_i\neq0$ for some $i\in\{1,2,\cdots,r_x\}$ implies that $\lim_{k\to\infty}a_k\neq0$, then the lemma can be proved. 

Suppose $\check{x}_i\neq0$ for some $i\in\{1,2,\cdots,r_x\}$. Note that the power of the Jordan block $J_{x,i}$ is given by 
\begin{equation*}
J_{x,i}^k = 
\begin{bmatrix}
\lambda_{x,i}^k & \dbinom{k}{1}\lambda_{x,i}^{k-1} & \cdots & \dbinom{k}{d_{x,i}-1}\lambda_{x,i}^{k-d_{x,i}+1} \\
0 & \lambda_{x,i}^k & \cdots & \dbinom{k}{d_{x,i}-2}\lambda_{x,i}^{k-d_{x,i}+2} \\
\vdots & \vdots & \ddots & \vdots \\
0 & 0 & \cdots & \lambda_{x,i}^k
\end{bmatrix}.
\end{equation*}

For any $0\leq j\leq d_{x,i}-1$, one can derive that 
\begin{equation*}\begin{aligned}
\lim_{k\to\infty}\frac{\dbinom{k}{j}\lambda_{x,i}^{k-j}}{\lambda_{x,i}^{k-d_{x,i}}k^j}
%&=\lim_{k\to\infty}\frac{\dbinom{k}{j}\lambda_{x,i}^{d_{x,i}-j-1}}{k^j}\\
%&=\lim_{k\to\infty}\frac{\frac{k!}{j!(k-j)!}\lambda^{r_1-j}}{k^j}\\
%&=\lim_{k\to\infty}\frac{\frac{k!}{(k-j)!}\lambda^{r_1-j}}{k^jj!}\\
%&=\lim_{k\to\infty}\frac{k(k-1)\cdots(k-j+1)\lambda_{x,i}^{d_{x,i}-j}}{k^jj!}\\
%&=\lim_{k\to\infty}\prod^{j-1}_{i=1}(1-\frac{i}{k})\frac{\lambda_{x,1}^{d_{x,1}-j}}{j!}\\
&=\frac{\lambda_{x,i}^{d_{x,i}-j}}{j!}.
\end{aligned}\end{equation*}
%Thus, one has 
%\begin{equation*}\begin{aligned}
%\dbinom{k}{j}\lambda_{x,1}^{k-j}\sim\frac{\lambda_{x,1}^{d_{x,1}-j}}{j!}\lambda_{x,1}^{k-d_{x,1}+1}k^j,
%\end{aligned}\end{equation*}
%where the symbol $\sim$ means that the two quantities are asymptotic, that is, their ratio tends to $1$ as $k$ tends to infinity. 
In this case, for different values of $j$, the term 
\[\dbinom{k}{j}\lambda_{x,i}^{k-j}\]
exhibits distinct asymptotic growth rates as $k\to \infty$. This implies that $\lim_{k\to\infty}(J_{x,i}^k)^{\mathrm{H}}\check{x}_i\neq0$. Meanwhile, note that 
\begin{equation*}\begin{aligned}
a_k\geq \lambda_{\min}(T_x^{-1}T_x^{-\mathrm{H}})\sum^{r_x}_{j=1}\check{x}_j^{\mathrm{H}}J_{x,j}^k(J_{x,j}^k)^{\mathrm{H}}\check{x}_j.
\end{aligned}\end{equation*}
Consequently, one has 
\begin{equation*}\begin{aligned}
\lim_{k\to\infty}a_k\geq\lambda_{\min}(T_x^{-1}T_x^{-\mathrm{H}})\lim_{k\to\infty}\check{x}_i^{\mathrm{H}}J_{x,i}^k(J_{x,i}^k)^{\mathrm{H}}\check{x}_i>0.
\end{aligned}\end{equation*}
The proof is completed. 
\end{proof}

\subsection{The proof of Proposition 4} 
\begin{proof}
The second statement is sufficient for the first one. It remains to show that the first statement implies the second one. 
One can derive that for some $c>0$, 
\begin{equation}\begin{aligned}
&\mathrel{\phantom{=}}\mathfrak{L}_{kn_\mathrm{c}}(0;A,B,Q)\\
&=\sum^{k}_{i=1}(A^{n_\mathrm{c}})^{i-1}\sum^{n_\mathrm{c}-1}_{j=0}A^{j}BQB^{\mathrm{T}}(A^{j})^{\mathrm{T}}((A^{n_\mathrm{c}})^{i-1})^{\mathrm{T}}\\
&\geq\sum^{k}_{i=1}(A^{n_\mathrm{c}})^{i-1}(c I_{n_\mathrm{u}}\oplus 0_{n_\mathrm{s}+n_{\bar{\mathrm{c}}}})((A^{n_\mathrm{c}})^{i-1})^{\mathrm{T}}\\
&=\begin{bmatrix}
c\sum^{k}_{i=1}(A_\mathrm{u}^{n_\mathrm{c}})^{i-1}((A_\mathrm{u}^{n_\mathrm{c}})^{i-1})^{\mathrm{T}}   &0\\
0     &0\\
\end{bmatrix}.
\end{aligned}\end{equation}

According to (31) and Lemma 2, one knows that the first $n_\mathrm{u}$ components of $p$ will be $0$ if $\lim_{k\rightarrow\infty}p^{\mathrm{T}}\mathfrak{L}_k(0;A,B,Q)p<\infty$. In this case, it can be derived that 
\begin{equation}\begin{aligned}
p^{\mathrm{T}}A^i=p^{\mathrm{T}}\begin{bmatrix}
0 &0 &0\\
0 &A_s &A_{\prime2}\\
0 &0  &A_{\bar{\mathrm{c}}}\\
\end{bmatrix}^i,\ \forall\ i\geq0.
\end{aligned}\end{equation}
Note that \[\rho\left(\begin{bmatrix}
A_s &A_{\prime2}\\
0  &A_{\bar{\mathrm{c}}}
\end{bmatrix}\right)<1.\] Thus, for any $X\geq0$, one can conclude from (32) and Lyapunov theory that 
\begin{equation}\begin{aligned}
&\mathrel{\phantom{=}}\lim_{k\rightarrow\infty}p^{\mathrm{T}}\mathfrak{L}_k(X;A,B,Q)p\\
&=\lim_{k\rightarrow\infty}p^{\mathrm{T}}\mathfrak{L}_k(X;\begin{bmatrix}
0 &0 &0\\
0 &A_s &A_{\prime2}\\
0 &0  &A_{\bar{\mathrm{c}}}\\
\end{bmatrix},B,Q)p<\infty.
%&=\lim_{k\rightarrow\infty}p^{\mathrm{T}}T^{-1}\bar{A}^kX(\bar{A}^k)^{\mathrm{T}}T^{-\mathrm{T}}p\\
%&\quad+p^{\mathrm{T}}T^{-1}\sum^{k}_{i=1}\bar{A}^{i-1}\bar{B}Q\bar{B}^{\mathrm{T}}(\bar{A}^{i-1})^{\mathrm{T}}T^{-\mathrm{T}}p\\
%&=\lim_{k\rightarrow\infty}p^{\mathrm{T}}T^{-1}\bar{A}_s^kX(\bar{A}_s^k)^{\mathrm{T}}T^{-\mathrm{T}}p\\
%&\quad+p^{\mathrm{T}}T^{-1}\sum^{k}_{i=1}\bar{A}_s^{i-1}\bar{B}Q\bar{B}^{\mathrm{T}}(\bar{A}_s^{i-1})^{\mathrm{T}}T^{-\mathrm{T}}p<\infty.
\end{aligned}\end{equation}

It follows from (31) and Lemma 2 that 
\begin{equation}\begin{aligned}
\mathbb{S}\subseteq\mathrm{Span}(e_{n,n_\mathrm{u}+1},e_{n,n_\mathrm{u}+2},\cdots,e_{n,n}).
\end{aligned}\end{equation}
Conversely, since (33) holds for every \(p\in\mathrm{Span}(e_{n,n_\mathrm{u}+1},e_{n,n_\mathrm{u}+2},\cdots,e_{n,n})\), we have 
\begin{equation}\begin{aligned}
\mathrm{Span}(e_{n,n_\mathrm{u}+1},e_{n,n_\mathrm{u}+2},\cdots,e_{n,n})\subseteq\mathbb{S}.
\end{aligned}\end{equation}
The proof is completed. 
\end{proof}

\subsection{The proof of Proposition 5}
\begin{proof}
It can be derived that 
\begin{equation}\begin{aligned}
\mathrm{E}[(x_k-\mathrm{E}[x_k])(x_k-\mathrm{E}[x_k])^{\mathrm{T}}]=\mathfrak{L}_k(P(0);A,B,Q).\nonumber
\end{aligned}\end{equation}

If $\varphi\in\mathbb{S}$, it follows from the optimality of Kalman filtering \cite{anderson2005optimal} that for any $m_\mathrm{c}$, $m_\mathrm{e}$, and $L_\mathrm{u}$, the inequality 
\begin{equation*}\begin{aligned}
\mathrm{E}[\|\varphi^{\mathrm{T}}\tilde{x}(L_\mathrm{e},k)\|^2]&\leq\mathrm{E}[\|\varphi^{\mathrm{T}}(x_k-\mathrm{E}[x_k])\|^2]\\
&=\varphi^{\mathrm{T}}\mathfrak{L}_k(P(0);A,B,Q)\varphi
\end{aligned}\end{equation*}
holds. Thus, one can conclude from Proposition 4 that 
\begin{equation*}\begin{aligned}
\lim_{k\rightarrow\infty}\mathrm{E}[\|\varphi^{\mathrm{T}}\tilde{x}(L_\mathrm{e},k)\|^2]<\infty
\end{aligned}\end{equation*}
for any $m_\mathrm{c}$, $m_\mathrm{e}$, and $L_\mathrm{u}$. 

If $\varphi\notin\mathbb{S}$, let $m_\mathrm{e}=0$. In such case, one has $\mathbb{Z}(L,k)=\emptyset$. Thus, it holds that 
\begin{equation*}\begin{aligned}
\lim_{k\rightarrow\infty}\mathrm{E}[\|\varphi^{\mathrm{T}}\tilde{x}(L_\mathrm{e},k)\|^2]=\lim_{k\rightarrow\infty}\varphi^{\mathrm{T}}\mathfrak{L}_k(P(0);A,B,Q)\varphi=\infty.
\end{aligned}\end{equation*}
The proof is completed. 
\end{proof}

\subsection{The proof of Lemma 3}
\begin{proof}
It follows from the orthogonal complement principle that $\mathrm{R}(\mathfrak{O}_n(C,A)^\mathrm{T})=\mathrm{N}(\mathfrak{O}_n(C, A))^\perp$, where $\perp$ denotes the orthogonal complement. Thus, $ \varphi\notin\mathrm{R}(\mathfrak{O}_n(C, A)^\mathrm{T})$ if and only if there exists $\beta \in \mathrm{N}(\mathfrak{O}_n(C, A))$ such that $ \varphi^\mathrm{T}\beta\neq0$. Meanwhile, the first condition is equivalent to $\beta \in \mathrm{N}(\mathfrak{O}_n(C, A))$. The proof is completed. 
\end{proof}

\subsection{The proof of Corollary 2}
\begin{proof}
If the parameters satisfy (16), it follows from Theorem 1 that $\hat{x}(L_\mathrm{u},k)=\hat{x}(I,k)$ and $\hat{P}(L_\mathrm{u},k)=\hat{P}(I,k)$ for all $k\geq1$. 

When $\vartheta_{ij}\geq\mathrm{rank}(D)$, the selection (16) gives $m_\mathrm{e}=0$. Since $\varphi_{\mathrm{u},i,j}\neq 0$ for some $i\in\{1,\cdots,r_\mathrm{u}\}$ and $j\in\{1,\cdots,\check{d}_i\}$, one has $\varphi_\mathrm{u}\neq0$. Under such cases, one can derive from Proposition 4 that 
\begin{equation*}\begin{aligned}
\lim_{k\to\infty}\mathrm{E}[\|\varphi^\mathrm{T}\tilde{x}(L_\mathrm{e},k)\|^2]=\lim_{k\to\infty}\varphi^\mathrm{T}\mathfrak{L}_k(P_0,A,B,Q)\varphi=\infty.
\end{aligned}\end{equation*}

%Without loss of generality, it is assumed that $[L_\mathrm{e}C]_{\mathbb{U}_1(j)}=0$. The other case can be proved similarly by a permutation similarity transformation. 
When $\vartheta_{ij}<\mathrm{rank}(D)$, similar to the proof of Corollary 1, one can verify that the selection (16) gives $[L_\mathrm{e}C]_{\mathbb{U}_{ij}}=0$. This also implies $[L_\mathrm{e}C_\mathrm{u}]_{\mathbb{U}_{ij}}=0$. Then, recall the partition structure (3), one has 
\begin{equation}\begin{aligned}
[L_\mathrm{e}C_\mathrm{u}A_\mathrm{u}^\ell]_{\mathbb{U}_{ij}}=0,\ \forall\ \ell=1,2,\cdots.
\end{aligned}\end{equation}
Choose the vector $\beta=[0;\cdots;0;\varphi_{\mathrm{u},i,j};0;\cdots;0]\in\mathbb{R}^{n_\mathrm{u}}$, where the sub-vector $\varphi_{\mathrm{u},i,j}$ is placed at the same position as in $\varphi_\mathrm{u}$. In this case, it can be verified from (36) that 
\begin{equation*}\begin{aligned}
&\varphi_\mathrm{u}^{\mathrm{T}}\beta=\|\varphi_{\mathrm{u},i,j}\|^2\neq 0,\\ &L_\mathrm{e}C_\mathrm{u}A_\mathrm{u}^\ell\beta=0,\ \forall\ \ell=1,2,\cdots.
\end{aligned}\end{equation*}
Therefore, by Lemma 3, we obtain $\varphi_\mathrm{u} \notin \operatorname{R}(\mathfrak{O}_{n_\mathrm{u}}(L_\mathrm{e}C_\mathrm{u},A_\mathrm{u})^{\mathrm{T}})$. Applying Theorem 3 finally yields
\[
\lim_{k\to\infty}\operatorname{E}[\|\varphi^\mathrm{T}\tilde{x}(L_\mathrm{e},k)\|^2] = \infty.
\]
The proof is completed. 
\end{proof}

\section*{Reference}
\bibliographystyle{ieeetr}
\bibliography{refs}

@article{AN2022110087,
title = {Enhancement of opacity for distributed state estimation in cyber–physical systems},
journal = {Automatica},
volume = {136},
pages = {110087},
year = {2022},
issn = {0005-1098},
doi = {https://doi.org/10.1016/j.automatica.2021.110087},
url = {https://www.sciencedirect.com/science/article/pii/S0005109821006166},
author = {Liwei An and Guang-Hong Yang}
}

@INPROCEEDINGS{Du8899554,
  author={Du, Lishuang and Zhang, Ya and Chen, Yangyang and Sun, Changyin},
  booktitle={2019 IEEE 15th International Conference on Control and Automation (ICCA)}, 
  title={A Probabilistic Scheme for Secure Estimation of Sensor Networks in the Presence of Packet Losses and Eavesdroppers}, 
  year={2019},
  volume={},
  number={},
  pages={190-195},
  doi={10.1109/ICCA.2019.8899554}}

@ARTICLE{Zhao9601251,
  author={Zhao, Bingya and Zhang, Ya and Ding, Zhengtao},
  journal={IEEE Transactions on Control of Network Systems}, 
  title={Probabilistic Transmission Scheme for Distributed Filtering Over Randomly Lossy Sensor Networks in the Presence of Eavesdropper}, 
  year={2022},
  volume={9},
  number={2},
  pages={800-810},
  doi={10.1109/TCNS.2021.3124887}}

@article{TSIAMIS20178385,
title = {State Estimation with Secrecy against Eavesdroppers},
journal = {IFAC-PapersOnLine},
volume = {50},
number = {1},
pages = {8385-8392},
year = {2017},
note = {20th IFAC World Congress},
issn = {2405-8963},
doi = {https://doi.org/10.1016/j.ifacol.2017.08.1563},
url = {https://www.sciencedirect.com/science/article/pii/S2405896317321559},
author = {Anastasios Tsiamis and Konstantinos Gatsis and George J. Pappas}
}

@ARTICLE{Leong8543618,
  author={Leong, Alex S. and Quevedo, Daniel E. and Dolz, Daniel and Dey, Subhrakanti},
  journal={IEEE Transactions on Automatic Control}, 
  title={Transmission Scheduling for Remote State Estimation Over Packet Dropping Links in the Presence of an Eavesdropper}, 
  year={2019},
  volume={64},
  number={9},
  pages={3732-3739},
  doi={10.1109/TAC.2018.2883246}}

@book{chen1984linear,
  title={Linear {S}ystem {T}heory and {D}esign},
  author={Chen, Chi-Tsong},
  year={1984},
  publisher={Saunders college publishing}
}

@book{kailath2000linear,
  title={Linear {E}stimation},
  author={Kailath, Thomas and Sayed, Ali H and Hassibi, Babak},
  year={2000},
  publisher={Prentice Hall}
}

@article{Anderson0319002,
author = {Anderson, B. D. O. and Moore, J. B.},
title = {Detectability and Stabilizability of Time-Varying Discrete-Time Linear Systems},
journal = {SIAM Journal on Control and Optimization},
volume = {19},
number = {1},
pages = {20-32},
year = {1981},
doi = {10.1137/0319002},
URL = {https://doi.org/10.1137/0319002},
eprint = {https://doi.org/10.1137/0319002}
}

@ARTICLE{Shang9882330,
  author={Shang, Jun and Chen, Tongwen},
  journal={IEEE Transactions on Automatic Control}, 
  title={Linear Encryption Against Eavesdropping on Remote State Estimation}, 
  year={2023},
  volume={68},
  number={7},
  pages={4413-4419},
  doi={10.1109/TAC.2022.3205548}}

@book{anderson2005optimal,
  title={Optimal {F}iltering},
  author={Anderson, Brian DO and Moore, John B},
  year={2005},
  publisher={Courier Corporation}
}

@ARTICLE{Yan10648957,
  author={Yan, Xinhao and Zhou, Guanzhong and Quevedo, Daniel E. and Murguia, Carlos and Chen, Bo and Huang, Hailong},
  journal={IEEE Transactions on Automation Science and Engineering}, 
  title={Privacy-Preserving State Estimation in the Presence of Eavesdroppers: A Survey}, 
  year={2024},
  volume={},
  number={},
  pages={1-18},
  doi={10.1109/TASE.2024.3440042}}

@ARTICLE{Le6606817,
  author={Le Ny, Jerome and Pappas, George J.},
  journal={IEEE Transactions on Automatic Control}, 
  title={Differentially Private Filtering}, 
  year={2014},
  volume={59},
  number={2},
  pages={341-354},
  doi={10.1109/TAC.2013.2283096}}

@ARTICLE{Degue9993779,
  author={Degue, Kwassi Holali and Le Ny, Jerome},
  journal={IEEE Transactions on Automatic Control}, 
  title={Differentially Private {K}alman Filtering With Signal Aggregation}, 
  year={2023},
  volume={68},
  number={10},
  pages={6240-6246},
  doi={10.1109/TAC.2022.3230735}}

@INPROCEEDINGS{Leong8550317,
  author={Leong, Alex S. and Redder, Adrian and Quevedo, Daniel E. and Dey, Subhrakanti},
  booktitle={2018 European Control Conference (ECC)}, 
  title={On the Use of Artificial Noise for Secure State Estimation in the Presence of Eavesdroppers}, 
  year={2018},
  volume={},
  number={},
  pages={325-330},
  doi={10.23919/ECC.2018.8550317}}

@ARTICLE{Tsiamis8758381,
  author={Tsiamis, Anastasios and Gatsis, Konstantinos and Pappas, George J.},
  journal={IEEE Transactions on Automatic Control}, 
  title={State-Secrecy Codes for Networked Linear Systems}, 
  year={2020},
  volume={65},
  number={5},
  pages={2001-2015},
  doi={10.1109/TAC.2019.2927459}}

@ARTICLE{Kennedy10491308,
  author={Kennedy, Justin M. and Ford, Jason J. and Quevedo, Daniel E. and Dressler, Falko},
  journal={IEEE Transactions on Automatic Control}, 
  title={Innovation-Based Remote State Estimation Secrecy With No Acknowledgments}, 
  year={2024},
  volume={69},
  number={11},
  pages={7433-7448},
  doi={10.1109/TAC.2024.3385315}}

@ARTICLE{Hassan8854247,
  author={Hassan, Muneeb Ul and Rehmani, Mubashir Husain and Chen, Jinjun},
  journal={IEEE Communications Surveys \& Tutorials}, 
  title={Differential Privacy Techniques for Cyber Physical Systems: A Survey}, 
  year={2020},
  volume={22},
  number={1},
  pages={746-789},
  doi={10.1109/COMST.2019.2944748}}

@ARTICLE{Lücke9762536,
  author={Lücke, Marvin and Lu, Jingyi and Quevedo, Daniel E.},
  journal={IEEE Transactions on Automatic Control}, 
  title={Coding for Secrecy in Remote State Estimation With an Adversary}, 
  year={2022},
  volume={67},
  number={9},
  pages={4955-4962},
  doi={10.1109/TAC.2022.3169839}}

@ARTICLE{Marelli10684094,
  author={Marelli, Damián and Sui, Tianju and Fu, Minyue and Cai, Qianqian},
  journal={IEEE Transactions on Automatic Control}, 
  title={Secrecy Codes for State Estimation of General Linear Systems}, 
  year={2025},
  volume={70},
  number={2},
  pages={1161-1168},
  doi={10.1109/TAC.2024.3463728}}

@article{CRIMSON2025111932,
title = {Remote state estimation with privacy against active eavesdroppers},
journal = {Automatica},
volume = {171},
pages = {111932},
year = {2025},
issn = {0005-1098},
doi = {https://doi.org/10.1016/j.automatica.2024.111932},
url = {https://www.sciencedirect.com/science/article/pii/S0005109824004266},
author = {Matthew J. Crimson and Justin M. Kennedy and Daniel E. Quevedo}
}

@article{YANG2020109116,
title = {An encoding mechanism for secrecy of remote state estimation},
journal = {Automatica},
volume = {120},
pages = {109116},
year = {2020},
issn = {0005-1098},
doi = {https://doi.org/10.1016/j.automatica.2020.109116},
url = {https://www.sciencedirect.com/science/article/pii/S0005109820303149},
author = {Wen Yang and Dengke Li and Heng Zhang and Yang Tang and Wei Xing Zheng}
}

@ARTICLE{Shim7172449,
  author={Shim, Kyung-Ah},
  journal={IEEE Communications Surveys  Tutorials}, 
  title={A Survey of Public-Key Cryptographic Primitives in Wireless Sensor Networks}, 
  year={2016},
  volume={18},
  number={1},
  pages={577-601},
  doi={10.1109/COMST.2015.2459691}}

@article{MISHRA2024101037,
title = {A survey on security and cryptographic perspective of Industrial-Internet-of-Things},
journal = {Internet of Things},
volume = {25},
pages = {101037},
year = {2024},
issn = {2542-6605},
doi = {https://doi.org/10.1016/j.iot.2023.101037},
url = {https://www.sciencedirect.com/science/article/pii/S2542660523003608},
author = {Nimish Mishra and SK {Hafizul Islam} and Sherali Zeadally}
}

@article{HUANG2021109537,
title = {Encryption scheduling for remote state estimation under an operation constraint},
journal = {Automatica},
volume = {127},
pages = {109537},
year = {2021},
issn = {0005-1098},
doi = {https://doi.org/10.1016/j.automatica.2021.109537},
url = {https://www.sciencedirect.com/science/article/pii/S0005109821000571},
author = {Lingying Huang and Kemi Ding and Alex S. Leong and Daniel E. Quevedo and Ling Shi},
}

@article{WANG2022110145,
title = {Transmission scheduling for privacy-optimal encryption against eavesdropping attacks on remote state estimation},
journal = {Automatica},
volume = {137},
pages = {110145},
year = {2022},
issn = {0005-1098},
doi = {https://doi.org/10.1016/j.automatica.2021.110145},
url = {https://www.sciencedirect.com/science/article/pii/S0005109821006749},
author = {Le Wang and Xianghui Cao and Heng Zhang and Changyin Sun and Wei Xing Zheng},
}

@ARTICLE{Tao10621055,
  author={Tao, Fei and Ye, Dan},
  journal={IEEE Transactions on Industrial Informatics}, 
  title={Optimal Encryption Scheduling Policy Against Eavesdropping Attacks in Cyber-Physical Systems}, 
  year={2024},
  volume={20},
  number={11},
  pages={13147-13157},
  doi={10.1109/TII.2024.3431096}}

@ARTICLE{Zou10782997,
  author={Zou, Lei and Wang, Zidong and Shen, Bo and Dong, Hongli},
  journal={IEEE Transactions on Automatic Control}, 
  title={Secure Recursive State Estimation of Networked Systems Against Eavesdropping: A Partial-Encryption-Decryption Method}, 
  year={2024},
  volume={},
  number={},
  pages={1-14},
  doi={10.1109/TAC.2024.3512413}}

@book{horn2012matrix,
  title={Matrix {A}nalysis},
  author={Horn, Roger A and Johnson, Charles R},
  year={2012},
  publisher={Cambridge university press}}

@ARTICLE{Wang9678137,
  author={Wang, Yamin and Lam, James and Lin, Hong},
  journal={IEEE Transactions on Cybernetics}, 
  title={Consensus of Linear Multivariable Discrete-Time Multiagent Systems: Differential Privacy Perspective}, 
  year={2022},
  volume={52},
  number={12},
  pages={13915-13926},
  doi={10.1109/TCYB.2021.3135933}}
\end{document}